\numberwithin{equation}{section}
\newtheorem{theorem}{Theorem}
\newtheorem{lemma}{Lemma}
\newtheorem{remark}{Remark}
\newtheorem{definition}{Definition}
\begin{document}

\title{\Large\bf The Dantzig selector: Recovery of  Signal via $\ell_1-\alpha \ell_2$ Minimization
\footnotetext{\hspace{-0.35cm}
\endgraf $^\ast$\,Corresponding author.
\endgraf {1. H. ~Ge is with Sports Engineering College, Beijing Sport University,
Beijing 100084, China (E-mail: gehuanmin@163.com)}
\endgraf {2. P. Li is with  School of Mathematics and Statistics, Gansu Key Laboratory of Applied Mathematics and Complex
Systems, Lanzhou University, Lanzhou 730000, China  (E-mail:lp@lzu.edu.cn)}
}}
\author{Huanmin~Ge$^{1}$ and Peng Li $^{2}$$^{*}$ }
\date{}

\maketitle

\begin{abstract}
In the  paper, we proposed the Dantzig selector based on the $\ell_{1}-\alpha \ell_{2}$ $(0< \alpha \leq1)$ minimization
for the   signal recovery. In the Dantzig selector, the constraint $\|{\bm  A}^{\top}({\bm  b}-{\bm  A}{\bm  x})\|_\infty \leq \eta$ for some small constant $\eta>0$ means the columns of ${\bm  A}$ has very weakly  correlated  with  the error vector ${\bm e}={\bm  A}{\bm  x}-{\bm  b}$. First,   recovery guarantees based on the restricted isometry property (RIP) are established for  signals. Next, we propose the  effective algorithm to solve the proposed  Dantzig selector.
Last, we illustrate the proposed   model and algorithm by extensive numerical experiments  for the recovery of  signals  in the  cases of  Gaussian, impulsive and uniform noise. And the performance of the proposed  Dantzig selector is better than that of the existing methods.

\end{abstract}

\textbf{Key Words and Phrases.}
Dantzig selector, $\ell_1-\alpha\ell_2$ minimization, Sparse signal recovery,  Restricted isometry property.

\textbf{MSC 2020}. {62G05, 94A12, 65K05, 90C26}

\section{Introduction}\label{s1}

\subsection{Signal Recovery}\label{s1.1}
We consider the linear regression model
\begin{equation}\label{systemequationsnoise}
{\bm b}={\bm  A}{\bm  x}+{\bm  e},
\end{equation}
where ${\bm  b}\in\mathbb{R}^m$ are available measurements, the matrix ${\bm  A}\in\mathbb{R}^{m\times n}~(m\ll n)$ models the linear measurement process,
${\bm  x}\in \mathbb{R}^n$ is unknown signal  and ${\bm  e}\in\mathbb{R}^m$ is a vector of measurement errors. To reconstruct  ${\bm  x}$, the most intuitive approach is to find the sparsest signal in the  set of feasible solutions, that is, one solves 
the $\ell_0$ minimization problem: 
\begin{equation}\label{VectorL0}
\min_{{\bm  x}\in\mathbb{R}^n}\|{\bm  x}\|_0~~\text{subject~ to}~~{\bm  b}-{\bm  A}{\bm  x}\in\mathcal{B},
\end{equation}
where $\|{\bm  x}\|_0$ (it usually is called the $\ell_0$ norm of ${\bm  x}$, but is not
a norm) denotes  the number of nonzero coordinates of ${\bm  x}$, and $\mathcal{B}$ is a bounded set determined by the error structure. However, 
this problem \eqref{VectorL0}
is NP-hard and thus computationally infeasible in high dimensional background.

The underdetermined problem (\ref{systemequationsnoise}) puts forward
both theoretical and computational challenges at the interface of statistics and
optimization (see, e.g., \cite{Donoho2005Stable,2006High,2006On}). In the linear regression model,  the so-called Dantzig selector \cite{candes2007dantzig} was proposed to perform variable selection and model fitting. Its formulation model is
\begin{equation}\label{VectorL1-DS}
\min_{{\bm  x}\in\mathbb{R}^n}~\|{\bm  x}\|_{1}~~\text{subject~ to}~ \|{\bm  A}^{\top}({\bm  b}-{\bm  A}{\bm  x})\|_\infty\leq\eta
\end{equation}
where $\eta\geq 0$ is a tuning or penalty parameter.
In \cite{candes2007dantzig}, the performance of  Dantzig selector was analyzed theoretically by deriving
sharp nonasymptotic bounds on the error of estimated coefficients in the $\ell_2$ norm.

In Dantzig selector, the constraint $\|{\bm  A}^{\top}({\bm  b}-{\bm  A}{\bm  x})\|_\infty \leq \eta$
implies  that the correlation between
the residual vector ${\bm  e}={\bm  A}{\bm  x}-{\bm  b}$ and the columns of ${\bm  A}$ is small
for the small penalty parameter $\eta$.
Moreover, the constraint can be viewed  as a
data fitting term and it does not force the residual ${\bm  e}={\bm  A}{\bm  x}-{\bm  b}$ like the $\ell_2$-bounded Gaussian noise.
The Dantzig selector has a wide range of potential applications, especially in statistics.
In Fig \ref{figure.different-noises} and Table $1$,  we present a graphical illustration for  Gaussian, impulsive and uniform noises.
And  we show their distributions and  probability density functions (PDF)  as following.
\begin{enumerate}
\item [(1)]Gaussian Distribution: ${\bm e}\sim \mathcal{N}(0,\sigma^2)\in\mathbb{R}^{m\times 1}$. The noise is usually modeled by the $\ell_2$ norm, i.e.,  $\|{\bm  {Ax}}-{\bm b}\|_{2}$
    with $\bm{b}=\bm{Ax}+\bm{e}$. The probability density function $p$ of  $\bm e$ is
\begin{equation*}
p(\bm e)=
\frac{1}{\sigma{\sqrt{2\pi }}} \exp^{-{\frac{\bm{e}^{2}}{2\sigma^{2}}}}
\end{equation*}
where  $\sigma$  is standard deviation.
\item [(2)] Distribution of impulsive noise: the  distribution  is  Symmetric $\tilde{\alpha}$-stable (S$\tilde{\alpha}$ S ) distribution, which has been  used to model impulsive noise in \cite{wang2020group,wang2013l1,wen2016robust,wen2017efficient}.
The noise is usually modeled by the $\ell_1$ norm, i.e., $\|{\bm  A\bm{x}}-{\bm b}\|_{1}$ with $\bm{b}=\bm{Ax}+\bm{e}$.  Although one cannot analytically present the probability density function for
a general stable distribution, 
its characteristic function of a zero-location
S$\tilde{\alpha}$S distribution can be expressed as
$$
\phi(\omega)=\exp(i\tilde{\delta}\omega-\gamma^{\tilde{\alpha}}|\omega|^{\tilde{\alpha}}),
$$
where $\tilde{\delta}\in(-\infty,\infty)$ is the location parameter, $\gamma\in(0,\infty)$ is the scale parameter, and $\tilde{\alpha}$ is the characteristic exponent measuring the thickness of the distributional tail with $\tilde{\alpha}\in(0, 2]$. If the value of $\tilde{\alpha}$ is smaller, then the tail of the $S\tilde{\alpha} S$ distribution is thicker and consequently the noise is more impulsive.

\item [(3)]Uniform Distribution: ${\bm e}\sim  \mathcal{U}(-\varsigma,\varsigma)\in\mathbb{R}^{m\times 1}$. Its probability density function is
\begin{equation*}
p(\bm e)=
\begin{cases}
\frac{1}{(2\varsigma)^{m}},&\text{if~} -\varsigma\leq e_j\leq \varsigma,\\
0,&\text{otherwise}.
\end{cases}
\end{equation*}
The noise is usually modeled by the $\ell_{\infty}$ norm, i.e., $\|{\bm  {Ax}}-{\bm b}\|_{\infty}$ with $\bm{b}=\bm{Ax}+\bm{e}$.
The $\ell_{\infty}$ minimization problem arises in curve fitting \cite{williams1993least}, optimal control of partial differential equations \cite{clason2011minimal}, image
compression \cite{bredies2012total,wu2000constrained,zhou2012Restoration}. More results about the uniform noise, see \cite{clason2012fitting,wen2018semi,zhang2019fast}.
\end{enumerate}
In Table $1$, we  display the average of $\|{\bm  A}^{T}{\bm  e}\|_\infty$ over 1000 repeated tests, where ${\bm e}$ is a noisy  vector and $\bm A$ is  measurement matrix. Here , let $\bm{A}$ be
Gaussian matrix  or the  oversampled partial DCT matrix.
\begin{enumerate}
\item[(1)]The  random Gaussian matrix $\bm {A}\in\mathbb{R}^{m\times n}$
satisfies $\bm {A}_i\sim \mathcal{N}(0, \frac{1}{m}),~i=1,\ldots,n$. The random Gaussian matrix is  of particular interest in
the practical and theoretical  research. It
has been  a very active area of recent research in  signal processing \cite{das2012snr,suliman2017snr}
and image processing \cite{guerrero2007image,lou2015weighted}. The  random matrix $\bm{A}$
has small coherence and RIP constants  (see Definition \eqref{def:LowerUpperlpRIP}) with high probability \cite{candes2005decoding,chartrand2008restricted}.
The coherence of a matrix ${\bm  A}$ in \cite{donoho2001uncertainty} is the maximum absolute value of
the cross-correlations between the columns of ${\bm  A}$, namely,
$$
\mu({\bm  A}):=\max_{i\neq j}\frac{|\langle {\bm  A}_i,{\bm  A}_j\rangle|}{\|{\bm  A}_i\|_2\|{\bm  A}_j\|_2}.
$$

\item[(2)]The randomly oversampled partial DCT matrix  $\bm {A}\in\mathbb{R}^{m\times n}$ satisfies
$$\bm {A}_i=\frac{\cos(\frac{2\pi\xi}{ F})}{\sqrt{m}},\ \ i=1,\ldots,n,$$
where $\xi\in\mathbb{R}^m\sim\mathcal{U}([0,1]^m)$ which means $\xi $  uniformly and independently   distributes in $[0,1]^m$, and $F\in\mathbb{N}$ is the refinement factor.  Actually,  it is the real part of the random partial Fourier matrix analyzed in \cite{fannjiang2012coherence}. The number $F$ is closely related to the conditioning of $\bm {A}$ in the sense that $\mu(\bm {A})$ tends to get larger as $F$ increases. For example, for $\bm {A}\in\mathbb{R}^{32\times 640}$, $\mu(\bm {A})\approx 0.97$ when $F =5$, and $\mu(\bm {A})$ easily exceeds 0.99 when $F =10$.
The over-sampled DCT matrices  are derived from the problem of
spectral estimation \cite{fannjiang2012coherence,liao2016music} in signal processing, and  radar imagining \cite{fannjiang2012coherence,fannjiang2013compressive} and  surface scattering \cite{fannjiang2012compressive} in  image processing.
\end{enumerate}

\begin{figure}[t] 
\begin{tabular}{ccc}
\includegraphics[width=3.5cm]{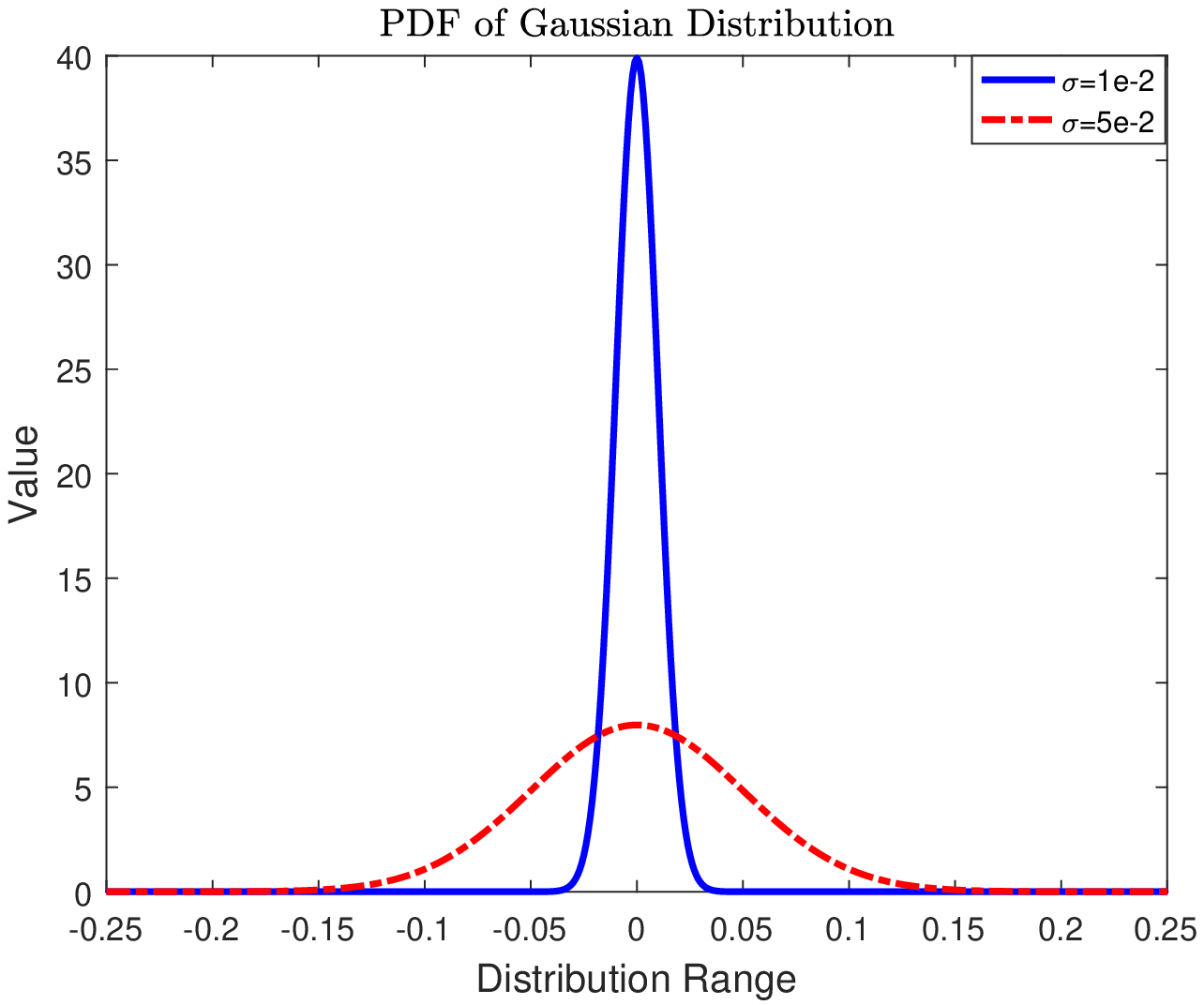}&
\includegraphics[width=3.5cm]{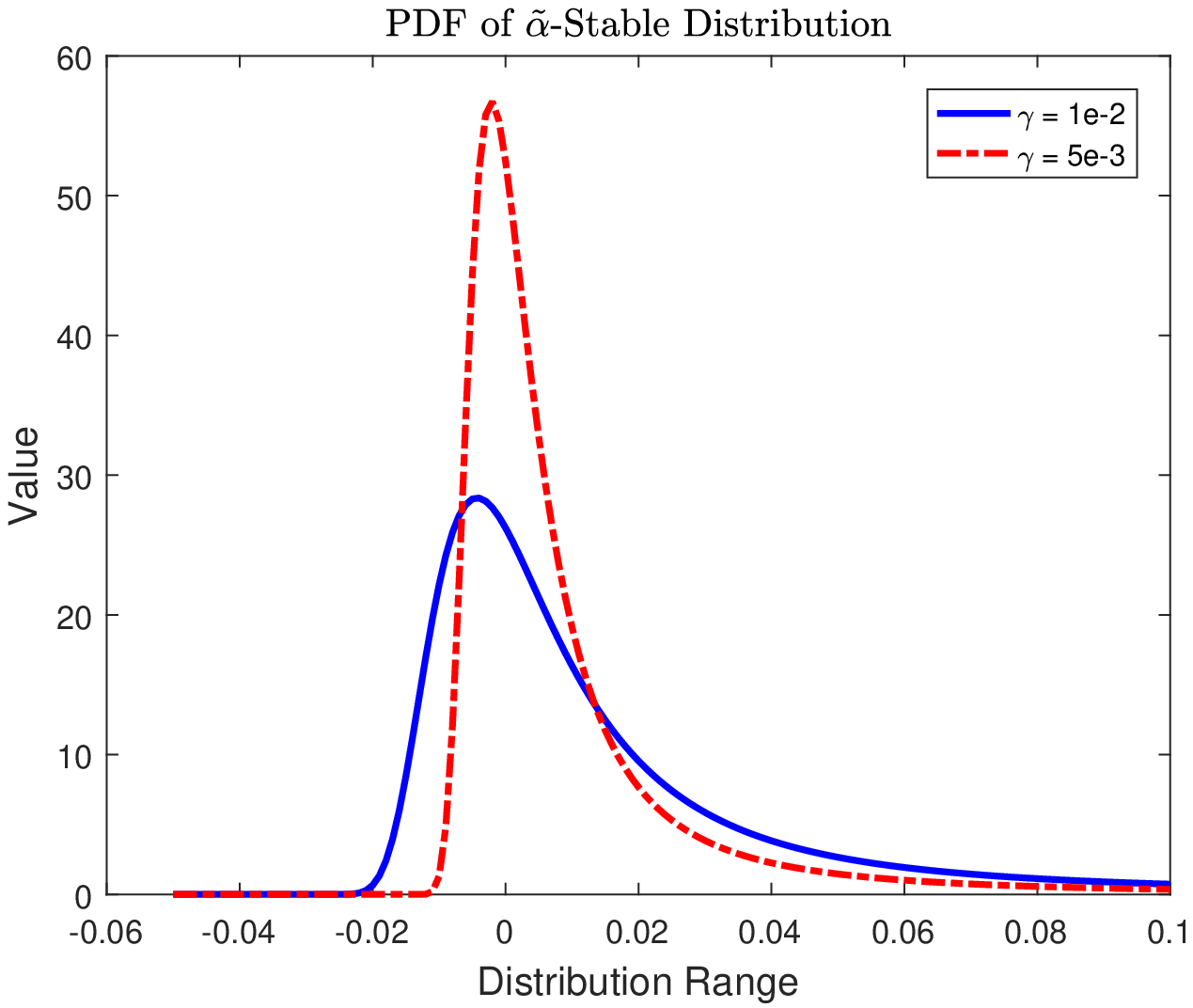}&
\includegraphics[width=3.5cm]{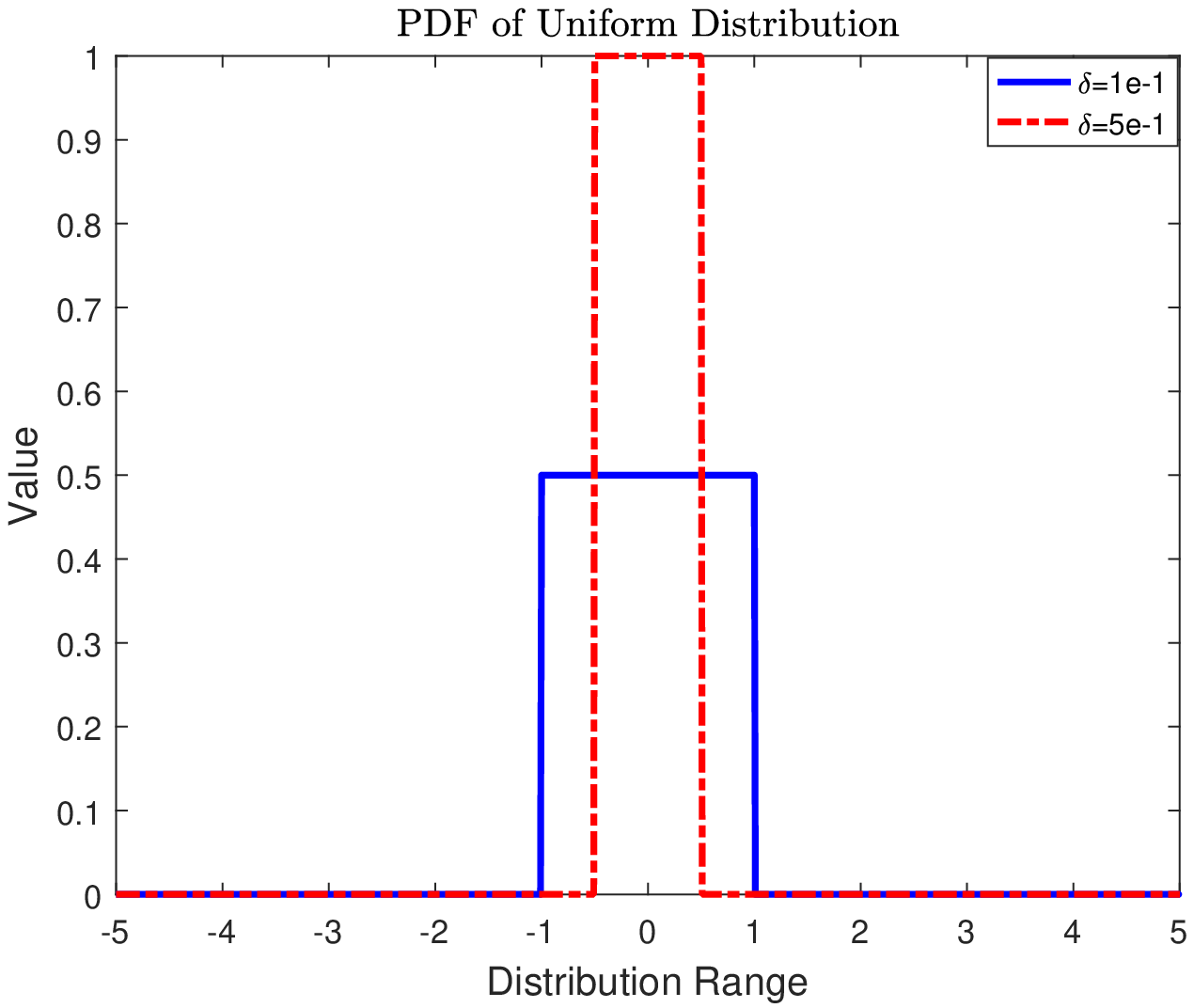}\\
\end{tabular}
\centering
\caption{\rm  Probability density function (PDF) for   Gaussian, impulsive and uniform noises.
Left: Gaussian noise with $\sigma=10^{-2},5\times 10^{-2}$;
 Middle: S$\tilde{\alpha}$S type impulsive noise with $\tilde{\alpha}=1,\tilde{\delta}=0$, $\gamma=10^{-2},5\times 10^{-3}$;
  Right: Uniform noise with $\varsigma=10^{-1},5\times 10^{-1}$.}
\label{figure.different-noises}
\end{figure}

\begin{table}[t] 
\setlength{\tabcolsep}{9pt}
	\caption{\rm The average  of  $\|{\bm  A}^{T}{\bm  e}\|_\infty$ over $10^4$ repeated tests. Let  ${\bm  e}={\bm  A}{\bm  x}-{\bm  b}$, $\bm{A}$ be Gaussian matrix with $n=64,m=256$ and the oversampled DCT matrix  with $n=64,m=256, F=10$. Here we take Gaussian noise with noise levels $\sigma=10^{-2},5\times 10^{-2}$; $S\tilde{\alpha}S$ type implusise noise with $\tilde{\alpha}=1, \tilde{\delta}=0$ and $\gamma=5\times 10^{-3},10^{-2}$;  uniform noise with noise levels $\varsigma=10^{-1},5\times 10^{-1}$.}\label{tab:inftynorm-residual}
	\begin{center}
		\begin{tabular}{c |c | c c  }\hline
Noise Type &Parameter
			       &{\rm Gaussian Matrix}  &{\rm Oversampled DCT}   \\  \hline
{\rm Gaussian}      &$\sigma$=1e-2      &0.0295   &0.0270       \\
                  &$\sigma$=5e-2       &0.1476   &0.1350        \\ \hline
{\rm Impulsive}   &$\gamma$=5e-3   &0.6936   &0.3610        \\
             &$\gamma$=1e-2     &1.2604   &1.397 \\ \hline
{\rm Uniform}      &$\varsigma$=1e-1      &0.1710    &0.1571       \\
          &$\varsigma$=5e-1    &0.8563   &0.7850            \\ \hline
			\specialrule{0.0em}{2.0pt}{2.0pt}
		\end{tabular}
	\end{center}
\end{table}

Table \ref{tab:inftynorm-residual} shows $\|{\bm A}^{T}({\bm  {Ax}}-{\bm b})\|_{\infty}$,  which measures  the correlation between
the noisy vector ${\bm e}={\bm  {Ax}}-{\bm b}$ and the columns of $\bm {A}$, where $\bm{e}$ is  Gaussian, impulsive and uniform noise.
It  works efficiently for the three  type noises (see to Section \ref{s5}), which are different from that of the $\ell_2$ norm $\|{\bm  {Ax}}-{\bm b}\|_{2}$, the $\ell_1$ norm $\|{\bm  {Ax}}-{\bm b}\|_{1}$ and the $\ell_{\infty}$ norm $\|{\bm  {Ax}}-{\bm b}\|_{\infty}$.  These norms only work efficiently for their corresponding noises,.i.e., the $\ell_2$ norm $\|{\bm  {Ax}}-{\bm b}\|_{2}$ is only valid for Gaussian noise,  the $\ell_1$ norm $\|{\bm  {Ax}}-{\bm b}\|_{1}$ is only valid for impulsive noise,
and the $\ell_{\infty}$ norm $\|{\bm  {Ax}}-{\bm b}\|_{\infty}$ is only valid for bounded noise.

 Authors in  \cite{1996Regression} proposed the Least Absolute Shrinkage and Selector Operator (Lasso) as follows
\begin{equation}\label{VectorL1-Lasso}
\min_{{\bm  x}\in\mathbb{R}^n}~\lambda\|{\bm  x}\|_1+\frac{1}{2}\|{\bm  A}{\bm  x}-{\bm  b}\|_2^2,
\end{equation}
where $\lambda> 0$ is a parameter to balance the data fidelity term $\frac{1}{2}\|{\bm  A}{\bm  x}-{\bm  b}\|_2^2$ and the objective function $\|{\bm  x}\|_1$.
In some sense, Lasso estimator and Dantzig selector exhibit similar behavior. Essentially, the Dantzig selector model
\eqref{VectorL1-DS} is a linear program while the Lasso model \eqref{VectorL1-Lasso} is a quadratic program.
  For an extensive study on the relation between the
Dantzig selector and Lasso, we refer to a series of discussion papers which have been
published in ``The Annals of Statistics", e.g., \cite{Bickel2007Discussion,cai2007discussion,Cand2007Rejoinder,efron2007discussion,friedlander2007discussion,meinshausen2007discussion,2008Discussion}.
Readers also can refer to \cite[Chapter 8]{elad2010sparse}. 

Many effective algorithms have been developed to  solve Dantzig selector.
For example,  Cand\`{e}s et.al. in \cite{candes2005magic}  apply   primal-dual algorithm ,
 Wang et.al. in \cite{wang2012linearized} use  Linear ADMM, and  Lu et.al. in \cite{lu2012alternating} solve Dantzig selector  on account of   ADMM.   Chatterjee et.al. in \cite{chatterjee2014generalized} also proposed a Generalized Dantzig Selector (GDS) and solved it by ADMM.

\subsection{Contributions}\label{s1.2}

In this paper, we  introduce the following $\ell_{1}-\alpha\ell_{2}$ minimization problem:
\begin{equation}\label{VectorL1-alphaL2-DS}
\min_{{\bm  x}\in\mathbb{R}^n}~\|{\bm  x}\|_{1}-\alpha\|{\bm  x}\|_{2} ~~\text{subject~ to}~ \|{\bm  A}^{\top}({\bm  b}-{\bm  A}{\bm  x})\|_\infty\leq\eta
\end{equation}
for some constant $\eta\geq0$. Denote \eqref{VectorL1-alphaL2-DS} as $\ell_{1}-\alpha\ell_{2}$-DS.
When $\eta=0$, note that
\eqref{VectorL1-alphaL2-DS} could not  reduce to
\begin{equation}\label{VectorL1-alphaL2-Exact}
\min_{{\bm  x}\in\mathbb{R}^n}~\|{\bm  x}\|_{1}-\alpha\|{\bm  x}\|_{2} ~~\text{subject~ to}~ {\bm  A}{\bm  x}={\bm  b},
\end{equation}
which  means that the $\ell_{1}-\alpha\ell_{2}$-DS  \eqref{VectorL1-alphaL2-DS} is different from
the minimization problems such as the $\ell_0$ minimization \eqref{VectorL0}  with the  constraint term  $\|{\bm  A}{\bm  x}-{\bm  b}\|_p\leq \eta$ for $0<p\leq 2$.
In fact, $\eta=0$ means that  ${\bm A}$ is orthogonal with ${\bm  A}{\bm  x}-{\bm  b}$, i.e.,
 $ \bm A^{\top}(\bm{b}-{\bm  A}{\bm  x})=\bm{0}$.

Similarly, the nonconvex  $\ell_{1}-\alpha \ell_{2}$ ($0<\alpha\leq 1$) minimization method was introduced in \cite{liu2017further,lou2018fast} to   recover  $\bm{x}\in\mathbb{R}^n$
\begin{equation}\label{VectorL1-alphaL2}
\min_{\bm{x}\in\mathbb{R}^n}~\|\bm{x}\|_{1}-\alpha\|\bm{x}\|_{2} \quad \text{subject \ to} \quad \bm{b}-\bm{A}\bm{x}\in\mathcal{B}.
\end{equation}
Clearly,  the method  \eqref{VectorL1-alphaL2} with  $\alpha=1$ reduces to the $\ell_{1-2}$ minimization method  \cite{lou2015computing,yin2015minimization}.
Specifically,
Lou et al. \cite{lou2015computing} and Yin et al. \cite{yin2015minimization} respectively  studied  the $\ell_{1-2}$ minimization  under
$\mathcal{B}=\{\bm{0}\}$ and
 $\mathcal{B}=\{\bm{e}: \,\|\bm{e}\|_2\leq\eta\}$, respectively. They obtained sufficient conditions  based on  RIP
for the recovery of $\bm{x}$
 from \eqref{systemequationsnoise} via  the $\ell_{1-2}$ minimization method.
 To solve \eqref{VectorL1-alphaL2},
they  proposed  the unconstrained problem:
\begin{equation}\label{VectorL1-L2-Lasso}
\min_{\bm{x}\in\mathbb{R}^n}~\lambda(\|\bm{x}\|_{1}-\|\bm{x}\|_{2})+\frac{1}{2}\|\bm{A}\bm{x}-\bm{b}\|_2^2,
\end{equation}
and an effective algorithm based on   the different of convex algorithm (DCA) to solve \eqref{VectorL1-L2-Lasso}.
Several numerical examples in \cite{lou2015computing,yin2015minimization} have demonstrated
that the $\ell_{1}-\ell_{2}$ minimization consistently outperforms the $\ell_1$ minimization and the $\ell_p$ minimization in \cite{lai2013improved}
when the
measurement matrix $\bm{A}$ is highly coherent.
In addition, the metric $\ell_{1}-\ell_{2}$ has shown advantages in various applications such as signal processing \cite{geng2020Unconstrained,li2020minimization,wen2019sparse}, point source super-resolution \cite{lou2016point},
image restoration \cite{ge2021new,li2020minimization, lou2015weighted}, matrix completion \cite{ma2017truncated},  uncertainty quantification \cite{hu2021minimization,yan2017sparse} and  phase retrieval \cite{xia2020sparse,yin2015phaseliftoff}.

In this paper, the main  contributions are as followings:
\begin{enumerate}
\item[(i)] We show sufficient conditions under RIP frame
for the recovery of the signal ${\bm  x}$ from \eqref{systemequationsnoise} via the  $\ell_{1}-\alpha\ell_{2}$-DS \eqref{VectorL1-alphaL2-DS}.
\item[(ii)] We propose  an unconstraint penalty problem
 \eqref{VectorL1-alphaL2-DS-penalty}  to solve the $\ell_{1}-\alpha\ell_{2}$-DS \eqref{VectorL1-alphaL2-DS},
 and develop an effective algorithm based on ADMM to solve the proposed  unconstraint penalty problem.
\item[(iii)]  We present  numerical experiments  for  the recover signal in the cases of  Gaussian, impulsive and uniform noises
 to illustrate the performance of the $\ell_1-\alpha\ell_2$DS. As far as we know, this is the first paper which explore the performances of Dantzig selector under different noises.
\end{enumerate}

\subsection{Organization and Notations}\label{s1.3}

The rest of the paper is organized as follows. We recall some definitions and lemmas in Section
\ref{adds2}.
In Section \ref{s2}, the theoretical results based on $(\ell_2,\ell_1)$-RIP frame
are  showed  for the signal recovery   via the $\ell_1-\alpha\ell_2$ minimization \eqref{VectorL1-alphaL2-DS}.
We show sufficient conditions for the stable recovery of signal under $(\ell_2,\ell_2)$-RIP (i.e., classical RIP) frame via the $\ell_1-\alpha\ell_2$ minimization \eqref{VectorL1-alphaL2-DS} in Section \ref{s3}.
Effective algorithms to solve \eqref{VectorL1-alphaL2-DS}
is developed in Section \ref{s4}. In Section \ref{s5}, numerical results for
sparse signals is given. Section \ref{s6} presents a conclusion.

Throughout the paper, we use the following basic notations. Denote the positive integer set by $\mathbb{Z}_+$. Let $|S|$ be the
number of entries in the set $S$. Let $\lceil t \lceil$ be the nearest integer greater than or equal to t.  For any positive integer $n$, let $[[1,n]]$ be the set $\{1,\ldots,n\}$. For ${\bm  x}\in\mathbb{R}^n$,  denote
${\bm  x}_{\max(s)}$ as the vector ${\bm  x}$ with all but the largest $s$ entries in absolute value set to zero, and ${\bm  x}_{-\max(s)}={\bm  x}-{\bm  x}_{\max(s)}$. Let ${\bm  x}_S$ be the vector equal to ${\bm  x}$ on $S$ and to zero on $S^c$. Let $\|{\bm  x}\|_{\alpha,1-2}$ be $\|{\bm  x}\|_1-\alpha\|{\bm  x}\|_2$. Especially,
when $\alpha=1$, denote $\|{\bm  x}\|_{\alpha,1-2}$ with $\|{\bm  x}\|_{1-2}$.
 And we denote $n\times n$ identity matrix by ${\bm  I}_n$ and zeros matrix by ${\bm  O}$. And we denote the transpose of matrix ${\bm  A}$ by ${\bm  A}^{\top}$. Use the phrase ``$s$-sparse vector" to refer to vectors of sparsity at most $s$. We use boldfaced letter denote matrix or vector.
 The ${\bm A}\succeq {\bm O}$ (resp., ${\bm A}\succ {\bm O}$) represents that
the  matrix ${\bm A}$ is  positive semidefinite (resp. positive definite) and denote the set of all positive semidefinite (resp. positive definite) matrices of size $n$ by ${\bm S}^{n}_+$ (resp. ${\bm S}_{++}^n$). Given ${\bm A}\succeq {\bm O}$ of size $n$, we
define  $\langle {\bm u}, {\bm v}\rangle_{\bm A}:={\bm u}^{T}{\bm A}{\bm v}$ and $\|{\bm u}\|_{{\bm A}}:=\sqrt{\langle {\bm u}, {\bm u}\rangle_{\bm A}}$ for vectors ${\bm u},{\bm v}\in\mathbb{R}^n$. For a positive definite matrix ${\bm A}$,
$\langle \cdot, \cdot\rangle_{\bm A}$ and $\|\cdot\|_{\bm A}$ define an inner product and norm on $\mathbb {R}^n$ respectively, which
become the standard inner product $\langle \cdot,  \cdot\rangle$ and Euclidean norm $\|\cdot\|_2$  respectively
when ${\bm A}$ is the identity matrix ${\bm I}$.

\section{Preliminaries}\label{adds2}

In this section, we recall some significant definitions  and lemmas in order to characterize the recovery guarantees of
the  $\ell_{1}-\alpha\ell_{2}$-DS \eqref{VectorL1-alphaL2-DS} for the signal  $\bm{x}$ recovery.
The following definition of restricted $(\ell_2,\ell_p)$-isometry property is
 introduced in \cite{li2020minimization}.
\begin{definition}\label{def:LowerUpperlpRIP}
For $0<p\leq 1$ or $p=2$, $s\in\mathbb{Z}_+$, we define the restricted $\ell_2/\ell_p$ isometry constant pair $(\delta_s^{lb},\delta_{s}^{ub})$ of order $s$ with respect to the measurement matrix ${\bm  A}\in\mathbb{R}^{m\times n}$ as the smallest numbers $\delta_s^{lb}$ and $\delta_{s}^{ub}$ such that
\begin{equation}\label{Vectorlq-RUB}
(1-\delta_s^{lb})\|{\bm  x}\|_2^p\leq\|{\bm  {Ax}}\|_p^p\leq(1+\delta_s^{ub})\|{\bm  x}\|_2^p,
\end{equation}
holds for all $s$-sparse signals ${\bm  x}$.
We say that ${\bm  A}$ satisfies the $(\ell_2,\ell_p)$-RIP  if $\delta_{s}^{lb}$ and $\delta_{s}^{ub}$ are small for reasonably large $s$.
\end{definition}

\begin{remark}\label{lqRIP-Remark}
When $\delta_s^{lb}=\delta_s^{ub}=\delta_s$ and $p=1$,  the $(\ell_2,\ell_p)$-RIP in Definition \ref{def:LowerUpperlpRIP} reduces to  the definition of
 the $\ell_1$-RIP \cite{chartrand2008restricted}
\begin{align}\label{Lq-RIP}
(1-\delta_s)\|{\bm  x}\|_2\leq\|{\bm  A}{\bm  x}\|_1\leq(1+\delta_s)\|{\bm  x}\|_2.
\end{align}
In addition, Gaussian matrix satisfies  the $(\ell_2,\ell_1)$-RIP  with high probability.
\end{remark}

When $\delta_s^{lb}=\delta_s^{ub}=\delta_s$ and $p=2$,  the $(\ell_2,\ell_p)$-RIP in Definition \ref{def:LowerUpperlpRIP}
is  the classic RIP in \cite{candes2005decoding,candes2006stable}.

\begin{definition}
The matrix ${\bm  A}\in\mathbb{R}^{m\times n}$ satisfies  the  $(\ell_2,\ell_2)$-RIP of order $s$  with  constant
$\delta_{s}\in [0,1)$ if
\begin{equation}\label{def:RIP}
(1-\delta_s)\|{\bm  x}\|_2^2\leq \|{\bm  A}{\bm  x}\|_2^2\leq (1+\delta_s)\|{\bm  x}\|_2^2
\end{equation}
holds for all $s$-sparse vectors ${\bm  x}\in \mathbb{R}^n$,  i.e., $\|{\bm  x}\|_0\leq s$, where $s$ is an integer. The smallest constant $\delta_k$ is called as the
the restricted isometry constant (RIC).
When $s$ is not an integer, we define $\delta_s$
as $\delta_{\lceil s\rceil}$.
\end{definition}
Here, we show a lemma
from the proof of \cite[Theorem 3.3]{yan2017sparse}, which is a modified cone
constraint inequality for $\ell_{1}-\alpha\ell_{2}$.

\begin{lemma}\label{ConeconstraintinequalityforL1-L2}
For any vectors $\bm {x}, \hat{\bm {x}}\in \mathbb{R}^n$, let $\bm {h}=\hat{\bm {x}}-\bm {x}$.
Assume that $\|\hat{\bm {x}}\|_{\alpha,1-2}\leq\|\bm {x}\|_{\alpha,1-2}$. Then
\begin{align}\label{e:h-maxsupperbound1}
&\|{\bm  h}_{-\max(s)}\|_1\leq\|{\bm  h}_{\max(s)}\|_1+2\|{\bm  x}_{-\max(s)}\|_1+\alpha\|{\bm  h}\|_2,\\
&\|{\bm  h}_{-\max(s)}\|_1-\alpha\|{\bm  h}_{-\max(s)}\|_2\leq\|{\bm  h}_{\max(s)}\|_1+2\|{\bm  x}_{-\max(s)}\|_1\nonumber\\
&\ \ \ \ \ \ \ \ \ \ \ \ \ \ \ \ \ \ \ \ \ \ \ \ \ \ \ \ \ \ \ \ \ \ \ \  +\alpha\|{\bm  h}_{\max(s)}\|_2.
\label{e:h-maxsupperbound2}
\end{align}
Especially, when ${\bm  x}$ is $s$-sparse, one has
\begin{align}\label{e:h-maxsupperboundnoiseless1}
\|{\bm  h}_{-\max(s)}\|_1&\leq\|{\bm  h}_{\max(s)}\|_1+\alpha\|{\bm  h}\|_2,\\
\|{\bm  h}_{-\max(s)}\|_1-\alpha\|{\bm  h}_{-\max(s)}\|_2&\leq\|{\bm  h}_{\max(s)}\|_1
+\alpha\|{\bm  h}_{\max(s)}\|_2.
\label{e:h-maxsupperboundnoiseless2}
\end{align}
\end{lemma}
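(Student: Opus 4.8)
The plan is to reduce both displayed estimates to a single ``master'' cone inequality and then peel off the $\ell_2$ contributions. First I would fix $S$ to be the support of the $s$ largest-in-magnitude entries of $\bm{x}$, so that $\bm{x}_S=\bm{x}_{\max(s)}$ and $\bm{x}_{S^c}=\bm{x}_{-\max(s)}$. Starting from the hypothesis $\|\hat{\bm{x}}\|_{\alpha,1-2}\leq\|\bm{x}\|_{\alpha,1-2}$, I would rewrite it as $\|\hat{\bm{x}}\|_1-\|\bm{x}\|_1\leq\alpha\bigl(\|\hat{\bm{x}}\|_2-\|\bm{x}\|_2\bigr)$ and bound the right-hand side by the ordinary triangle inequality applied to $\hat{\bm{x}}=\bm{x}+\bm{h}$, which gives $\|\hat{\bm{x}}\|_2-\|\bm{x}\|_2\leq\|\bm{h}\|_2$.

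For the left-hand side I would split $\|\hat{\bm{x}}\|_1$ over $S$ and $S^c$ and apply the reverse triangle inequality on each block: on $S$, $\|\bm{x}_S+\bm{h}_S\|_1\geq\|\bm{x}_S\|_1-\|\bm{h}_S\|_1$, and on $S^c$, $\|\bm{x}_{S^c}+\bm{h}_{S^c}\|_1\geq\|\bm{h}_{S^c}\|_1-\|\bm{x}_{S^c}\|_1$. Subtracting $\|\bm{x}\|_1=\|\bm{x}_S\|_1+\|\bm{x}_{S^c}\|_1$ and combining with the $\ell_2$ bound yields the master inequality
\[
\|\bm{h}_{S^c}\|_1\leq\|\bm{h}_S\|_1+2\|\bm{x}_{S^c}\|_1+\alpha\|\bm{h}\|_2.
\]
To pass from the $S$-indexed quantities to the $\max(s)$-indexed ones in \eqref{e:h-maxsupperbound1}, I would use the monotonicity of the support selection: since $\bm{h}_{\max(s)}$ collects the $s$ largest entries of $\bm{h}$, every $s$-element index set obeys $\|\bm{h}_S\|_1\leq\|\bm{h}_{\max(s)}\|_1$ and dually $\|\bm{h}_{-\max(s)}\|_1\leq\|\bm{h}_{S^c}\|_1$. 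Both replacements push the master inequality in the favourable direction, producing \eqref{e:h-maxsupperbound1}.

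For \eqref{e:h-maxsupperbound2} I would split the $\ell_2$ term using the disjointness of the two supports, $\|\bm{h}\|_2=\sqrt{\|\bm{h}_{\max(s)}\|_2^2+\|\bm{h}_{-\max(s)}\|_2^2}\leq\|\bm{h}_{\max(s)}\|_2+\|\bm{h}_{-\max(s)}\|_2$, and then move the $\alpha\|\bm{h}_{-\max(s)}\|_2$ piece to the left-hand side. Finally, the $s$-sparse specializations \eqref{e:h-maxsupperboundnoiseless1}--\eqref{e:h-maxsupperboundnoiseless2} follow at once by setting $\bm{x}_{-\max(s)}=\bm{0}$, so that $\|\bm{x}_{-\max(s)}\|_1=0$.

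The argument is essentially routine, and the only place that demands care is the bookkeeping between the two distinct supports: the top-$s$ support of $\bm{x}$ used to split the objective, and the top-$s$ support of $\bm{h}$ used to phrase the conclusion. The main (modest) obstacle is verifying that both monotonicity replacements weaken the inequality consistently, so that the left-hand replacement shrinks the quantity being bounded while the right-hand replacement enlarges the bound, and no sign is lost when they are chained together.
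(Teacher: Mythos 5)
Your proof is correct and is essentially the standard cone-constraint argument that the paper relies on (it imports this lemma from the proof of Theorem 3.3 in the cited reference rather than proving it in-text): split $\|\hat{\bm x}\|_1$ by reverse triangle inequality over the top-$s$ support of $\bm x$, bound $\|\hat{\bm x}\|_2-\|\bm x\|_2\leq\|\bm h\|_2$, and then use the monotonicity $\|\bm h_S\|_1\leq\|\bm h_{\max(s)}\|_1$, $\|\bm h_{-\max(s)}\|_1\leq\|\bm h_{S^c}\|_1$ together with $\|\bm h\|_2\leq\|\bm h_{\max(s)}\|_2+\|\bm h_{-\max(s)}\|_2$ to get both displayed inequalities. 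Your bookkeeping between the two supports and the direction of each replacement is handled correctly, and the $s$-sparse specializations follow exactly as you say.
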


The following lemma is the fundamental properties of  $\|{\bm  x}\|_1-\alpha\|{\bm  x}\|_2$ with
$0\leq\alpha\leq 1$. The item (a) is a  generalization of  \cite[Lemma 2.1 (a)]{yin2015minimization} and item (b) is trival.
It will be frequently used in our proofs.
\begin{lemma}\label{LocalEstimateL1-L2}
For any $\bm {x}\in \mathbb{R}^n$, the following statements hold:
\begin{description}
  \item (a)  For $0 \leq \alpha\leq 1$, let $T=\text{supp}({\bm  x})$ and $\|{\bm  x}\|_0=s$, then
\begin{align}\label{e:l12alphas}
(s-\alpha\sqrt{s})\min_{j\in T}|x_j|\leq\|{\bm  x}\|_1-\alpha\|{\bm  x}\|_2\leq(\sqrt{s}-\alpha)\|{\bm  x}\|_2.
\end{align}
\item (b) Let $S, S_1, S_2\subseteq [n]$ satisfy $S=S_1\cup S_2$ and $S_1\cap S_2=\emptyset$, then
\begin{align}\label{e:uplowerbound}
\|\bm {x}_{S_1}\|_{1}-\alpha\|\bm {x}_{S_1}\|_{2}+\|\bm {x}_{S_2}\|_{1}-\alpha\|\bm {x}_{S_2}\|_{2}\leq\|\bm {x}_{S}\|_{1}-
\alpha\|\bm {x}_{S}\|_{2}.
\end{align}
\end{description}
\end{lemma}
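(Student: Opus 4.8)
The plan is to handle the two items separately; item (b) is essentially immediate, so the real content sits in the lower bound of item (a). For the upper bound in item (a), I would use the Cauchy--Schwarz inequality directly: since $\bm{x}$ is supported on $T$ with $|T|=s$, one has $\|\bm{x}\|_1=\sum_{j\in T}|x_j|\leq\sqrt{s}\,\|\bm{x}\|_2$, and subtracting $\alpha\|\bm{x}\|_2$ gives $\|\bm{x}\|_1-\alpha\|\bm{x}\|_2\leq(\sqrt{s}-\alpha)\|\bm{x}\|_2$.

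The lower bound is the substantive part, and where the hypothesis $\alpha\leq 1$ becomes essential. Writing $m:=\min_{j\in T}|x_j|$ and $a_j:=|x_j|\geq m>0$, I would regard $f(\bm{a})=\sum_{j\in T}a_j-\alpha\bigl(\sum_{j\in T}a_j^2\bigr)^{1/2}$ as a function on the region $\{a_j\geq m\}$ and show that its minimum is attained at the corner where every $a_j$ equals $m$. The key computation is
$$\frac{\partial f}{\partial a_k}=1-\alpha\frac{a_k}{\|\bm{a}\|_2}\geq 1-\alpha\geq 0,$$
where the first inequality uses $a_k\leq\|\bm{a}\|_2$ and the second uses $\alpha\leq 1$. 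Hence $f$ is coordinatewise nondecreasing, so decreasing each $a_j$ down to $m$ one coordinate at a time can only decrease $f$; evaluating at the all-$m$ point yields $f\geq sm-\alpha\sqrt{s}\,m=(s-\alpha\sqrt{s})m$, exactly the claimed bound. I expect this monotonicity step to be the main obstacle: it is precisely here that $\alpha\leq 1$ is indispensable, since for $\alpha>1$ the partial derivatives can turn negative and the estimate fails. An equivalent route would be to reduce to the $\alpha=1$ case of \cite[Lemma 2.1 (a)]{yin2015minimization} via $\alpha\|\bm{x}\|_2\leq\|\bm{x}\|_2$ together with $\|\bm{x}\|_2\geq\sqrt{s}\,m$, but the monotonicity argument above is self-contained.

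For item (b), the disjointness $S_1\cap S_2=\emptyset$ makes the $\ell_1$ norm additive, $\|\bm{x}_S\|_1=\|\bm{x}_{S_1}\|_1+\|\bm{x}_{S_2}\|_1$, so after cancelling the common $\ell_1$ terms the desired inequality reduces (for $\alpha>0$; the case $\alpha=0$ is an equality) to $\|\bm{x}_{S_1}\|_2+\|\bm{x}_{S_2}\|_2\geq\|\bm{x}_S\|_2$. Since the two supports are orthogonal, $\|\bm{x}_S\|_2^2=\|\bm{x}_{S_1}\|_2^2+\|\bm{x}_{S_2}\|_2^2$, and the elementary inequality $\sqrt{a^2+b^2}\leq a+b$ for $a,b\geq 0$ (seen by squaring the right-hand side) closes the argument. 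No restriction on $\alpha$ is needed here, consistent with the remark that this item is trivial.
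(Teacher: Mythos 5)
Your proof is correct. Note, though, that the paper itself gives no argument for this lemma: it states that item (a) is a generalization of \cite[Lemma 2.1 (a)]{yin2015minimization} and that item (b) is trivial, and moves on. Your coordinatewise-monotonicity argument for the lower bound in (a) --- computing $\partial f/\partial a_k = 1-\alpha a_k/\|\bm a\|_2 \geq 1-\alpha \geq 0$ on the region $\{a_j\geq m\}$ and sliding each coordinate down to $m$ --- is therefore a genuinely self-contained proof where the paper has only a citation, and it isolates exactly where the hypothesis $\alpha\leq 1$ is used (the bound indeed fails for $\alpha>1$: take $\bm x=(1,t)$ with $t\to\infty$). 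The alternative route you sketch, writing $\|\bm x\|_1-\alpha\|\bm x\|_2=(\|\bm x\|_1-\|\bm x\|_2)+(1-\alpha)\|\bm x\|_2$ and combining the cited $\alpha=1$ case with $\|\bm x\|_2\geq \sqrt{s}\,\min_{j\in T}|x_j|$, is precisely the ``generalization'' the paper gestures at, so it matches the paper's intent while your main argument replaces the external reference entirely. Your upper bound (Cauchy--Schwarz) and your item (b) (additivity of $\ell_1$ over disjoint supports plus $\sqrt{a^2+b^2}\leq a+b$, treating $\alpha=0$ as an equality) are the standard arguments the paper leaves implicit; the only pedantic caveat is that (b) also uses $\alpha\geq 0$, which is automatic under the paper's standing assumption $0<\alpha\leq 1$.
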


\section{Stable Recovery  Under  the (L2,~L1)-RIP Frame}\label{s2}

In this section, we will give a sufficient condition based on $(\ell_2,\ell_1)$-RIP for the stable recovery of $\ell_{1}-\alpha\ell_{2}$-DS \eqref{VectorL1-alphaL2-DS}.

\subsection{Auxiliary Lemmas  Under (L2,~L1)-RIP Frame}\label{s2.1}
Before showing sufficient conditions  based on  $(\ell_2,\ell_1)$-RIP of the  $\ell_{1}-\alpha\ell_{2}$-DS \eqref{VectorL1-alphaL2-DS}
for the recovery of signals, we first develop an auxiliary lemma.

\begin{lemma}\label{hupperbound}
Assume that $\|\hat{{\bm  x}}\|_{\alpha,1-2}\leq\|{\bm  x}\|_{\alpha,1-2}$.
Let ${\bm  h}=\hat{{\bm  x}}-{\bm  x}$, $T_{01}=T_0\cup T_1$, where $T_0={\mathrm{supp}}({\bm  h}_{\max(s)})$ and
$T_1$ be the index set of the $k\in\mathbb{Z}_+$
largest entries of ${\bm  h}_{-\max(s)}$. Then
\begin{align*}
\|{\bm  h}_{T_{01}^c}\|_{2}
\leq&\frac{1}{2\sqrt{t}}\bigg(\|{\bm  h}_{T_{01}}\|_2+\frac{2\|{\bm  x}_{-\max(s)}\|_1
+\alpha\|{\bm  h}\|_2}{\sqrt{s}}\bigg),
\end{align*}
and
\begin{align*}
\|{\bm  h}\|_2
&\leq\bigg(1+\frac{1}{2\sqrt{t}}\bigg)\|{\bm  h}_{T_{01}}\|_2+\frac{1}{2\sqrt{t}}\frac{2\|{\bm  x}_{-\max(s)}\|_1}{\sqrt{s}}
+\frac{1}{2\sqrt{t}}\frac{\alpha\|{\bm  h}\|_2}{\sqrt{s}},
\end{align*}
where $t=\frac{k}{s}$.
\end{lemma}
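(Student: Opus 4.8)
The plan is to reduce the statement to a single sharp scalar inequality about the ordered magnitudes of the tail of ${\bm h}$, and then to feed in the cone constraint of Lemma~\ref{ConeconstraintinequalityforL1-L2}. First I would order the entries of ${\bm h}_{-\max(s)}$ by decreasing absolute value and split its support into consecutive blocks $T_1,T_2,\ldots$ of size $k$ (so $T_1$ is as in the statement, $T_2$ the next $k$ largest, and so on). Writing $c_i$ for the $i$-th largest absolute value among the entries of ${\bm h}_{-\max(s)}$, the target bound $\|{\bm h}_{T_{01}^c}\|_2\le\frac{1}{2\sqrt t}(\cdots)$ is, after pulling the factor $\sqrt s$ out of $\frac{1}{2\sqrt t}=\frac{\sqrt s}{2\sqrt k}$, equivalent to the tail estimate
\[
\|{\bm h}_{T_{01}^c}\|_2\;\le\;\frac{1}{2\sqrt k}\,\|{\bm h}_{-\max(s)}\|_1 .
\]

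The heart of the matter — and the step I expect to be the main obstacle — is establishing this factor-$\tfrac12$ tail estimate, since the naive block-telescoping argument $\|{\bm h}_{T_{01}^c}\|_2\le\sum_{j\ge2}\|{\bm h}_{T_j}\|_2\le\frac{1}{\sqrt k}\|{\bm h}_{-\max(s)}\|_1$ only yields the constant $1$. To gain the extra factor $\tfrac12$ I would argue directly with the ordered values. Setting $A=\sum_{i\le k}c_i=\|{\bm h}_{T_1}\|_1$ and $B=\sum_{i>k}c_i=\|{\bm h}_{T_{01}^c}\|_1$, so that $A+B=\|{\bm h}_{-\max(s)}\|_1$, monotonicity of the $c_i$ gives $k\,c_{k+1}\le A$, hence $c_{k+1}\le A/k$, and $c_i\le c_{k+1}$ for $i>k$ yields
\[
\|{\bm h}_{T_{01}^c}\|_2^2=\sum_{i>k}c_i^2\le c_{k+1}\sum_{i>k}c_i\le\frac{A}{k}\,B .
\]
Finishing with the arithmetic--geometric mean inequality $AB\le\tfrac14(A+B)^2$ gives $\|{\bm h}_{T_{01}^c}\|_2^2\le\frac{1}{4k}\|{\bm h}_{-\max(s)}\|_1^2$, which is exactly the desired tail estimate.

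It then remains to convert $\|{\bm h}_{-\max(s)}\|_1$ into the quantities appearing in the statement. I would invoke the cone inequality \eqref{e:h-maxsupperbound1} to bound $\|{\bm h}_{-\max(s)}\|_1\le\|{\bm h}_{\max(s)}\|_1+2\|{\bm x}_{-\max(s)}\|_1+\alpha\|{\bm h}\|_2$, then apply Cauchy--Schwarz to the $s$-sparse vector ${\bm h}_{\max(s)}={\bm h}_{T_0}$ to get $\|{\bm h}_{\max(s)}\|_1\le\sqrt s\,\|{\bm h}_{T_0}\|_2\le\sqrt s\,\|{\bm h}_{T_{01}}\|_2$. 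Substituting these and using $\sqrt s/(2\sqrt k)=1/(2\sqrt t)$ produces the first displayed inequality. The second inequality is then immediate: since ${\bm h}={\bm h}_{T_{01}}+{\bm h}_{T_{01}^c}$ have disjoint supports, $\|{\bm h}\|_2\le\|{\bm h}_{T_{01}}\|_2+\|{\bm h}_{T_{01}^c}\|_2$, and inserting the first inequality and collecting the $\|{\bm h}_{T_{01}}\|_2$ terms gives the stated bound with coefficient $1+\frac{1}{2\sqrt t}$.
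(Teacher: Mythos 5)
Your proof is correct and follows essentially the same route as the paper: both hinge on the tail estimate $\|{\bm h}_{T_{01}^c}\|_2\le\frac{1}{2\sqrt{k}}\|{\bm h}_{T_0^c}\|_1$ (the paper gets the factor $\tfrac12$ by writing $\|{\bm h}_{T_{01}^c}\|_2\le\sqrt{\|{\bm h}_{T_{01}^c}\|_1\|{\bm h}_{T_{01}^c}\|_\infty}$ and completing the square in $|h_{s+k}|$, while you optimize the block mass $A$ via AM--GM, which is the same computation), and then both finish identically with the cone inequality \eqref{e:h-maxsupperbound1}, Cauchy--Schwarz on ${\bm h}_{\max(s)}$, and $\sqrt{a^2+b^2}\le a+b$ for the second bound.
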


\begin{proof}
By the fact that $\|{\bm  h}\|_2=\sqrt{\|{\bm  h}_{T_{01}}\|_2^2+\|{\bm  h}_{T_{01}^c}\|_2^2}$, we need to
 estimate the upper bound of $\|{\bm  h}_{T_{01}^c}\|_{2}$.
 Without loss of generality, we assume that $|h_1|\geq\cdots\geq|h_s|\geq|h_{s+1}|\geq\cdots\geq|h_{s+k}|\geq\cdots\geq|h_n|$ with $k=ts\in \mathbb{Z}_+$.
 Then,
\begin{align}\label{hupperbound.eq1}
\|{\bm  h}_{T_{01}^c}\|_{2}
\leq&\sqrt{\|{\bm  h}_{T_{01}^c}\|_1\|{\bm  h}_{T_{01}^c}\|_{\infty}}
\overset{(a)}{\leq}\sqrt{\Big(\|{\bm  h}_{T_0^c}\|_1-\sum_{j\in T_1}|h_j|\Big)|h_{s+k}|}\nonumber\\
\overset{(b)}{\leq}&\sqrt{\Big(\|{\bm  h}_{T_0^c}\|_1-k|h_{s+k}|\Big)|h_{s+k}|}
=\sqrt{-k\Big(|h_{s+k}|-\frac{\|{\bm  h}_{T_0^c}\|_1}{2k}\Big)^2
+\frac{\|{\bm  h}_{T_0^c}\|_1^2}{4k}}\nonumber\\
\leq&\frac{\|{\bm  h}_{T_0^c}\|_1}{2\sqrt{k}}
\overset{(c)}{\leq}\frac{\|{\bm  h}_{\max(s)}\|_1+2\|{\bm  x}_{-\max(s)}\|_1+\alpha\|{\bm  h}\|_2}{2\sqrt{k}}\nonumber\\
\overset{(d)}{\leq}&\frac{1}{2}\sqrt{\frac{s}{k}}\Big(\|{\bm  h}_{T_{01}}\|_2+\frac{2\|{\bm  x}_{-\max(s)}\|_1
+\alpha\|{\bm  h}\|_2}{\sqrt{s}}\Big)\nonumber\\
\overset{(e)}{=}&\frac{1}{2\sqrt{t}}\bigg(\|{\bm  h}_{T_{01}}\|_2+\frac{2\|{\bm  x}_{-\max(s)}\|_1
+\alpha\|{\bm  h}\|_2}{\sqrt{s}}\bigg),
\end{align}
where (a) and (b) are from $T_{01}=T_0\cup T_1$, $|T_1|=k$ and the assumption $|h_1|\geq\cdots\geq|h_s|\geq|h_{s+1}|\geq\cdots\geq|h_{s+k}|\geq\cdots\geq|h_n|$,
(c) follows from Lemma \ref{ConeconstraintinequalityforL1-L2}, (d) is due to $\|{\bm  h}_{\max(s)}\|_1\leq \sqrt{s}\|{\bm  h}_{\max(s)}\|_2$,
$T_0=\mathrm{supp}({\bm  h}_{\max(s)})$ and $T_{01}=T_0\cup T_1$, and
(e) follows from $k=ts\in \mathbb{Z}_+$.

By  \eqref{hupperbound.eq1}, ones have
\begin{align}\label{hupperbound.eq2}
&\|{\bm  h}\|_2=\sqrt{\|{\bm  h}_{T_{01}}\|_2^2+\|{\bm  h}_{T_{01}^c}\|_2^2}\nonumber\\
&\leq\sqrt{\|{\bm  h}_{T_{01}}\|_2^2+\frac{1}{4t}\bigg(\|{\bm  h}_{T_{01}}\|_2+\frac{2\|{\bm  x}_{-\max(s)}\|_1
+\alpha\|{\bm  h}\|_2}{\sqrt{s}}\bigg)^2}\nonumber\\
&\leq\bigg(1+\frac{1}{2\sqrt{t}}\bigg)\|{\bm  h}_{T_{01}}\|_2+\frac{1}{2\sqrt{t}}\frac{2\|{\bm  x}_{-\max(s)}\|_1}{\sqrt{s}}
+\frac{1}{2\sqrt{t}}\frac{\alpha\|{\bm  h}\|_2}{\sqrt{s}},
\end{align}
where the last inequality is due to the basic inequality $\sqrt{a^2+b^2}\leq a+b$ for $a,b\geq 0$.

\end{proof}

Moreover, we recall a vital lemma, which describes the lower bound of $\|{\bm  A}(\hat{{\bm  x}}-{\bm  x})\|_1$. It plays
an important role in the proof of the main result based on  $(\ell_2,\ell_1)$-RIP frame.

\begin{lemma}(\cite[Lemma 2.6]{li2020minimization})\label{LowerBound}
Assume that $\|\hat{{\bm  x}}\|_{\alpha,1-2}\leq\|{\bm  x}\|_{\alpha,1-2}$.
Let ${\bm  h}=\hat{{\bm  x}}-{\bm  x}$, $T_0={\mathrm{supp}}({\bm  h}_{\max(s)})$,
$T_1$ be the index set of the $k\in\mathbb{Z}_+$
largest entries of ${\bm  h}_{-\max(s)}$  and $T_{01}=T_0\cup T_1$,
the matrix ${\bm  A}$ satisfies the  $(\ell_2,\ell_1)$-RIP condition
of $k+s$ order. Then
\begin{align}\label{e:Ahlowerbound}
\|\bm { Ah}\|_1
\geq&\rho_k\|{\bm  h}_{T_{01}}\|_2
-\frac{2(1+\delta_{k}^{ub})\|{\bm  x}_{-\max(s)}\|_1}{\sqrt{k}-\alpha},
\end{align}
where
\begin{align}\label{e:rho1}
\rho_{k}=1-\delta_{k+s}^{lb}-\frac{(1+\delta_k^{ub})}{a(s,k;\alpha)}
\end{align}
and $a(s,k;\alpha)=\frac{\sqrt{k}-\alpha}{\sqrt{s}+\alpha}$.
\end{lemma}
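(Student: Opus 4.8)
The plan is to bound $\|\bm{A}\bm{h}\|_1$ from below by isolating the ``head'' block $\bm{h}_{T_{01}}$, on which the $(\ell_2,\ell_1)$-RIP acts as a near-isometry, and controlling the ``tail'' $\bm{h}_{T_{01}^c}$ through a block decomposition. First I would split $\bm{A}\bm{h}=\bm{A}\bm{h}_{T_{01}}+\bm{A}\bm{h}_{T_{01}^c}$ and apply the reverse triangle inequality to obtain $\|\bm{A}\bm{h}\|_1\geq\|\bm{A}\bm{h}_{T_{01}}\|_1-\|\bm{A}\bm{h}_{T_{01}^c}\|_1$. Since $\bm{h}_{T_{01}}$ is $(s+k)$-sparse, the lower bound of Definition \ref{def:LowerUpperlpRIP} with $p=1$ gives $\|\bm{A}\bm{h}_{T_{01}}\|_1\geq(1-\delta_{k+s}^{lb})\|\bm{h}_{T_{01}}\|_2$, which already supplies the leading coefficient of $\rho_k$.

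For the tail I would decompose $T_{01}^c$ into consecutive index blocks $T_2,T_3,\dots$ of size $k$, ordered by decreasing magnitude of the entries of $\bm{h}$, so that subadditivity together with the upper RIP bound yields $\|\bm{A}\bm{h}_{T_{01}^c}\|_1\leq(1+\delta_k^{ub})\sum_{j\geq 2}\|\bm{h}_{T_j}\|_2$. The elementary averaging estimate $\|\bm{h}_{T_j}\|_2\leq\|\bm{h}_{T_{j-1}}\|_1/\sqrt{k}$, obtained because every entry of $T_j$ is dominated by the average magnitude on $T_{j-1}$, then sums over the blocks to $\sum_{j\geq 2}\|\bm{h}_{T_j}\|_2\leq\|\bm{h}_{T_0^c}\|_1/\sqrt{k}$, whence $\|\bm{A}\bm{h}_{T_{01}^c}\|_1\leq\frac{1+\delta_k^{ub}}{\sqrt{k}}\|\bm{h}_{T_0^c}\|_1$. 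It then remains to convert $\|\bm{h}_{T_0^c}\|_1$ into a multiple of $\|\bm{h}_{T_{01}}\|_2$ plus the noise term $\|\bm{x}_{-\max(s)}\|_1$.

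The decisive step is this conversion, and it is where the nonconvex $\alpha\|\cdot\|_2$ term must be absorbed. I would invoke the cone inequality \eqref{e:h-maxsupperbound1} of Lemma \ref{ConeconstraintinequalityforL1-L2}, namely $\|\bm{h}_{T_0^c}\|_1\leq\|\bm{h}_{T_0}\|_1+2\|\bm{x}_{-\max(s)}\|_1+\alpha\|\bm{h}\|_2$, use $\|\bm{h}_{T_0}\|_1\leq\sqrt{s}\|\bm{h}_{T_0}\|_2\leq\sqrt{s}\|\bm{h}_{T_{01}}\|_2$, and control the troublesome term by $\|\bm{h}\|_2\leq\|\bm{h}_{T_{01}}\|_2+\|\bm{h}_{T_{01}^c}\|_2$ with the same block bound $\|\bm{h}_{T_{01}^c}\|_2\leq\|\bm{h}_{T_0^c}\|_1/\sqrt{k}$. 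Substituting produces a self-referential inequality in $\|\bm{h}_{T_0^c}\|_1$; collecting the $\tfrac{\alpha}{\sqrt{k}}\|\bm{h}_{T_0^c}\|_1$ contribution on the left and dividing by $1-\tfrac{\alpha}{\sqrt{k}}$ (valid since $\sqrt{k}>\alpha$) is precisely what generates the denominator $\sqrt{k}-\alpha$, giving $\|\bm{h}_{T_0^c}\|_1\leq\frac{\sqrt{k}}{\sqrt{k}-\alpha}\big[(\sqrt{s}+\alpha)\|\bm{h}_{T_{01}}\|_2+2\|\bm{x}_{-\max(s)}\|_1\big]$. Feeding this back into the tail estimate and recognizing that $\frac{(1+\delta_k^{ub})(\sqrt{s}+\alpha)}{\sqrt{k}-\alpha}=\frac{1+\delta_k^{ub}}{a(s,k;\alpha)}$ closes the argument. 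The main obstacle is exactly the correct routing of $\alpha\|\bm{h}\|_2$: a careless bound either leaves a stray $\|\bm{h}\|_2$ on the right-hand side or yields the wrong denominator, so the proof hinges on passing this term through $\|\bm{h}_{T_{01}^c}\|_2\leq\|\bm{h}_{T_0^c}\|_1/\sqrt{k}$ and only then solving the resulting inequality for $\|\bm{h}_{T_0^c}\|_1$.
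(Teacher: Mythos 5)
Your proof is correct. Note that the paper itself does not prove this lemma at all --- it is imported wholesale as \cite[Lemma 2.6]{li2020minimization} --- so there is no internal proof to compare against; what you have produced is a valid self-contained derivation. Your argument is the standard Cand\`es-type block decomposition (head/tail splitting, lower $(\ell_2,\ell_1)$-RIP on the $(s+k)$-sparse head, upper RIP plus the shelling bound $\|\bm{h}_{T_j}\|_2\leq\|\bm{h}_{T_{j-1}}\|_1/\sqrt{k}$ on the tail), and the one genuinely nonstandard step --- absorbing the term $\alpha\|\bm{h}\|_2$ from the cone inequality \eqref{e:h-maxsupperbound1} --- is handled correctly: routing it through $\|\bm{h}\|_2\leq\|\bm{h}_{T_{01}}\|_2+\|\bm{h}_{T_{01}^c}\|_2$ and $\|\bm{h}_{T_{01}^c}\|_2\leq\|\bm{h}_{T_0^c}\|_1/\sqrt{k}$, then solving the resulting self-referential inequality, yields
\begin{equation*}
\Big(1-\tfrac{\alpha}{\sqrt{k}}\Big)\|\bm{h}_{T_0^c}\|_1\leq(\sqrt{s}+\alpha)\|\bm{h}_{T_{01}}\|_2+2\|\bm{x}_{-\max(s)}\|_1,
\end{equation*}
which, fed back into the tail estimate, reproduces exactly the coefficient $\frac{1+\delta_k^{ub}}{a(s,k;\alpha)}$ inside $\rho_k$ and the denominator $\sqrt{k}-\alpha$ in the residual term. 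The only caveat worth recording is the implicit requirement $\sqrt{k}>\alpha$ (needed to divide by $1-\alpha/\sqrt{k}$), but this is already implicit in the lemma's statement, since $\sqrt{k}-\alpha$ appears there in a denominator, and it holds strictly in the regime where the lemma is invoked ($k=ts$ with $a(s,ts;\alpha)>2$).
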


\subsection{Main Result Based on  (L2,~L1)-RIP Frame}\label{s2.2}
Now, we consider the recovery of signals from
\eqref{systemequationsnoise} with
$\|{\bm  A}^{\top}{\bm  e}\|_\infty\leq \eta$ via  the  $\ell_{1}-\alpha \ell_{2}$-DS (\ref{VectorL1-alphaL2-DS}).
\begin{theorem}\label{StableRecoveryviaVectorL1-alphaL2-DS}
Consider ${\bm  b}={\bm  {Ax}}+{\bm  e}$ with $\|{\bm  A}^{\top}{\bm  e}\|_\infty\leq \eta$. For some $s\in[[1,n]]$ and  $0<\alpha\leq 1$, let
$t>0$ such that $ts\in\mathbb{Z}_+$,  $a(s,ts;\alpha)=\frac{\sqrt{ts}-\alpha}{\sqrt{s}+\alpha}>2$ and $b(t, ts; \alpha)=\frac{8(2\sqrt{ts}-\alpha)}{17\alpha(2\sqrt{t}+1)}>1$
satisfying   $a(s,ts;\alpha)b(t, ts; \alpha)<a(s,ts;\alpha)+b(t, ts; \alpha)$. Let $\hat{\bm{x}}^{DS}$ be the minimizer
of the  $\ell_{1}-\alpha \ell_{2}$-DS (\ref{VectorL1-alphaL2-DS}). If the  measurement matrix ${\bm  A}$ satisfies the $(\ell_2,\ell_1)$-RIP condition with
\begin{eqnarray}\label{RIPCondition1}
&\big(b(t,ts;\alpha)+1\big)\delta_{ts}^{ub}+a(s,ts;\alpha)b(t,ts;\alpha)\delta_{(t+1)s}^{lb}\nonumber\\
&<a(s,ts;\alpha)b(t,ts;\alpha)-b(t,ts;\alpha)-1,
\end{eqnarray}
then
\begin{align*}
\|\hat{{\bm  x}}^{DS}-{\bm  x}\|_2
&\leq\frac{\sqrt{s}\tau}{\sqrt{s}-\alpha\tau}\frac{2\|{\bm  x}_{-\max(s)}\|_1}{\sqrt{s}}\\
&\quad+\frac{2(2\sqrt{t}+1)\big((1+\delta_{ts}^{ub})+a(s, ts;\alpha)\rho_{ts}\big)ms}{\sqrt{t}(\sqrt{s}-\alpha\tau)(1+\delta_{ts}^{ub})\rho_{ts}^2}\eta
\end{align*}
where $\tau=\frac{1}{2\sqrt{t}}\bigg(\frac{17(2\sqrt{t}+1)(1+\delta_{ts}^{ub})}{8a(s,ts;\alpha)\rho_{ts}}+1\bigg)$ and $\rho_{ts}=1-\delta_{(t+1)s}^{lb}-\frac{1+\delta_{ts}^{ub}}{a(s,ts;\alpha)}$.
\end{theorem}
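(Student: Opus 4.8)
The plan is to run a standard RIP-based error analysis adapted to the $(\ell_2,\ell_1)$-RIP and the Dantzig-type constraint, centered on the error vector $\bm h=\hat{\bm x}^{DS}-\bm x$. First I would note that the true signal $\bm x$ is feasible for \eqref{VectorL1-alphaL2-DS}, since $\|\bm A^\top(\bm b-\bm A\bm x)\|_\infty=\|\bm A^\top\bm e\|_\infty\le\eta$; optimality of $\hat{\bm x}^{DS}$ then gives $\|\hat{\bm x}^{DS}\|_{\alpha,1-2}\le\|\bm x\|_{\alpha,1-2}$, which is exactly the hypothesis needed to invoke Lemma \ref{ConeconstraintinequalityforL1-L2} (cone inequalities), Lemma \ref{hupperbound}, and Lemma \ref{LowerBound}. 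The second structural fact I would record is the Dantzig bound on the normal equations: since both $\hat{\bm x}^{DS}$ and $\bm x$ are feasible, the triangle inequality yields $\|\bm A^\top\bm A\bm h\|_\infty\le 2\eta$.

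The core of the argument is to sandwich $\|\bm A\bm h\|_1$. For the upper estimate I would convert the $\ell_\infty$ control on $\bm A^\top\bm A\bm h$ into $\ell_1$ control on $\bm A\bm h$ through $\|\bm A\bm h\|_2^2=\langle\bm h,\bm A^\top\bm A\bm h\rangle\le\|\bm h\|_1\|\bm A^\top\bm A\bm h\|_\infty\le 2\eta\|\bm h\|_1$, followed by Cauchy--Schwarz $\|\bm A\bm h\|_1\le\sqrt m\,\|\bm A\bm h\|_2$, so that $\|\bm A\bm h\|_1\le\sqrt{2m\eta\|\bm h\|_1}$. For the lower estimate I would quote Lemma \ref{LowerBound} with $k=ts$, giving $\|\bm A\bm h\|_1\ge\rho_{ts}\|\bm h_{T_{01}}\|_2-\frac{2(1+\delta_{ts}^{ub})\|\bm x_{-\max(s)}\|_1}{\sqrt{ts}-\alpha}$. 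To close the loop I would also use inequality \eqref{e:h-maxsupperbound1} together with $\|\bm h_{\max(s)}\|_1\le\sqrt s\,\|\bm h_{T_{01}}\|_2$ to bound $\|\bm h\|_1$ linearly by $\|\bm h_{T_{01}}\|_2$ (plus the tail $\|\bm x_{-\max(s)}\|_1$ and the self-referential term $\alpha\|\bm h\|_2$).

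Substituting the $\|\bm h\|_1$ bound into the upper estimate and setting it against the lower estimate produces a quadratic inequality in $P:=\|\bm h_{T_{01}}\|_2$. Solving it (using a weighted split of the cross term, which is precisely where the auxiliary parameter $b(t,ts;\alpha)$ and the constraints $a(s,ts;\alpha)>2$, $b(t,ts;\alpha)>1$, $a(s,ts;\alpha)b(t,ts;\alpha)<a(s,ts;\alpha)+b(t,ts;\alpha)$ enter) yields a bound of the form $P\le c_1\,\|\bm x_{-\max(s)}\|_1+c_2\,m\eta$, with $c_1,c_2$ controlled by $\rho_{ts}$ and $\delta_{ts}^{ub}$. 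The RIP condition \eqref{RIPCondition1} is exactly what forces $\rho_{ts}=1-\delta_{(t+1)s}^{lb}-\frac{1+\delta_{ts}^{ub}}{a(s,ts;\alpha)}>0$, so that the denominators $\rho_{ts}$ and $\rho_{ts}^2$ appearing in $c_1,c_2$ are legitimate and the bound is finite.

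Finally I would upgrade the bound on $\|\bm h_{T_{01}}\|_2$ to one on $\|\bm h\|_2$ via the second inequality of Lemma \ref{hupperbound}, $\|\bm h\|_2\le(1+\tfrac{1}{2\sqrt t})\|\bm h_{T_{01}}\|_2+\tfrac{1}{2\sqrt t}\tfrac{2\|\bm x_{-\max(s)}\|_1}{\sqrt s}+\tfrac{1}{2\sqrt t}\tfrac{\alpha\|\bm h\|_2}{\sqrt s}$, and collect every occurrence of $\|\bm h\|_2$ on the left. The hard part, and the most delicate bookkeeping, is that $\alpha\|\bm h\|_2$ is genuinely self-referential: it surfaces in the cone inequality, inside the bound on $P$, and again explicitly in Lemma \ref{hupperbound}. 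Tracking its accumulated coefficient is what defines $\tau=\frac{1}{2\sqrt t}\big(\frac{17(2\sqrt t+1)(1+\delta_{ts}^{ub})}{8a(s,ts;\alpha)\rho_{ts}}+1\big)$, and absorbing it requires the strict positivity $\sqrt s-\alpha\tau>0$, which produces the factor $\frac{\sqrt s}{\sqrt s-\alpha\tau}$ in the final estimate. Verifying that the hypotheses jointly guarantee both $\rho_{ts}>0$ and $\sqrt s-\alpha\tau>0$ is the crux that makes the stated inequality meaningful.
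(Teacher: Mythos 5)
Your proposal follows essentially the same route as the paper's own proof: feasibility and optimality yield the cone and tube constraints, $\|\bm{A}\bm{h}\|_1$ is sandwiched between the lower bound of Lemma \ref{LowerBound} (with $k=ts$) and the Cauchy--Schwarz/$\ell_\infty$ upper estimate, the resulting quadratic inequality in $\|\bm{h}_{T_{01}}\|_2$ is solved, and Lemma \ref{hupperbound} together with the absorption condition $\sqrt{s}-\alpha\tau>0$ (forced by the RIP hypothesis \eqref{RIPCondition1}) closes the argument. The only detail you gloss over is that one must split into two cases before squaring the combined inequality---when $\rho_{ts}\|\bm{h}_{T_{01}}\|_{2}-\frac{(1+\delta_{ts}^{ub})\sqrt{s}}{\sqrt{ts}-\alpha}\frac{2\|\bm{x}_{-\max(s)}\|_1}{\sqrt{s}}<0$ the quadratic step is vacuous and a direct bound is used instead, later checked to be dominated by the other case---but this is routine bookkeeping rather than a different idea.
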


\begin{remark}\label{Simplercondition-StableRecoveryviaVectorL1-alphaL2-DS}
The conditions in Theorem \ref{StableRecoveryviaVectorL1-alphaL2-DS} seem strict. In fact,
these  conditions can be satisfied. For example, for $\alpha=1$, if we take $t=16$, then
$$
a(s,ts;\alpha)=\frac{4\sqrt{s}-1}{\sqrt{s}+1}=:a(s), \ \ \ \ \ \ b(t,ts;\alpha)=\frac{8(8\sqrt{s}-1)}{153}=:b(s).
$$
If we restrict $7\leq s\leq 14$, we can check that $a(s)>2$, $b(s)>1$ and $a(s)b(s)<a(s)+b(s)$. Therefore, $(\ell_2,\ell_1)$-RIP condition (\ref{RIPCondition1}) can be formulated as
$$
\big(b(s)+1\big)\delta_{ts}^{ub}+a(s)b(s)\delta_{(t+1)s}^{lb}<a(s)b(s)-b(s)-1.
$$
And if we take $\delta_s^{lb}=\delta_s^{ub}=\delta_s$ in Remark \ref{lqRIP-Remark}, then condition (\ref{RIPCondition1}) can be simplified as
$$
\delta_{17s}<\frac{192s-305\sqrt{s}-137}{320s+113\sqrt{s}+153}.
$$
\end{remark}

\begin{proof}
Our proof is motivated by the proof of \cite[Lemma 7.9 in Supplement]{cai2015rop}.
Take ${\bm  h}=\hat{{\bm  x}}^{DS}-{\bm  x}$.
Since  $\hat{{\bm  x}}^{DS}$ is the minimizer  of (\ref{VectorL1-alphaL2-DS}),
which implies $\|\hat{{\bm  x}}^{DS}\|_{\alpha,1-2}\leq \|{\bm  x}\|_{\alpha,1-2}$
and $\|{\bm  A}^{\top}({\bm  b}-{\bm  A}\hat{{\bm  x}}^{DS})\|_\infty\leq \eta$.
Then, by \eqref{e:h-maxsupperbound2} in Lemma \ref{ConeconstraintinequalityforL1-L2}, 
we have
\begin{equation}\label{Coneconstraintinequality.eq}
\|{\bm  h}_{-\max(s)}\|_1\leq\|{\bm  h}_{\max(s)}\|_1+2\|{\bm  x}_{-\max(s)}\|_1+\alpha\|{\bm  h}\|_2.
\end{equation}

From the facts $\|\bm {A}^{\top}\bm {z}\|_\infty=\|{\bm  A}^{\top}({\bm  b}-{\bm  {Ax}})\|_\infty\leq \eta$
and $\|{\bm  A}^{\top}(\bm {b}-\bm {A}\hat{{\bm  x}}^{DS})\|_\infty\leq \eta$,
we have the following tube constraint inequality
\begin{align}\label{Tubeconstraintinequality}
\|{\bm  A}^{\top}\bm {Ah}\|_{\infty}
&=\|{\bm  A}^{\top}({\bm  A}{\hat{\bm  x}}^{DS}-\bm { Ax)}\|_{\infty}\nonumber\\
&\leq\|{\bm  A}^{\top}({\bm  A}\hat{{\bm  x}}^{DS}-\bm { b})\|_{\infty}+\|{\bm  A}^{\top}({\bm  b}-\bm { Ax})\|_{\infty}\nonumber\\
&\leq\eta+\eta=2\eta.
\end{align}

Let $T_0=\text{supp}({\bm  h}_{\max(s)})$. First, we partition $T_0^c=[[1,n]]\backslash T_0$  as
$$
T_0^c=\bigcup_{j=1}^J T_j,
$$
where $T_1$ is the index set of the $ts\in\mathbb{Z}_+$ largest entries of ${\bm  h}_{-\max(s)}$, $T_2$ is the index set of the next $ts\in\mathbb{Z}_+$ largest entries of ${\bm  h}_{-\max(s)}$, and so on.
Notice that the last index set $T_J$ may contain less $ts\in\mathbb{Z}_+$ elements.
Similarly,
let  $T_{01}=T_0\cup T_1$.
Thus, by  ${\bm  A}$ satisfies  the $(\ell_2,\ell_1)$-RIP condition of $(t+1)s$ order,
 and Lemma \ref{LowerBound} with $k=ts$, 
one obtains a lower bound of $\|\bm {Ah}\|_1$
\begin{eqnarray}\label{LowerBound.eq1}
\|\bm {Ah}\|_1
\geq\rho_{ts}\|\bm {h}_{T_{01}}\|_{2}
-(1+\delta_{ts}^{ub})\frac{2\|\bm {x}_{-\max(s)}\|_1}{\sqrt{ts}-\alpha},
\end{eqnarray}
where
$$
\rho_{ts}=1-\delta_{(t+1)s}^{lb}-\frac{(1+\delta_{ts}^{ub})}{a(s,ts;\alpha)}
$$
with $a(s,ts;\alpha)=\frac{\sqrt{ts}-\alpha}{\sqrt{s}+\alpha}>1$. Furthermore,
\begin{eqnarray*}
&1-\delta_{(t+1)s}^{lb}-\frac{1+\delta_{ts}^{ub}}{a(s,ts;\alpha)}
>1-\delta_{(t+1)s}^{lb}-\frac{(1+b(s,ts;\alpha))(1+\delta_{ts}^{ub})}{a(s,ts;\alpha)b(s,ts;\alpha)}>0
\end{eqnarray*}
where the first and second inequalities are  from  $b(s,ts;\alpha)=\frac{8(2\sqrt{ts}-\alpha)}{17\alpha(2\sqrt{t}+1)}>0$ with $0<\alpha\leq 1$ and  $\eqref{RIPCondition1}$, respectively.

Next, we  estimate the upper bound of $\|\bm {Ah}\|_1$.
By Cauchy-Schwartz inequality,  we have
\begin{align}\label{UpperBound.eq1}
\|\bm {Ah}\|_1
&\leq \sqrt{m}\|\bm {Ah}\|_2
=\sqrt{m}\langle \bm {Ah}, \bm {Ah}\rangle^{1/2}\nonumber\\
&=\sqrt{m}\langle \bm {A}^{\top}\bm {Ah}, \bm {h}\rangle^{1/2}
\leq\sqrt{m}\sqrt{\|\bm {A}^{\top}\bm {Ah}\|_{\infty}\|\bm {h}\|_1}\nonumber\\
&=\sqrt{m}\sqrt{\|\bm {A}^{\top}\bm {Ah}\|_{\infty}(\|\bm {h}_{T_{0}}\|_1+\|\bm {h}_{T_{0}^c}\|_1)}\nonumber\\
&\overset{(a)}{\leq}\sqrt{m}\sqrt{2\eta(2\|\bm {h}_{T_{0}}\|_1+2\|\bm {x}_{-\max(s)}\|_1+\alpha\|\bm {h}\|_2)}\nonumber\\
&\overset{(b)}{\leq}\sqrt{2m\sqrt{s}\eta\bigg(2\|\bm {h}_{T_{01}}\|_2+\frac{2\|\bm {x}_{-\max(s)}\|_1
+\alpha\|\bm {h}\|_2}{\sqrt{s}}\bigg)}
\end{align}
where (a) is from $T_0=\text{supp}({\bm  h}_{\max(s)})$, \eqref{Coneconstraintinequality.eq} and \eqref{Tubeconstraintinequality}, (b) is due to $T_{01}=T_0\cup T_1$ and
 $\|{\bm  h}_{T_{0}}\|_1\leq \sqrt{s}\|{\bm  h}_{T_{0}}\|_2$ with $|T_{0}|\leq s$.

Combining  \eqref{LowerBound.eq1} with  \eqref{UpperBound.eq1}, we have
\begin{eqnarray}\label{e:inequalitynosimiple}
&&\rho_{ts}\|{\bm  h}_{T_{01}}\|_{2}
 -\frac{(1+\delta_{ts}^{ub})\sqrt{s}}{\sqrt{ts}-\alpha}\frac{2\|{\bm  x}_{-\max(s)}\|_1}{\sqrt{s}}\nonumber\\
&&\leq
\sqrt{2m\sqrt{s}\eta\bigg(2\|{\bm  h}_{T_{01}}\|_2+\frac{2\|{\bm  x}_{-\max(s)}\|_1+\alpha\|{\bm  h}\|_2}{\sqrt{s}}\bigg)}.
\end{eqnarray}

To estimate $\|{\bm  h}_{T_{01}}\|_{2}$ from \eqref{e:inequalitynosimiple},
 we consider the following two cases.

\textbf{Case I:}
$$
\rho_{ts}\|{\bm  h}_{T_{01}}\|_{2}
 -\frac{(1+\delta_{ts}^{ub})\sqrt{s}}{\sqrt{ts}-\alpha}\frac{2\|{\bm  x}_{-\max(s)}\|_1}{\sqrt{s}}<0,$$
 i.e.,
\begin{equation}\label{CaseI.Estimation}
\|{\bm  h}_{T_{01}}\|_{2}<
\frac{(1+\delta_{ts}^{ub})\sqrt{s}}{(\sqrt{ts}-\alpha)\rho_{ts}}\frac{2\|{\bm  x}_{-\max(s)}\|_1}{\sqrt{s}}.
\end{equation}

\textbf{Case II:}
\begin{eqnarray*}
\rho_{ts}\|{\bm  h}_{T_{01}}\|_{2}
-\frac{(1+\delta_{ts}^{ub})\sqrt{s}}{\sqrt{ts}-\alpha}\frac{2\|{\bm  x}_{-\max(s)}\|_1}{\sqrt{s}}
\geq0,
\end{eqnarray*}
 which implies
 $$ \|{\bm  h}_{T_{01}}\|_{2}\geq\frac{(1+\delta_{ts}^{ub})\sqrt{s}}{(\sqrt{ts}-\alpha)\rho_{ts}}
 \frac{2\|{\bm  x}_{-\max(s)}\|_1}{\sqrt{s}},
 $$
then the inequality \eqref{e:inequalitynosimiple} is equivalent  to
\begin{eqnarray}\label{e3.13}
&&\bigg(\rho_{ts}\|{\bm  h}_{T_{01}}\|_{2}-\frac{(1+\delta_{ts}^{ub})\sqrt{s}}{\sqrt{ts}-\alpha}\frac{2\|{\bm  x}_{-\max(s)}\|_1}{\sqrt{s}}\bigg)^2\nonumber\\
&&\leq2m\sqrt{s}\eta\bigg(2\|{\bm  h}_{T_{01}}\|_2+\frac{2\|{\bm  x}_{-\max(s)}\|_1+\alpha\|{\bm  h}\|_2}{\sqrt{s}}\bigg).
\end{eqnarray}

Let $X=\|{\bm  h}_{T_{01}}\|_{2}$ and $Y=\frac{2\|{\bm  x}_{-\max(s)}\|_1+\alpha\|{\bm  h}\|_2}{\sqrt{s}}$. By
$\frac{2\|{\bm  x}_{-\max(s)}\|_1}{\sqrt{s}}\leq Y$,
to guarantee  that \eqref{e3.13} holds, it suffices to show
\begin{eqnarray}\label{e3.14}
&\rho_{ts}^2X^2-\bigg(\frac{2\rho_{ts}(1+\delta_{ts}^{ub})\sqrt{s}}{\sqrt{ts}-\alpha}Y+4m\sqrt{s}\eta\bigg)X
-2m\sqrt{s}\eta Y\leq0.
\end{eqnarray}
For the one-variable quadratic inequality $aZ^2-bZ-c\leq 0$ with the  constants  $a,b,c>0$ and $Z\geq0$,
 there is the fact that
$$
Z\leq\frac{b+\sqrt{b^2+4ac}}{2a}\leq\frac{b}{a}+\sqrt{\frac{c}{a}}.
$$
Hence,
\begin{align}\label{e3.15}
X&\leq \frac{2\rho_{ts}\frac{(1+\delta_{ts}^{ub})\sqrt{s}}{\sqrt{ts}-\alpha}Y+4m\sqrt{s}\eta}{\rho_{s,t}^2}
+\sqrt{\frac{2m\sqrt{s}\eta \varepsilon Y}{\rho_{ts}^2\varepsilon }}\nonumber\\
&\leq\frac{2(1+\delta_{ts}^{ub})\sqrt{s}}{(\sqrt{ts}-\alpha)\rho_{ts}}Y+\frac{4m\sqrt{s}}{\rho_{ts}^2}\eta
+\frac{1}{2}\bigg(\frac{2m\sqrt{s}}{\rho_{ts}^2\varepsilon }\eta+\varepsilon Y\bigg)\nonumber\\
&=\bigg(\frac{2(1+\delta_{ts}^{ub})\sqrt{s}}{(\sqrt{ts}-\alpha)\rho_{ts}}+\frac{\varepsilon}{2}\bigg)Y
+\bigg(4+\frac{1}{\varepsilon}\bigg)\frac{m\sqrt{s}}{\rho_{ts}^2}\eta,
\end{align}
where $\varepsilon > 0$ is to be determined later. Here the second inequality comes from the basis inequality $\sqrt{|a||b|}\leq (|a|+|b|)/2$.  Therefore
\begin{align}\label{e:hmaxupperbound1}
\|{\bm  h}_{T_{01}}\|_{2}
\leq&\bigg(\frac{2(1+\delta_{ts}^{ub})\sqrt{s}}{(\sqrt{ts}-\alpha)\rho_{ts}}+\frac{\varepsilon}{2}\bigg)
\frac{2\|{\bm  x}_{-\max(s)}\|_1+\alpha\|{\bm  h}\|_2}{\sqrt{s}}
+\bigg(4+\frac{1}{\varepsilon}\bigg)\frac{m\sqrt{s}}{\rho_{ts}^2}\eta.
\end{align}

Note that
 \begin{eqnarray*}
 &&\bigg(\frac{2(1+\delta_{ts}^{ub})\sqrt{s}}{(\sqrt{ts}-\alpha)\rho_{ts}}+\frac{\varepsilon}{2}\bigg)Y
+\bigg(4+\frac{1}{\varepsilon}\bigg)\frac{m\sqrt{s}}{\rho_{ts}^2}\eta\\
&&=\bigg(\frac{2(1+\delta_{ts}^{ub})\sqrt{s}}{(\sqrt{ts}-\alpha)\rho_{ts}}+\frac{\varepsilon}{2}\bigg)
\frac{2\|{\bm  x}_{-\max(s)}\|_1+\alpha\|{\bm  h}\|_2}{\sqrt{s}}
 +\bigg(4+\frac{1}{\varepsilon}\bigg)\frac{m\sqrt{s}}{\rho_{ts}^2}\eta\\
&&\geq
 \frac{(1+\delta_{ts}^{ub})\sqrt{s}}{(\sqrt{ts}-\alpha)\rho_{ts}}
\frac{2\|{\bm  x}_{-\max(s)}\|_1}{\sqrt{s}},
\end{eqnarray*}
Therefore combining the estimation \eqref{CaseI.Estimation} in \textbf{Case I} and the estimation \eqref{e:hmaxupperbound1} in \textbf{Case II}, one has \eqref{e:hmaxupperbound1} holds for both cases.

By  Lemma \ref{hupperbound}, ones have
\begin{eqnarray}\label{e:hupperbound}
\|{\bm  h}\|_2
\leq\bigg(1+\frac{1}{2\sqrt{t}}\bigg)\|{\bm  h}_{T_{01}}\|_2+\frac{1}{2\sqrt{t}}\frac{2\|{\bm  x}_{-\max(s)}\|_1}{\sqrt{s}}
+\frac{\alpha}{2\sqrt{t}}\frac{\|{\bm  h}\|_2}{\sqrt{s}}.
\end{eqnarray}
Substituting \eqref{e:hmaxupperbound1} into \eqref{e:hupperbound}, ones obtain
\begin{align}\label{ge:h}
\|{\bm  h}\|_2
&\leq\bigg(1+\frac{1}{2\sqrt{t}}\bigg)
\Bigg(
\bigg(4+\frac{1}{\varepsilon}\bigg)\frac{m\sqrt{s}}{\rho_{ts}^2}\eta
+\bigg(\frac{2(1+\delta_{ts}^{ub})\sqrt{s}}{(\sqrt{ts}-\alpha)\rho_{ts}}+\frac{\varepsilon}{2}\bigg)
\frac{2\|{\bm  x}_{-\max(s)}\|_1+\alpha\|{\bm  h}\|_2}{\sqrt{s}}\Bigg)\nonumber\\
&\quad+\frac{1}{2\sqrt{t}}\frac{2\|{\bm  x}_{-\max(s)}\|_1}{\sqrt{s}}
+\frac{\alpha}{2\sqrt{t}}\frac{\|{\bm  h}\|_2}{\sqrt{s}}\nonumber\\
&\leq\frac{1}{2\sqrt{t}}\Bigg((2\sqrt{t}+1)\bigg(\frac{2(\sqrt{s}+\alpha)(1+\delta_{ts}^{ub})}
{(\sqrt{ts}-\alpha)\rho_{ts}}
+\frac{\varepsilon}{2}\bigg)+1\Bigg)\frac{2\|{\bm  x}_{-\max(s)}\|_1}{\sqrt{s}}\nonumber\\
&\quad+\frac{\alpha}{2\sqrt{t}}
\Bigg((2\sqrt{t}+1)\bigg(\frac{2(\sqrt{s}+\alpha)(1+\delta_{ts}^{ub})}{(\sqrt{ts}-\alpha)\rho_{ts}}
+\frac{\varepsilon}{2}\bigg)+1\Bigg)
\frac{\|{\bm  h}\|_2}{\sqrt{s}}\nonumber\\
&\quad+\bigg(1+\frac{1}{2\sqrt{t}}\bigg)\bigg(4+\frac{1}{\varepsilon}\bigg)\frac{m\sqrt{s}}{\rho_{ts}^2}\eta\nonumber\\
&=\tau\frac{2\|{\bm  x}_{-\max(s)}\|_1}{\sqrt{s}}+{\alpha\tau}\frac{\|{\bm  h}\|_2}{\sqrt{s}}
+\bigg(4+\frac{1}{\varepsilon}\bigg)\frac{(2\sqrt{t}+1)m\sqrt{s}}{2\sqrt{t}\rho_{ts}^2}\eta,
\end{align}
where the last equality is from
$$
\tau=\frac{1}{2\sqrt{t}}\Bigg((2\sqrt{t}+1)\bigg(\frac{2(1+\delta_{ts}^{ub})}{a(s,ts;\alpha)\rho_{ts}}
+\frac{\varepsilon}{2}\bigg)+1\Bigg)
$$
with
$$
\varepsilon=\frac{(1+\delta_{ts}^{ub})}{4a(s,ts;\alpha)\rho_{ts}}.
$$
Then
\begin{eqnarray}
\frac{\alpha\tau}{\sqrt{s}}=\frac{1}{2\sqrt{t}}\Bigg((2\sqrt{t}+1)\frac{17(1+\delta_{ts}^{ub})}{8a(s,t;\alpha)\rho_{s,t}}+1\Bigg)\frac{\alpha}{\sqrt{s}}
<1,
\end{eqnarray}
where the inequality is from \eqref{RIPCondition1}.
In fact,
\begin{align*}
\tau-\frac{\sqrt{s}}{\alpha}
&=\frac{1}{2\sqrt{t}}\Bigg((2\sqrt{t}+1)\frac{17(1+\delta_{ts}^{ub})}{8a(s,ts;\alpha)\rho_{ts}}+1\Bigg)
-\frac{\sqrt{s}}{\alpha}\\
&=\frac{17(2\sqrt{t}+1)}{16\sqrt{t}a(s,ts;\alpha)\rho_{ts}}
\Bigg(1+\delta_{ts}^{ub}-
\bigg(\frac{2\sqrt{ts}}{\alpha}-1\bigg)\frac{8}{17(2\sqrt{t}+1)}a(s,ts;\alpha)\rho_{ts}\Bigg)\\
&=\frac{17(2\sqrt{t}+1)}{16\sqrt{t}a(s,ts;\alpha)\rho_{ts}}
\Bigg(1+\delta_{ts}^{ub}-a(s,t;\alpha)b(s,ts;\alpha)\rho_{ts}\Bigg),
\end{align*}
where
$$
b(t,ts;\alpha)=\bigg(\frac{2\sqrt{ts}}{\alpha}-1\bigg)\frac{8}{17(2\sqrt{t}+1)}
=\frac{8(2\sqrt{ts}-\alpha)}{17\alpha(2\sqrt{t}+1)}.
$$
Then, by
$$
\rho_{ts}=1-\delta_{(t+1)s}^{lb}-\frac{(1+\delta_{ts}^{ub})}{a(s,ts;\alpha)}
$$
one has that
\begin{align*}
\tau-\frac{\sqrt{s}}{\alpha}&=\frac{17(2\sqrt{t}+1)}{16\sqrt{t}a(s,t;\alpha)\rho_{ts}}
\bigg((b(t,ts;\alpha)+1)\delta_{ts}^{ub}+a(s,ts;\alpha)b(t,ts;\alpha)\delta_{(t+1)s}^{lb}\\
&-\Big(a(s,ts;\alpha)b(t,ts;\alpha)-b(t,ts;\alpha)-1\Big)\bigg)<0.
\end{align*}
where the inequality is due to \eqref{RIPCondition1}.
 Therefore,  combing with \eqref{ge:h} and the fact $\tau-\frac{\sqrt{s}}{\alpha}<0$, one has that
\begin{align*}
\|{\bm  h}\|_2&\leq\frac{\sqrt{s}\tau}{\sqrt{s}-\alpha\tau}\frac{2\|\bm{x}_{-\max(s)}\|_1}{\sqrt{s}}\\
&\quad+\frac{2(2\sqrt{t}+1)\big((1+\delta_{ts}^{ub})+a(s,ts;\alpha)\rho_{ts}\big)ms}
{\sqrt{t}(\sqrt{s}-\alpha\tau)(1+\delta_{ts}^{ub})\rho_{ts}^2}\eta,
\end{align*}
which finishes the proof of Theorem \ref{StableRecoveryviaVectorL1-alphaL2-DS}.

\end{proof}

\section{Stable Recovery  Under (L2,~L2)-RIP Frame}\label{s3}

In the section, we develop sufficient conditions
based on the high order  $(\ell_2,\ell_2)$-RIP frame of the
$\ell_{1}-\alpha \ell_2$-DS \eqref{VectorL1-alphaL2-DS}
for the signal recovery applying the technique of the convex combination.

\subsection{Auxiliary Lemmas Under (L2,~L2)-RIP Frame}\label{s3.1}
We first give two auxiliary results under $(\ell_2,\ell_2)$-RIP frame.
The following lemma describes
   a convex combination of sparse vectors for any point based on $\|\cdot\|_1-\|\cdot\|_2$.
   It is developed for  the analysis of the constrained $\ell_{1}-\alpha \ell_2$ minimization and establish improved
high order RIP conditions  for the  signal  recovery.

\begin{lemma}\label{e:convexl1-2}\cite[Lemma2.2]{ge2021new}
Let a vector $\bm {\nu}\in \mathbb{R}^n$ satisfy $\|\bm {\nu}\|_{\infty}\leq \theta$, where $\alpha$ is a positive constant.
Suppose $\|\bm {\nu}\|_{1-2}\leq (s-\sqrt{s})\theta$
 with   a positive integer $s$ and $s\leq |\mathrm{supp}(\bm {\nu})|$.  Then $\bm {\nu}$ can be represented as a convex
combination of $s$-sparse vectors $\bm {u}^{(i)}$, i.e.,
\begin{align}\label{e:com}
\bm {\nu}=\sum_{i=1}^N\lambda_i\bm {u}^{(i)},
\end{align}
where $N$ is a positive integer,
 \begin{align}\label{e:thetaisum}
 0<\lambda_i\leq1,\ \ \ \  \sum_{i=1}^N\lambda_i=1,
 \end{align}
\begin{align}\label{e:setS}
\mathrm{supp}(\bm {u}^{(i)})\subseteq \mathrm{supp}(\bm {\nu}), \ \ \|\bm {u}^{(i)}\|_0\leq {s},\ \ \|\bm {u}^{(i)}\|_{\infty}\leq \Big(1+\frac{\sqrt{2}}{2}\Big)\theta,
\end{align}
and
\begin{align}\label{e:newinequality}
\sum_{i=1}^N\lambda_i\|\bm {u}^{(i)}\|_2^2
 \leq \Big[\Big(1+\frac{\sqrt{2}}{2}\Big)^2(s-\sqrt{s})+1\Big]\theta^2.
\end{align}
\end{lemma}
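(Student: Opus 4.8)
The plan is to treat this as a Cai--Zhang-type sparse-representation (polytope) result, adapted to the $\ell_1-\ell_2$ metric. The governing idea is that the nonconvex hypothesis $\|\bm{\nu}\|_{1-2}\le(s-\sqrt{s})\theta$, \emph{together with} the box constraint $\|\bm{\nu}\|_\infty\le\theta$, can be converted into an ordinary $\ell_1$ budget, after which a standard convex-decomposition lemma applies. Throughout I would assume $\theta>0$ (if $\theta=0$ then $\bm{\nu}=\bm 0$, which is excluded by $s\le|\mathrm{supp}(\bm{\nu})|$), and I would dispose of the case where $\bm{\nu}$ is already $s$-sparse by taking $N=1$, $\bm{u}^{(1)}=\bm{\nu}$.

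The key first step is to show that the hypotheses force $\|\bm{\nu}\|_1\le s\theta$. Indeed, the elementary interpolation $\|\bm{\nu}\|_2^2=\sum_j\nu_j^2\le\|\bm{\nu}\|_\infty\|\bm{\nu}\|_1\le\theta\|\bm{\nu}\|_1$ gives $\|\bm{\nu}\|_2\le\sqrt{\theta\|\bm{\nu}\|_1}$, so that with $L:=\|\bm{\nu}\|_1/\theta$ the hypothesis $\|\bm{\nu}\|_1-\|\bm{\nu}\|_2\le(s-\sqrt{s})\theta$ becomes the scalar inequality $L-\sqrt{L}\le s-\sqrt{s}$. Since $x\mapsto x-\sqrt{x}$ is increasing on $[1/4,\infty)$ and $s\ge 1$ (in the only nontrivial range $L>1$ both points lie there), this yields $L\le s$, i.e. $\|\bm{\nu}\|_1\le s\theta$. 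The reason the bound comes out exactly $s$ is the identity $1+4(s-\sqrt{s})=(2\sqrt{s}-1)^2$, which makes the relevant discriminant a perfect square.

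With $\|\bm{\nu}\|_\infty\le\theta$, $\|\bm{\nu}\|_1\le s\theta$ and $s$ a positive integer in hand, I would invoke the Cai--Zhang convex-combination lemma for the polytope $\{\bm v:\|\bm v\|_\infty\le\theta,\ \|\bm v\|_1\le s\theta\}$: it expresses $\bm{\nu}$ as $\sum_{i=1}^N\lambda_i\bm{u}^{(i)}$ with $0<\lambda_i\le 1$, $\sum_i\lambda_i=1$, each $\bm{u}^{(i)}$ supported in $\mathrm{supp}(\bm{\nu})$, $\|\bm{u}^{(i)}\|_0\le s$ and $\|\bm{u}^{(i)}\|_\infty\le\theta\le(1+\tfrac{\sqrt2}{2})\theta$. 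This settles \eqref{e:com}, \eqref{e:thetaisum} and \eqref{e:setS} at once. For the quadratic bound \eqref{e:newinequality}, I would use $\|\bm{u}^{(i)}\|_2^2\le\|\bm{u}^{(i)}\|_\infty\|\bm{u}^{(i)}\|_1\le\theta\|\bm{u}^{(i)}\|_1$ together with the sign-alignment (no-cancellation) built into the construction, namely $\sum_i\lambda_i|u^{(i)}_j|=|\nu_j|$ for every $j$, whence $\sum_i\lambda_i\|\bm{u}^{(i)}\|_1=\|\bm{\nu}\|_1$. Combining these gives $\sum_i\lambda_i\|\bm{u}^{(i)}\|_2^2\le\theta\|\bm{\nu}\|_1\le s\theta^2$, and a short check shows $s\le(1+\tfrac{\sqrt2}{2})^2(s-\sqrt{s})+1$ for every integer $s\ge 1$ (with equality at $s=1$), so \eqref{e:newinequality} follows.

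The main obstacle is twofold. First, one must notice and prove the $\ell_1$-budget reduction $\|\bm{\nu}\|_1\le s\theta$; this is the conceptual heart, and it is exactly where the $\ell_\infty$ hypothesis and the $\ell_1-\ell_2$ hypothesis cooperate through the perfect-square discriminant. Second, the quadratic estimate \eqref{e:newinequality} is the only place where the \emph{internal structure} of the decomposition is used rather than the polytope lemma as a black box: it requires that the pieces $\bm{u}^{(i)}$ inherit the signs of $\bm{\nu}$ coordinatewise, so that no cancellation occurs in $\sum_i\lambda_i\|\bm{u}^{(i)}\|_1$. If one preferred to reprove the decomposition from scratch (e.g.\ by greedily peeling off $s$-sparse same-sign sub-vectors) rather than cite Cai--Zhang, the slack factor $1+\tfrac{\sqrt2}{2}$ on $\|\bm{u}^{(i)}\|_\infty$ is precisely what absorbs a possible overshoot of $\theta$ on one coordinate during the peeling, so I would keep that looser bound in reserve even though the reduction above does not need it.
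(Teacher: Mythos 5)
Your proposal is correct, but there is nothing in this paper to compare it against line by line: the paper never proves Lemma \ref{e:convexl1-2} itself, it imports it verbatim from \cite[Lemma 2.2]{ge2021new} and uses it purely as a black box inside the proof of Lemma \ref{lem:com}. Relative to the original argument in that reference --- a direct, from-scratch construction of the sparse pieces adapted to the $\ell_1-\ell_2$ constraint, which is where the inflation factor $1+\frac{\sqrt{2}}{2}$ originates --- your route is genuinely different and in fact yields a sharper statement. Your key reduction $\|\bm{\nu}\|_1\le s\theta$ is airtight: $\|\bm{\nu}\|_2\le\sqrt{\theta\|\bm{\nu}\|_1}$ turns the hypothesis into $L-\sqrt{L}\le s-\sqrt{s}$ for $L=\|\bm{\nu}\|_1/\theta$, and strict monotonicity of $x\mapsto x-\sqrt{x}$ on $[1/4,\infty)$ (equivalently, your perfect-square discriminant $1+4(s-\sqrt{s})=(2\sqrt{s}-1)^2$) forces $L\le s$. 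The Cai--Zhang sparse-representation-of-a-polytope lemma (IEEE Trans.\ Inform.\ Theory, 2014) applied to $\{\bm{v}:\|\bm{v}\|_\infty\le\theta,\ \|\bm{v}\|_1\le s\theta\}$ then delivers \eqref{e:com}, \eqref{e:thetaisum} and \eqref{e:setS} with the stronger bound $\|\bm{u}^{(i)}\|_\infty\le\theta$, and, since each piece is $s$-sparse, also $\sum_i\lambda_i\|\bm{u}^{(i)}\|_2^2\le s\theta^2$, which implies \eqref{e:newinequality} because $s\le\big(1+\frac{\sqrt{2}}{2}\big)^2(s-\sqrt{s})+1$ for every integer $s\ge1$ (equality at $s=1$). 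What each approach buys: the construction in \cite{ge2021new} is self-contained but pays the $(1+\frac{\sqrt{2}}{2})$ slack; yours outsources the combinatorics to a standard lemma and removes the slack entirely --- propagated through Lemma \ref{lem:com}, it would even weaken the RIP condition \eqref{RIPCondition2} of Theorem \ref{New-StableRecoveryviaVectorL1-alphaL2-DS}. Two small inaccuracies in your own accounting, neither of which damages the proof: the quadratic bound \eqref{e:newinequality} does not actually need any sign-alignment or no-cancellation property of the decomposition, since $\|\bm{u}^{(i)}\|_2^2\le\|\bm{u}^{(i)}\|_0\,\|\bm{u}^{(i)}\|_\infty^2\le s\theta^2$ holds termwise and convex averaging preserves it; and the $\ell_1$-preservation $\|\bm{u}^{(i)}\|_1=\|\bm{\nu}\|_1$ you invoke is part of the \emph{statement} of the Cai--Zhang lemma, not something that must be extracted from its proof. (You also never use the hypothesis $s\le|\mathrm{supp}(\bm{\nu})|$, which is fine --- the conclusion holds without it.)
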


Combining the above lemma with Lemma \ref{ConeconstraintinequalityforL1-L2},
we next introduce the following new lemma which will
play a crucial role in establishing the recovery condition based on the $L_2$-RIP frame.

\begin{lemma}\label{lem:com}
Assume that $\|\hat{\bm { x}}\|_{\alpha,1-2}\leq\|\bm { x}\|_{\alpha,1-2}$.
Let $\bm { h}=\hat{\bm { x}}-\bm { x}$ and
\begin{align}\label{e:chi}
\chi=
\frac{\sqrt{s}+\alpha}{\sqrt{s}-1}\frac{\|{\bm  h}_{\max(s)}\|_2}{\sqrt{s}}+\frac{2}{s-\sqrt{s}}\|{\bm  x}_{-\max(s)}\|_1
\end{align}
where $s\geq 2$ is a positive integer. And
define  two index sets
\begin{align}
\label{e:W1}
W_1=\left\{i:|\bm {h}_{-\max(s)}(i)|>\frac{\chi}{t-1}\right\},
\end{align}
and
\begin{align}
W_2=\left\{i:|\bm {h}_{-\max(s)}(i)|\leq\frac{\chi}{t-1}\right\},
\label{e:W2}
 \end{align}
where $t=2$ or $t\geq 3$.
 Then the vector  $\bm {h}_{W_2}$ can be represented as a convex
combination of $\lceil ts \rceil-s-W_1$-sparse vectors $\bm {u}^{(i)}$, i.e.,
\begin{align}\label{e:SRnu}
\bm {h}_{W_2}=\sum_{i=1}^N\lambda_i\bm {u}^{(i)},
\end{align}
where $N$ is a positive integer.
And
\begin{align}\label{e:ge3}
\sum_{i=1}^N\lambda_i\|\bm {u}^{(i)}\|_2^2
\leq \frac{\Big(1+\frac{\sqrt{2}}{2}\Big)^2\Big(\lceil ts\rceil-s-\sqrt{\lceil ts\rceil-s}\Big)+1}{(t-1)^2}\chi^2.
\end{align}
\end{lemma}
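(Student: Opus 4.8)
The plan is to reduce the statement to a single application of the convex-combination Lemma \ref{e:convexl1-2}, applied to the tail vector $\bm\nu=\bm h_{W_2}$ with window level $\theta=\chi/(t-1)$ and target sparsity $s_0=\lceil ts\rceil-s-|W_1|$. The first task is to record the master $\ell_1-\ell_2$ budget on the tail. Starting from the refined cone inequality \eqref{e:h-maxsupperbound2} of Lemma \ref{ConeconstraintinequalityforL1-L2}, I bound $\|\bm h_{\max(s)}\|_1\le\sqrt s\,\|\bm h_{\max(s)}\|_2$ on its right-hand side. The definition \eqref{e:chi} of $\chi$ is engineered precisely so that $(s-\sqrt s)\chi=(\sqrt s+\alpha)\|\bm h_{\max(s)}\|_2+2\|\bm x_{-\max(s)}\|_1$, which is exactly the resulting right-hand side; hence $\|\bm h_{-\max(s)}\|_{\alpha,1-2}\le(s-\sqrt s)\chi$, and since $0<\alpha\le1$ this yields the clean bound $\|\bm h_{-\max(s)}\|_{1-2}\le(s-\sqrt s)\chi$.

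Next I would verify the hypotheses of Lemma \ref{e:convexl1-2} for $\bm h_{W_2}$. The $\ell_\infty$ bound $\|\bm h_{W_2}\|_\infty\le\theta=\chi/(t-1)$ is immediate from the definition \eqref{e:W2} of $W_2$ (the entries of $\bm h_{-\max(s)}$ retained in $W_2$ are by construction at most $\chi/(t-1)$, and the top-$s$ coordinates contribute nothing to the tail). The support condition $s_0\le|\mathrm{supp}(\bm h_{W_2})|$ is a harmless dichotomy: if it fails, then $\bm h_{W_2}$ is already $s_0$-sparse and one takes the trivial one-term combination $N=1$, $\lambda_1=1$, $\bm u^{(1)}=\bm h_{W_2}$, for which \eqref{e:ge3} is checked directly; otherwise one proceeds.

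The crux is the $\ell_1-\ell_2$ budget $\|\bm h_{W_2}\|_{1-2}\le(s_0-\sqrt{s_0})\theta$. Writing $\bm h_{-\max(s)}=\bm h_{W_1}+\bm h_{W_2}$ with disjoint supports and using the superadditivity in Lemma \ref{LocalEstimateL1-L2}(b), I obtain $\|\bm h_{W_2}\|_{1-2}\le(s-\sqrt s)\chi-\|\bm h_{W_1}\|_{1-2}$; since every entry on $W_1$ exceeds $\theta$, Lemma \ref{LocalEstimateL1-L2}(a) gives $\|\bm h_{W_1}\|_{1-2}\ge(|W_1|-\sqrt{|W_1|})\theta$, and the same estimate bounds $|W_1|$ via $|W_1|-\sqrt{|W_1|}\le(t-1)(s-\sqrt s)$. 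Substituting $\chi=(t-1)\theta$ collapses the required inequality to the elementary sparsity-budget relation $(t-1)(s-\sqrt s)\le g(s_0)+g(|W_1|)$, where $g(d)=d-\sqrt d$ and $s_0+|W_1|=\lceil ts\rceil-s$. I expect this budget inequality to be the main obstacle: here the precise window level $\chi/(t-1)$ and the split into the regimes $t=2$ and $t\ge3$ must be exploited, balancing the gain $g(|W_1|)$ from the large coordinates against the loss from shrinking the admissible sparsity to $s_0$ (note this reduced form is strictly harder than its counterpart $(t-1)(s-\sqrt s)\le g(\lceil ts\rceil-s)$, since $g$ is superadditive). That counterpart, in turn, holds for every $t\ge2$ because $\lceil ts\rceil-s\ge(t-1)s$ and $\sqrt{t-1}\le t-1$, and already suffices to produce \eqref{e:ge3} with the unreduced sparsity; the sharpened $s_0$-statement is exactly where the two-case analysis is genuinely needed.

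Finally, granting the budget, Lemma \ref{e:convexl1-2} delivers the convex representation \eqref{e:SRnu} by $s_0$-sparse vectors together with the second-moment bound \eqref{e:newinequality}, namely $\sum_{i=1}^N\lambda_i\|\bm u^{(i)}\|_2^2\le\big[(1+\tfrac{\sqrt2}{2})^2(s_0-\sqrt{s_0})+1\big]\theta^2$. Since $s_0=\lceil ts\rceil-s-|W_1|\le\lceil ts\rceil-s$ and the map $d\mapsto d-\sqrt d$ is nondecreasing on $[1,\infty)$, I replace $s_0-\sqrt{s_0}$ by $(\lceil ts\rceil-s)-\sqrt{\lceil ts\rceil-s}$ and $\theta^2$ by $\chi^2/(t-1)^2$, which produces exactly the claimed bound \eqref{e:ge3} and completes the proof.
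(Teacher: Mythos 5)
Your setup is correct and in fact follows the same route as the paper: the identity $(s-\sqrt{s})\chi=(\sqrt{s}+\alpha)\|\bm{h}_{\max(s)}\|_2+2\|\bm{x}_{-\max(s)}\|_1$ giving the master budget, the definitional bound $\|\bm{h}_{W_2}\|_\infty\leq\chi/(t-1)$, the superadditivity step $\|\bm{h}_{W_2}\|_{1-2}\leq(s-\sqrt{s})\chi-\|\bm{h}_{W_1}\|_{1-2}$ together with the lower bound $\|\bm{h}_{W_1}\|_{1-2}\geq(|W_1|-\sqrt{|W_1|})\chi/(t-1)$, and the final monotonicity step are all exactly what the paper does. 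But there is a genuine gap precisely at the point you yourself flag as ``the main obstacle'': the budget inequality $(t-1)(s-\sqrt{s})\leq g(s_0)+g(|W_1|)$, with $g(d)=d-\sqrt{d}$ and $s_0+|W_1|=\lceil ts\rceil-s$, is stated but never proved, and it is the mathematical content of the lemma. Your fallback observation --- that the unreduced counterpart $(t-1)(s-\sqrt{s})\leq g(\lceil ts\rceil-s)$ holds for all $t\geq 2$ --- does not prove the statement as written: it yields a decomposition into $(\lceil ts\rceil-s)$-sparse vectors, whereas the lemma asserts $(\lceil ts\rceil-s-|W_1|)$-sparsity, and this sharper sparsity is what is actually used later (in Theorem \ref{New-StableRecoveryviaVectorL1-alphaL2-DS} the vectors $\bm{h}_{\max(s)}+\bm{h}_{W_1}+\mu\bm{u}^{(i)}$ must be $\lceil ts\rceil$-sparse, which requires the support budget $s+|W_1|+s_0=\lceil ts\rceil$).

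To see why this step cannot be waved through, note that since $g$ is superadditive, $g(s_0)+g(|W_1|)\leq g(\lceil ts\rceil-s)$, so your target inequality is only true because of structural constraints on $|W_1|$; for $t=2$ it in fact forces $W_1=\emptyset$, since otherwise both sides would have to be equal with both terms positive. The paper closes the gap in two separate ways. For $t=2$ it proves $W_1=\emptyset$ via the tail estimate $\|\bm{h}_{-\max(s)}\|_\infty\leq\|\bm{h}_{\max(s)}\|_1/s\leq\|\bm{h}_{\max(s)}\|_2/\sqrt{s}\leq\chi$ (inequality \eqref{e:etainftynew}) --- an $\ell_\infty$ bound on the \emph{whole} tail, which your proposal never establishes (your $\ell_\infty$ bound concerns only the $W_2$ part, where it is definitional). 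For $t\geq 3$ it proves two facts: (i) $|W_1|<s(t-1)$, by comparing the lower bound $g(|W_1|)\chi/(t-1)\leq\|\bm{h}_{-\max(s)}\|_{1-2}$ against the upper bound $(s-\sqrt{s})\chi$ and using $\sqrt{s}(t-1)>\sqrt{s(t-1)}$; and (ii) $\sqrt{s(t-1)-|W_1|}\leq\sqrt{s}(t-1)-\sqrt{|W_1|}$, by checking that the quadratic $-2y^2+2\sqrt{s}(t-1)y+s(t-1)(2-t)$ in $y=\sqrt{|W_1|}$ has discriminant $4s(t-1)(3-t)\leq 0$ --- this is exactly where the hypothesis $t\geq 3$ enters. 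Fact (ii) rearranges to $(t-1)(s-\sqrt{s})\leq g(s(t-1)-|W_1|)+g(|W_1|)\leq g(s_0)+g(|W_1|)$, which is your inequality. Without these pieces your argument establishes only the weaker, unreduced-sparsity version of the lemma.
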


\begin{proof}
 From  $\|\hat{{\bm  x}}\|_{\alpha,1-2}\leq \|{\bm  x}\|_{\alpha,1-2}$ and  \eqref{e:h-maxsupperbound2} in Lemma \ref{ConeconstraintinequalityforL1-L2}, 
it follows that
\begin{align}\label{Coneconstraintinequality}
\|{\bm  h}_{-\max(s)}\|_1-\|{\bm  h}_{-\max(s)}\|_2
&\leq
\|{\bm  h}_{-\max(s)}\|_1-\alpha\|{\bm  h}_{-\max(s)}\|_2\nonumber\\
&\leq\|{\bm  h}_{\max(s)}\|_1+2\|{\bm  x}_{-\max(s)}\|_1
+\alpha\|{\bm  h}_{\max(s)}\|_2\nonumber\\
&\leq (\sqrt{s}+\alpha)\|{\bm  h}_{\max(s)}\|_2+2\|{\bm  x}_{-\max(s)}\|_1,
\end{align}

where the first inequality comes from $0<\alpha\leq 1$, and the last inequality is because of $\|{\bm  h}_{\max(s)}\|_1\leq \sqrt{s}\|{\bm  h}_{\max(s)}\|_2$.

 We  move to develop  a sparse decomposition for $\bm {h}_{-\max(s)}$ by  Lemma \ref{e:convexl1-2}.
Now, we derive that  $\bm {h}_{-\max(s)}$ satisfies  conditions in Lemma \ref{e:convexl1-2}.

 By \eqref{Coneconstraintinequality}, we have that
\begin{align}\label{e:l1-2upperbound}
\|{\bm  h}_{-\max(s)}\|_1-\|{\bm  h}_{-\max(s)}\|_2
&\leq
\|{\bm  h}_{-\max(s)}\|_1-\alpha\|{\bm  h}_{-\max(s)}\|_2\nonumber\\
\leq&(s-\sqrt{s})\bigg(
\frac{\sqrt{s}+\alpha}{\sqrt{s}-1}\frac{\|{\bm  h}_{\max(s)}\|_2}{\sqrt{s}}+\frac{2}{s-\sqrt{s}}\|{\bm  x}_{-\max(s)}\|_1\bigg)\nonumber\\
=&(s-\sqrt{s})\chi,
\end{align}
where the equality is from the definition of $\chi$ in \eqref{e:chi}.
Using the fact that  $\frac{\sqrt{s}+\alpha}{\sqrt{s}-1}>1$, one has that
\begin{align}\label{e:etainftynew}
\|\bm {h}_{-\max{(s)}}\|_{\infty}\leq \frac{\|\bm {h}_{\max{(s)}}\|_{1}}{s}\leq\frac{\|\bm {h}_{\max{(s)}}\|_{2}}{\sqrt{s}}\leq\frac{\sqrt{s}+\alpha}{\sqrt{s}-1}\frac{\|{\bm  h}_{\max(s)}\|_2}{\sqrt{s}}\leq \chi,
\end{align}
where the last inequality is due to the definition of $\chi$ in \eqref{e:chi}.

Next, we prove  results in the lemma. The following two cases need to be considered.

 $i)$ $t=2$:
 By \eqref{e:etainftynew} and the definition of $W_1$ in \eqref{e:W1}, it is clear that $W_1=\emptyset$.
Then proving  \eqref{e:SRnu} and \eqref{e:ge3} are respectively  equivalent to showing
\begin{align}\label{e:etamaxkresolve1}
 \bm {h}_{-\max(s)}=\sum_{i=1}^N\lambda_i\bm {u}^{(i)},
\end{align}
and
\begin{align}\label{e:ge1}
&\sum_{i=1}^N\lambda_i\|\bm {u}^{(i)}\|_2^2
\leq \Big(\Big(1+\frac{\sqrt{2}}{2}\Big)^2(s-\sqrt{s})+1\Big)\chi^2,
\end{align}
where $0<\lambda_i\leq1$, $\sum_{i=1}^N\lambda_i=1$, and $\|\bm {u}^{(i)}\|_0\leq s$.
By \eqref{e:l1-2upperbound}, \eqref{e:etainftynew} and Lemma \ref{e:convexl1-2},
it is clear that \eqref{e:etamaxkresolve1} and \eqref{e:ge1} hold.
 Thus we completes the proofs of \eqref{e:SRnu} and \eqref{e:ge3} for $t=2$.

$ii)$ $t\geq 3$: We prove \eqref{e:SRnu} and \eqref{e:ge3} by applying
Lemma \ref{e:convexl1-2}.
Then, we first  consider upper bounds of  $\|\bm {h}_{W_2}\|_\infty$ and $\|\bm {h}_{W_2}\|_1-\|\bm {h}_{W_2}\|_2$.
By definitions of  $W_1$  and $W_2$, one has
\begin{align*}
W_1\cap W_2=\emptyset,\ \ \ \ \bm {h}_{-\max(s)}=\bm {h}_{W_1}+\bm {h}_{W_2}
\end{align*}
and
\begin{align}\label{e:etaW2infty}
\|\bm {h}_{W_2}\|_{\infty}\leq\frac{\chi}{t-1}.
\end{align}

 We next establish  a upper bound of $\|\bm {h}_{W_2}\|_1-\|\bm {h}_{W_2}\|_2$.
From  $\bm {h}_{-\max(s)}=\bm {h}_{W_1}+\bm {h}_{W_2}$, it follows that
\begin{align}\label{e:etaW2}
\|\bm {h}_{W_2}\|_{1}-\|\bm {h}_{W_2}\|_{2}
=&\|\bm {h}_{-\max{(s)}}-\bm {h}_{W_1}\|_{1}
-\|\bm {h}_{-\max{(s)}}-\bm {h}_{W_1}\|_{2}\nonumber\\
\overset{(a)}{=}&\|\bm {h}_{-\max{(s)}}\|_1-\|\bm {h}_{W_1}\|_{1}
-\|\bm {h}_{-\max{(s)}}-\bm {h}_{W_1}\|_{2}\nonumber\\
\overset{(b)}{\leq}&\|\bm {h}_{-\max{(s)}}\|_1-\|\bm {h}_{-\max{(s)}}\|_2
-(\|\bm {h}_{W_1}\|_{1}-\|\bm {h}_{W_1}\|_{2})\nonumber\\
\overset{(c)}{\leq}&(s-\sqrt{s})\chi-(\|\bm {h}_{W_1}\|_{1}-\|\bm {h}_{W_1}\|_{2}),
\end{align}
where $(a)$, $(b)$ and $(c)$  follow from  $W_1\subseteq \mathrm{supp}(\bm {h}_{-\max(s)})$,  the triangle inequality on $\|\cdot\|_2$, and \eqref{e:l1-2upperbound}, respectively.
 Then, establishing  a upper bound of $\|\bm {h}_{W_2}\|_1-\|\bm {h}_{W_2}\|_2$ is equivalent to showing
 the lower bound of $\|\bm {h}_{W_1}\|_{1}-\|\bm {h}_{W_1}\|_{2}$.
 By Lemma \ref{LocalEstimateL1-L2} (b) with $S_1=W_1$ and $S_2=W_2$, we have that
\begin{align}\label{e:eta-maxkuplowbound}
\|\bm {h}_{-\max(s)}\|_{1}-\|\bm {h}_{-\max(s)}\|_{2}
\geq&\big(\|\bm {h}_{W_1}\|_{1}-\|\bm {h}_{W_1}\|_{2}\big)
+\big(\|\bm {h}_{W_2}\|_{1}-\|\bm {h}_{W_2}\|_{2}\big)\nonumber\\
\geq&\big(\|\bm {h}_{W_1}\|_{1}-\|\bm {h}_{W_1}\|_{2}\big)\nonumber\\
\overset{(1)}{\geq}&(|W_1|-\sqrt{|W_1|})\min_{i\in W_1}|\bm {h}_{W_1}(i)|\nonumber\\
\overset{(2)}{\geq}&(|W_1|-\sqrt{|W_1|})\frac{\chi}{t-1},
\end{align}
where $(1)$ and $(2)$ follow from \eqref{e:l12alphas} in Lemma \ref{LocalEstimateL1-L2} (a) and  the definition of $W_1$ in \eqref{e:W1},
respectively.
Combining the above   inequality with
\eqref{e:etaW2},  we have
\begin{align}\label{e:l1-2upperbounds}
\|\bm {h}_{W_2}\|_{1}-\|\bm {h}_{W_2}\|_{2}\leq \Big((s(t-1)-|W_1|)-
(\sqrt{s}(t-1)-\sqrt{|W_1|})\Big)\frac{\chi}{t-1}.
\end{align}

Inspired by Lemma \ref{e:convexl1-2}, \eqref{e:etaW2infty} and \eqref{e:l1-2upperbounds},
we  explore that
\begin{align}\label{e:etaW2upperbound}
\|\bm {h}_{W_2}\|_{1}-\|\bm {h}_{W_2}\|_{2}\leq
\Big(s(t-1)-|W_1|-\sqrt{s(t-1)-|W_1|}\Big)\frac{\chi}{t-1}.
\end{align}
In fact, based on \eqref{e:l1-2upperbounds}, \eqref{e:etaW2upperbound} follows from
\begin{itemize}
\item[(i)] $|W_1|< s(t-1)$,
\item[(ii)]
$\sqrt{s(t-1)-|W_1|}\leq \sqrt{s}(t-1)-\sqrt{|W_1|}$
\end{itemize}
for   $t\geq3$ and $s\geq 2$.  We first prove item (i).
Observe that
\begin{align*}
(|W_1|-\sqrt{|W_1|})\frac{\chi}{t-1}
&\leq\|\bm {h}_{-\max(s)}\|_{1}-\|\bm {h}_{-\max(s)}\|_{2}\nonumber\\
&\leq\Big(s(t-1)-\sqrt{s}(t-1)\Big)\frac{\chi}{t-1}
\end{align*}
where the equalities follow from \eqref{e:eta-maxkuplowbound} and
\eqref{e:l1-2upperbound}, respectively.
Furthermore, since $t\geq 3$, which  implies
$\sqrt{s}(t-1)> \sqrt{s(t-1)}$, it is clear that
$$|W_1|-\sqrt{|W_1|}< s(t-1)-\sqrt{s(t-1)},$$
meaning
\begin{align}\label{e:W1upperbound}
|W_1|<s(t-1),
\end{align}
since the function $g(x)=x-\sqrt{x}$ is increasing  monotony for $x\geq1$ when $|W_1|\geq 1$,
and $s(t-1)\geq 2s \geq4$ using $t\geq3$ and $s\geq2$ when $|W_1|=0$.

Now, we  turn to show item (ii). We observe that the second-order function
\begin{align*}
f(\sqrt{|W_1|})=&(\sqrt{s(t-1)-|W_1|})^2-(\sqrt{s}(t-1)-\sqrt{|W_1|})^2\\
=&-2|W_1|+2\sqrt{s|W_1|}(t-1)+s(t-1)(2-t)\leq 0,
\end{align*}
which implies item (ii), i.e., $\sqrt{s(t-1)-|W_1|}\leq \sqrt{s}(t-1)-\sqrt{|W_1|}$,
 since
its discriminant $\Delta=4s(t-1)^2+8s(t-1)(2-t)=4s(t-1)(-t+3)\leq 0$
under $t\geq3$.

Then, from \eqref{e:etaW2infty}, \eqref{e:etaW2upperbound} and Lemma \ref{e:convexl1-2} with  $\theta=\frac{\chi}{t-1}$,
$\bm {\nu}=\bm {h}_{W_2}$ and $s=\lceil ts\rceil-s-|W_1|$,
where $s\geq 1$ since $W_1>s(t-1)$,
it follows that
\begin{align}\label{e:etamaxkresolvehigh}
 \bm {h}_{W_2}=\sum_{i=1}^N\lambda_i\bm {u}^{(i)},
\end{align}
where $N$ is certain positive integer,  $0<\lambda_i\leq1$ with $\sum_{i=1}^N\lambda_i=1$.
And using \eqref{e:newinequality}in Lemma \ref{e:convexl1-2}, we have
\begin{align}\label{e:omegal2high}
&\sum_{i=1}^N\lambda_i\|\bm {u}^{(i)}\|_2^2
\leq \Big[\Big(1+\frac{\sqrt{2}}{2}\Big)^2(\lceil ts\rceil-s-|W_1|-\sqrt{\lceil ts\rceil-s-|W_1|})+1\Big]\Big(\frac{\chi}{t-1}\Big)^2\nonumber\\
&\leq \Big[\Big(1+\frac{\sqrt{2}}{2}\Big)^2(\lceil ts\rceil-s-\sqrt{\lceil ts\rceil-s})+1\Big]\Big(\frac{\chi}{t-1}\Big)^2\nonumber\\
&=\frac{\Big(1+\frac{\sqrt{2}}{2}\Big)^2\Big(\lceil ts\rceil-s-\sqrt{\lceil ts\rceil-s}\Big)+1}{(t-1)^2}\chi^2,
\end{align}
where the second  inequality follows from  the decreasing monotony of the function
$g(x)=x-\sqrt{x}$ for $x\geq 1$ and $\lceil ts\rceil-s-|W_1|\geq 1$.
We complete the proof.
\end{proof}

\subsection{Main Result Under (L2,~L2)-RIP Frame }\label{s3.2}

Now, we show the stable recovery under $(\ell_2,\ell_2)$-RIP frame.

\begin{theorem}\label{New-StableRecoveryviaVectorL1-alphaL2-DS}
Consider $\bm { b}=\bm { Ax}+\bm {e}$ with $\|\bm { A}^{\top}\bm {e}\|_\infty\leq \eta$.
Let $\hat{\bm {x}}^{DS}$ be the minimizer
of the  $\ell_{1}-\alpha \ell_{2}$-DS (\ref{VectorL1-alphaL2-DS}). If the  measurement matrix $\bm {A}$ satisfies
\begin{equation}\label{RIPCondition2}
\delta_{ts}<\frac{1}{\sqrt{\frac{(\sqrt{s}+\alpha)^2\Big(\Big(1+\frac{\sqrt{2}}{2}\Big)^2( ts-s-\sqrt{ts-s})+1\Big)}{s(t-1)^2(\sqrt{s}-1)^2}+1}},
\end{equation}
for some  $t\geq3$ or $t=2$, where $s\geq 2$ is an  positive integer.
Then
\begin{align*}
\|\hat{{\bm x}}^{DS}-{\bm x}\|_2
\leq&\Bigg(1+\sqrt{\frac{{\alpha}+\sqrt{s}}{\sqrt{s}}+\frac{\alpha^2}{4s}}
+\frac{\alpha+\sqrt{2}}{2\sqrt{s}}\Bigg)\frac{2(1-\mu)\mu{\sqrt{\lceil ts\rceil}}}{\mu-\mu^2-(\mu-1)^2\delta_{ts}}\eta\nonumber\\
&+{\Bigg[\bigg(\sqrt{s}+\sqrt{{\alpha\sqrt{s}}+s+\frac{\alpha^2}{4}}
+\frac{\alpha+\sqrt{2}}{2}\bigg)
\Bigg(\frac{2\delta_{ts}(1-2\mu)}
{(\sqrt{s}+\alpha)\big(\mu-\mu^2-(\mu-1)^2\delta_{ts}\big)}}\nonumber\\
&+
{\sqrt{\frac{2\delta_{ts}(1-2\mu)}{(\sqrt{s}+\alpha)^2(\mu-\mu^2-(\mu-1)^2\delta_{ts})}}\Bigg)
+\frac{\sqrt{2s}}{2}\Bigg]
\frac{\|\bm {x}_{-\max(s)}\|_1}{\sqrt{s}}},
\end{align*}
where
\begin{align}
\label{e:mu}
\mu=\frac{1}{\sqrt{1+\frac{(\sqrt{s}+\alpha)^2\bigg(\Big(1+\frac{\sqrt{2}}{2}\Big)^2\big(\lceil ts\rceil-s-\sqrt{\lceil ts\rceil-s}\big)+1\bigg)}{s(t-1)^2(\sqrt{s}-1)^2}}+1}.
\end{align}
\end{theorem}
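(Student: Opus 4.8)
The plan is to follow the convex-combination (``sparse representation of a polytope'') route, exactly the setting for which Lemma~\ref{lem:com} was prepared, and to reduce everything to a single scalar inequality in $\|\bm{h}_{\max(s)}\|_2$ that is then resolved by the choice of $\mu$ in \eqref{e:mu}. First I would set $\bm{h}=\hat{\bm{x}}^{DS}-\bm{x}$ and record the two structural constraints. Optimality of $\hat{\bm{x}}^{DS}$ gives $\|\hat{\bm{x}}^{DS}\|_{\alpha,1-2}\le\|\bm{x}\|_{\alpha,1-2}$, so Lemma~\ref{ConeconstraintinequalityforL1-L2} applies; together with $\|\bm{h}_{\max(s)}\|_1\le\sqrt{s}\,\|\bm{h}_{\max(s)}\|_2$ this yields the cone bound already written in \eqref{Coneconstraintinequality} and fixes the quantity $\chi$ of \eqref{e:chi}. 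Since both $\hat{\bm{x}}^{DS}$ and $\bm{x}$ are feasible for the $\ell_1-\alpha\ell_2$-DS, the triangle inequality gives the tube constraint $\|\bm{A}^\top\bm{A}\bm{h}\|_\infty\le 2\eta$ exactly as in \eqref{Tubeconstraintinequality}.

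Next I would invoke Lemma~\ref{lem:com} to write the tail $\bm{h}_{W_2}=\sum_{i=1}^N\lambda_i\bm{u}^{(i)}$, where the atoms $\bm{u}^{(i)}$ are supported in $W_2$, are $(\lceil ts\rceil-s-|W_1|)$-sparse, and satisfy the energy bound \eqref{e:ge3}, i.e. $\sum_i\lambda_i\|\bm{u}^{(i)}\|_2^2\le C\chi^2$ with $C=\tfrac{(1+\frac{\sqrt2}{2})^2(\lceil ts\rceil-s-\sqrt{\lceil ts\rceil-s})+1}{(t-1)^2}$. Writing $\bm{h}_1=\bm{h}_{T_0\cup W_1}$ for the head (so $\bm{h}=\bm{h}_1+\bm{h}_{W_2}$ and $\sum_i\lambda_i(\bm{h}_1+\bm{u}^{(i)})=\bm{h}$), the point of Lemma~\ref{lem:com} is that each combined vector $\bm{h}_1+\bm{u}^{(i)}$ is $\lceil ts\rceil$-sparse with support disjoint from the remaining atoms; this is what makes the order-$ts$ $(\ell_2,\ell_2)$-RIP usable. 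I would also record that $|W_1|<s(t-1)$, so $\bm{h}_1$ is at most $ts$-sparse and $\|\bm{h}_1\|_1\le\sqrt{\lceil ts\rceil}\,\|\bm{h}_1\|_2$, which is the source of the $\sqrt{\lceil ts\rceil}$ in the final noise coefficient.

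The core step is the Cai--Zhang balancing. Introducing the free weight $\mu\in(0,1)$, I would form the order-$ts$ sparse vectors obtained by combining $\bm{h}_1$ with the weighted atoms $\bm{u}^{(i)}$, apply both directions of the RIP to them, average against $\lambda_i$, and feed in the tube constraint via $\langle\bm{A}\bm{h}_1,\bm{A}\bm{h}\rangle=\langle\bm{h}_1,\bm{A}^\top\bm{A}\bm{h}\rangle\le 2\eta\|\bm{h}_1\|_1$. After substituting \eqref{e:ge3} and the disjointness identity $\sum_i\lambda_i\|\bm{h}_1+\bm{u}^{(i)}\|_2^2=\|\bm{h}_1\|_2^2+\sum_i\lambda_i\|\bm{u}^{(i)}\|_2^2$, the indefinite cross terms collect into a coefficient proportional to $\mu-\mu^2-(\mu-1)^2\delta_{ts}$ multiplying $\|\bm{h}_{\max(s)}\|_2$, while the remaining pieces produce the $\eta$ and $\|\bm{x}_{-\max(s)}\|_1$ contributions. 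Choosing $\mu$ as in \eqref{e:mu} optimizes the admissible threshold: since $\tfrac{\mu}{1-\mu}=\big(1+\tfrac{(\sqrt s+\alpha)^2((1+\frac{\sqrt2}{2})^2(\lceil ts\rceil-s-\sqrt{\lceil ts\rceil-s})+1)}{s(t-1)^2(\sqrt s-1)^2}\big)^{-1/2}$, condition \eqref{RIPCondition2} is precisely $\delta_{ts}<\tfrac{\mu}{1-\mu}$, which is exactly what keeps $\mu-\mu^2-(\mu-1)^2\delta_{ts}>0$. Solving the scalar inequality then bounds $\|\bm{h}_{\max(s)}\|_2$ (equivalently $\|\bm{h}_1\|_2$) by the stated combination of $\eta$ and $\|\bm{x}_{-\max(s)}\|_1/\sqrt s$.

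Finally I would assemble the full error. Using $\|\bm{h}\|_2\le\|\bm{h}_{\max(s)}\|_2+\|\bm{h}_{-\max(s)}\|_2$, bounding $\|\bm{h}_{W_2}\|_2\le\sum_i\lambda_i\|\bm{u}^{(i)}\|_2\le\sqrt{C}\,\chi$ and controlling $\|\bm{h}_{W_1}\|_2$ and $\chi$ in terms of $\|\bm{h}_{\max(s)}\|_2$ and $\|\bm{x}_{-\max(s)}\|_1$ through \eqref{e:chi} and Lemma~\ref{ConeconstraintinequalityforL1-L2} (in the spirit of Lemma~\ref{hupperbound}), one converts the head bound into the stated bound on $\|\hat{\bm{x}}^{DS}-\bm{x}\|_2$. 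The radical prefactors then simplify as perfect squares, for instance $\sqrt{\tfrac{\alpha+\sqrt s}{\sqrt s}+\tfrac{\alpha^2}{4s}}=1+\tfrac{\alpha}{2\sqrt s}$ and $\sqrt{\alpha\sqrt s+s+\tfrac{\alpha^2}{4}}=\sqrt s+\tfrac{\alpha}{2}$, which is how the coefficients in the theorem acquire their displayed form. The main obstacle is the third step: getting the RIP estimates to collapse into exactly the coefficient $\mu-\mu^2-(\mu-1)^2\delta_{ts}$ and checking that the optimal $\mu$ turns \eqref{RIPCondition2} into positivity. The $\alpha$-dependent bookkeeping inherited from the $\ell_1-\alpha\ell_2$ cone inequality makes this noticeably more delicate than the pure-$\ell_1$ Cai--Zhang argument.
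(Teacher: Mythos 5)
Your proposal is correct and follows essentially the same route as the paper's proof: the cone and tube constraints, the convex decomposition of Lemma~\ref{lem:com}, the Cai--Zhang square identity with the weight $\mu$ whose optimal choice \eqref{e:mu} turns \eqref{RIPCondition2} into exactly $\delta_{ts}<\mu/(1-\mu)$ and hence into positivity of $\mu-\mu^2-(\mu-1)^2\delta_{ts}$, and a final tail-plus-head assembly (the paper bounds $\|\bm{h}_{-\max(s)}\|_2$ directly via $\|\bm{h}_{-\max(s)}\|_2^2\leq\|\bm{h}_{-\max(s)}\|_1\|\bm{h}_{-\max(s)}\|_\infty$ and the cone inequality rather than through the $W_1/W_2$ split, but this is only a bookkeeping difference). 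Your observation that the radicals are perfect squares, e.g.\ $\sqrt{\tfrac{\alpha+\sqrt{s}}{\sqrt{s}}+\tfrac{\alpha^2}{4s}}=1+\tfrac{\alpha}{2\sqrt{s}}$, is a correct simplification the paper does not exploit.
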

\begin{remark}
When $t=2$, the condition \eqref{RIPCondition2} 
reduces to
\begin{equation*}
\delta_{2s}<\frac{1}{\sqrt{1+\frac{(\sqrt{s}+{\alpha})^2\Big(\Big(1+\frac{\sqrt{2}}{2}\Big)^2( s-\sqrt{s})+1\Big)}{s(\sqrt{s}-1)^2}}}.
\end{equation*}
It is clearly  weaker than the following  condition in \cite[Theorem 3.4]{ge2021new}
\begin{equation*}
\delta_{2s}<\frac{1}{\sqrt{1+\frac{(\sqrt{s}+1)^2\Big(\Big(1+\frac{\sqrt{2}}{2}\Big)^2( s-\sqrt{s})+1\Big)}{s(\sqrt{s}-1)^2}}}
\end{equation*}
for the  $\ell_{1}-\alpha \ell_{2}$-DS \eqref{VectorL1-alphaL2-DS} with $\alpha=1$, because of $0<\alpha\leq 1$.
\end{remark}

\begin{remark}
When {$s\geq2$} and $\alpha=1$,
 by the condition \eqref{RIPCondition2}, 
one has
\begin{align}\label{RIPConditionge}
\delta_{ts}<&\frac{1}{\sqrt{1+\frac{(\sqrt{s}+1)^2\Big(\Big(1+\frac{\sqrt{2}}{2}\Big)^2( ts-s-\sqrt{ts-s})+1\Big)}{s(t-1)^2(\sqrt{s}-1)^2}}}\nonumber\\
=&\frac{1}{\sqrt{1+\frac{(\sqrt{s}+1)^2(2+\sqrt{2})^2}{2(\sqrt{s}-1)^2(t-1)}\cdot\frac{ts-s-\sqrt{ts-s}
+\Big(\frac{2}{2+\sqrt{2}}\Big)^2}{s(t-1)}}}
\end{align}
It is   weaker than \cite[the condition (3.30)]{ge2021new}
\begin{equation*}
\delta_{ts}<\frac{1}{\sqrt{1+\frac{(\sqrt{s}+1)^2(2+\sqrt{2})^2}{2(\sqrt{s}-1)^2(t-1)}}},
\end{equation*}
since $\sqrt{ts-s}>1$ and $\Big(\frac{2}{2+\sqrt{2}}\Big)^2<1$ implying  $\frac{ts-s-\sqrt{ts-s}+\Big(\frac{2}{2+\sqrt{2}}\Big)^2}{s(t-1)}<1$.
Furthermore, the condition \eqref{RIPCondition2} with $0<\alpha<1$ is  weaker than \eqref{RIPConditionge}.
\end{remark}

\begin{proof}
Our proof is motivated by the proof of \cite[Theorem 3.1]{ge2021new}.
Since  $\bm {\hat{x}}^{DS}$ is the minimizer  of \eqref{VectorL1-alphaL2-DS},
 $\|\bm { \hat{x}}^{DS}\|_{\alpha,1-2}\leq \|{\bm {x}}\|_{\alpha,1-2}$
and $\|\bm { A}^{\top}(\bm {b}-\bm {A}\hat{\bm { x}}^{DS})\|_\infty\leq \eta$.

Take $\bm {h}=\bm {\hat{ x}}^{DS}-\bm {x}$.
On the one hand,  using  $\|{\bm  A}^{\top}\bm { e}\|_\infty=\|\bm {A}^{\top}(\bm {b}-\bm {Ax})\|_\infty\leq \eta$
and $\|{\bm  A}^{\top}({\bm  b}-{\bm  A}\hat{{\bm  x}}^{DS})\|_\infty\leq \eta$,
we have the following tube constraint inequality
\begin{align}\label{e:Tubeconstraintinequality}
\|{\bm  A}^{\top}\bm  {Ah}\|_{\infty}
&=\|{\bm  A}^{\top}(\bm  A {\hat{{\bm  x}}}^{DS}-\bm { Ax)}\|_{\infty}\nonumber\\
&\leq\|\bm { A}^{\top}(\bm { A}\hat{\bm { x}}^{DS}-\bm  {b})\|_{\infty}+\|\bm { A}^{\top}(\bm { b}-\bm { Ax})\|_{\infty}\nonumber\\
&\leq\eta+\eta=2\eta.
\end{align}

On the other hand,
using $\|\bm { \hat{x}}^{DS}\|_{\alpha,1-2}\leq \|{\bm {x}}\|_{\alpha,1-2}$ and Lemma \ref{lem:com}, one has that
\begin{align}\label{ee:SRnu}
\bm {h}_{W_2}=\sum_{i=1}^N\lambda_i\bm {u}^{(i)},
\end{align}
and
\begin{align*}
\sum_{i=1}^N\lambda_i\|\bm {u}^{(i)}\|_2^2
\leq \frac{\Big(1+\frac{\sqrt{2}}{2}\Big)^2(\lceil ts\rceil-s-\sqrt{\lceil ts\rceil-s})+1}{(t-1)^2}\chi^2,
\end{align*}
where $0<\lambda_i\leq 1$, $\sum_{i=1}^N\lambda_i=1$,
$\bm {u}^{(i)}$ ($1 \leq i\leq N$) are  $(\lceil ts \rceil-s-W_1)$-sparse vectors and
$W_1, \ W_2, \chi $  are respectively defined in \eqref{e:W1}, \eqref{e:W2} and \eqref{e:chi}.
By the definition of $\chi$ in \eqref{e:chi}, we get
\begin{align}\label{e:l2uibound}
\sum_{i=1}^N\lambda_i\|\bm {u}^{(i)}\|_2^2
\leq&\frac{\Big(1+\frac{\sqrt{2}}{2}\Big)^2(\lceil ts\rceil-s-\sqrt{\lceil ts\rceil-s})+1}{(t-1)^2}\bigg( \frac{(\sqrt{s}+\alpha)^2}{s(\sqrt{s}-1)^2}\|{\bm  h}_{\max(s)}\|_2^2\nonumber\\
&+\frac{4(\sqrt{s}+\alpha)}{s(\sqrt{s}-1)^2}\|{\bm  h}_{\max(s)}\|_2\|{\bm  x}_{-\max(s)}\|_1+
\frac{4}{s(\sqrt{s}-1)^2}\|{\bm  x}_{-\max(s)}\|_1^2\bigg)\nonumber\\
=&\frac{1-2\mu}{\mu^2}\bigg(\|{\bm  h}_{\max(s)}\|_2^2+\frac{4\|{\bm  h}_{\max(s)}\|_2\|{\bm  x}_{-\max(s)}\|_1}{\sqrt{s}+\alpha} +\frac{4\|{\bm  x}_{-\max(s)}\|_1^2}{(\sqrt{s}+\alpha)^2}\bigg),
\end{align}
where the equality is from the definition of $\mu$ in \eqref{e:mu}.

Next, we develop the inequality on $\|\bm {h}_{\max(s)}+\bm {h}_{W_1}\|_2$ to estimate an upper bound of
$\|\bm {h}_{\max(s)}+\bm {h}_{W_1}\|_2$ by the following identity.
\begin{align}\label{e:identity}
&\sum_{i=1}^N\frac{\lambda_{i}}{4}
\|\bm {A}(\bm {h}_{\max{(s)}}+\bm {h}_{W_1}+\mu \bm {u}^{(i)})\|_{2}^{2}\nonumber\\
&=\sum\limits_{i=1}^{N}\lambda_{i}
\Big\|\bm {A}\Big((\frac{1}{2}-\mu)
(\bm {h}_{\max(s)}+\bm {h}_{W_1})-\frac{1}{2}\mu \bm {u}^{(i)}+\mu \bm {h}\Big)\Big\|_{2}^{2}.
\end{align}

We show an upper bound on the right-hand side (RHS) and a lower bound on the left-hand side (LHS) for the  identity
\eqref{e:identity}
  by following  techniques for deriving   \cite[the inequality (3.49) and (3.52)]{ge2021new}, respectively.
 They are not hard to check that
\begin{align*}
\text{LHS}
\geq&(1-\delta_{\lceil ts \rceil})
\sum_{i=1}^N\frac{\lambda_{i}}{4}
\|\bm {h}_{\max{(s)}}+\bm {h}_{W_1}+\mu \bm {u}^{(i)}\|_{2}^{2}\nonumber\\
=&\frac{1-\delta_{\lceil ts \rceil}}{4}
\Big(\|\bm {h}_{\max{(s)}}+\bm {h}_{W_1}\|_2^2+\mu^2\sum_{i=1}^N\lambda_{i} \|\bm {u}^{(i)}\|_{2}^{2}\Big),
\end{align*}
where the inequality is due to that $\bm {h}_{\max{(s)}}+\bm {h}_{W_1}+\mu \bm {u}^{(i)}$  is $ts$-sparse, and
the equality follows from  $\sum_{i=1}^N\lambda_i=1$, $\mathrm{supp}(\bm {u}^{(i)})\subseteq \mathrm{supp}(\bm {h}_{W_2})\subseteq \mathrm{supp}(\bm {h}_{-\max(s)})$ and $W_1\cap W_2=\emptyset$.
For \text{RHS} of  the identity  \eqref{e:identity}, we have that
\begin{align*}
&\text{RHS}\nonumber\\
&=\sum\limits_{i=1}^{N}\lambda_{i}
\Big\|\bm {A}\Big((\frac{1}{2}-\mu)(\bm {h}_{\max(s)}+\bm {h}_{W_1})
-\frac{1}{2}\mu \bm {u}^{(i)}\Big)\Big\|_{2}^{2}+(1-\mu)\mu\langle \bm {A} (\bm {h}_{\max(s)}+\bm {h}_{W_1}), \bm {A h}\rangle\nonumber\\
&\overset{(a)}{\leq}(1+\delta_{\lceil ts \rceil}) \sum\limits_{i=1}^{N}\lambda_{i}
\Big\|(\frac{1}{2}-\mu)(\bm {h}_{\max(s)}+\bm {h}_{W_1})
-\frac{1}{2}\mu \bm {u}^{(i)}\Big\|_{2}^{2}
+(1-\mu)\mu\langle \bm {h}_{\max(s)}+\bm {h}_{W_1},  \bm {A}^{\top}\bm {A h}\rangle\nonumber\\
&\overset{(b)}{\leq}(1+\delta_{\lceil ts \rceil})
\Big((\frac{1}{2}-\mu)^2\|\bm {h}_{\max(s)}+\bm {h}_{W_1}\|_2^2
+\frac{\mu^2}{4}\sum_{i=1}^N\lambda_i
\|\bm {u}^{(i)}\|_2^2\Big)\nonumber\\
&\hspace*{12pt}+2(1-\mu)\mu\sqrt{\lceil ts \rceil}\eta
\|\bm {h}_{\max(s)}+\bm {h}_{W_1}\|_2,
\end{align*}
where  $(a)$  is from that $\bm {A}$ satisfies the $ts$-order RIP and  $(\frac{1}{2}-\mu)(\bm {h}_{\max(s)}+\bm {h}_{W_1})
-\frac{1}{2}\mu \bm {u}^{(i)}$ is $ts$-sparse, and
$(b)$ is because of   $\sum_{i=1}^N\lambda_i=1$, $\mathrm{supp}(\bm {u}^{(i)})\subseteq \mathrm{supp}(\bm {h}_{W_2})\subseteq \mathrm{supp}(\bm {h}_{-\max(s)})$, $W_1\cap W_2=\emptyset$, the Cauchy-Schwarz inequality
with $|\text{supp}(\bm {h}_{\max(s)}+\bm {h}_{W_1})|=s+|W_1|$ and the fact $|W_1|< s(t-1)$ in \eqref{e:W1upperbound} and \eqref{Tubeconstraintinequality}.

Combining the identity \eqref{e:identity} with the above two  inequalities, we have
\begin{align*}
&\Big((1+\delta_{\lceil ts \rceil})(\frac{1}{2}-\mu)^2
-\frac{1-\delta_{\lceil ts \rceil}}{4}\Big)\|\bm {h}_{\max(s)}+\bm {h}_{W_1}\|_2^2
+\frac{\mu^2\delta_{\lceil ts \rceil}}{2}
\sum_{i=1}^N\lambda_i\|\bm {u}^{(i)}\|_2^2\\
&+2(1-\mu)\mu\sqrt{\lceil ts \rceil}\eta
\|\bm {h}_{\max(s)}+\bm {h}_{W_1}\|_2\geq 0.
\end{align*}

By applying \eqref{e:l2uibound}, the definition of $\chi$ in \eqref{e:chi} and $\|\bm {h}_{\max(s)}\|_2\leq \|\bm {h}_{\max(s)}+\bm {h}_{W_1}\|_2$,
the above inequality reduces to
\begin{align*}
&\Big[\mu^2-\mu+(\mu-1)^2\delta_{\lceil ts \rceil}\Big]
\|\bm {h}_{\max(s)}+\bm {h}_{W_1}\|_2^2+\Big[2(1-\mu)\mu\sqrt{\lceil ts \rceil}\eta
\nonumber\\&+\frac{2\delta_{\lceil ts \rceil}(1-2\mu)\|\bm {x}_{-\max(s)}\|_1}{\sqrt{s}+\alpha}\Big]\|\bm {h}_{\max(s)}+\bm {h}_{W_1}\|_2
+\frac{2\delta_{\lceil ts \rceil}(1-2\mu)}{(\sqrt{s}+\alpha)^2}\|\bm {x}_{-\max(s)}\|_1^2
\geq0,
\end{align*}
which is a second-order inequality for $\|\bm {h}_{\max(s)}+\bm {h}_{W_1}\|_2$.
Note that
$$\delta_{ts}=\left\{
                \begin{array}{ll}
                  \delta_{ts}, & \hbox{$ts$ is  an integer} \\
                  \delta_{\lceil ts\rceil }, & \hbox{$ts$ is not an integer}
                \end{array}
              \right.
$$
and
$$
h(x)=\frac{1}{\sqrt{\frac{(\sqrt{s}+\alpha)^2\Big(\Big(1+\frac{\sqrt{2}}{2}\Big)^2(x-\sqrt{x})\Big)}{s(t-1)^2(\sqrt{s}-1)^2}+1}}$$
is a decrease function for $x\geq\frac{\sqrt{2}}{2}$.
Thus, our condition \eqref{RIPCondition2} on $\delta_{ts}$ with $t=2$ or $t\geq 3$ and $s\geq 2$ can guarantee that
\begin{align}\label{deltatkupper}
\delta_{ts}<\frac{\mu}{1-\mu}=\frac{1}{\sqrt{\frac{(\sqrt{s}+\alpha)^2\Big(\Big(1+\frac{\sqrt{2}}{2}\Big)^2(\lceil ts\rceil-s-\sqrt{\lceil ts\rceil-s})+1\Big)}{s(t-1)^2(\sqrt{s}-1)^2}+1}}
\end{align}
holds.
 By solving the above second-order inequality
under \eqref{deltatkupper} we get
\begin{align}\label{e:etamaxkboundhigh}
&\|\bm {h}_{\max(s)}+\bm {h}_{W_1}\|_2\nonumber\\
&\leq \frac{2(1-\mu)\mu\sqrt{\lceil ts\rceil}\eta+\Big(\frac{2\delta_{ts}(1-2\mu)}{\sqrt{s}+\alpha}+
\sqrt{\frac{2\delta_{ts}(1-2\mu)}{(\sqrt{s}+\alpha)^2}(\mu-\mu^2-(\mu-1)^2\delta_{ts})}\Big)\|\bm {x}_{-\max(s)}\|_1}{\mu-\mu^2-(\mu-1)^2\delta_{ts}},
\end{align}
 where is from $z\leq \frac{b +\sqrt{b^2+4ac }}{2a}\leq \frac{b +\sqrt{ac }}{a}$ satisfying second-order inequality
$az^2-bz-c\leq 0$ with $a,b,c>0$.

Following the argument in \cite[ Step 2]{ge2021new} and \eqref{Coneconstraintinequality},
we can express an upper bound  $\|\bm {h}_{-\max(s)}\|_2$. First, by estimation \eqref{e:h-maxsupperbound1} in Lemma \ref{ConeconstraintinequalityforL1-L2} and $\|\bm {h}_{-\max{(s)}}\|_{\infty}\leq \|\bm {h}_{\max{(s)}}\|_{1}/s\leq\|\bm {h}_{\max{(s)}}\|_{2}/\sqrt{s}$, we know
\begin{align*}
&\|\bm {h}_{-\max{(s)}}\|_2^2\\
&\leq \|\bm {h}_{-\max{(s)}}\|_{1}\|\bm {h}_{-\max{(s)}}\|_{\infty}\nonumber\\
&\leq\big((\sqrt{s}+\alpha)\|\bm {h}_{\max{(s)}}\|_2+\alpha\|\bm {h}_{-\max{(s)}}\|_2+2\|\bm {x}_{-\max{(s)}}\|_1\big)
\frac{\|\bm {h}_{\max{(s)}}\|_2}{\sqrt{s}}\nonumber\\
&=\frac{\alpha\|\bm {h}_{\max{(s)}}\|_2}{\sqrt{s}}\|\bm {h}_{-\max{(s)}}\|_2+\frac{\sqrt{s}+\alpha}{\sqrt{s}}\|\bm {h}_{\max{(s)}}\|_2^2+\frac{2\|\bm {x}_{-\max{(s)}}\|_1\|\bm {h}_{\max{(s)}}\|_2}{\sqrt{s}}.
\end{align*}
Therefore
\begin{align*}
&\bigg(\|\bm {h}_{-\max{(s)}}\|_2-\frac{\alpha\|\bm {h}_{\max{(s)}}\|_2}{2\sqrt{s}}\bigg)^2\\
&\leq\bigg(\frac{\alpha^2}{4s}+\frac{\alpha+\sqrt{s}}{\sqrt{s}}\bigg)\|\bm {h}_{\max{(s)}}\|_2^2
+\frac{2\|\bm {x}_{-\max{(s)}}\|_1\|\bm {h}_{\max{(s)}}\|_2}{\sqrt{s}}
\end{align*}
which implies
\begin{align}\label{e:anotherupperbound}
\|\bm {h}_{-\max{(s)}}\|_2&\leq \Bigg(\sqrt{\frac{\alpha+\sqrt{s}}{\sqrt{s}}+\frac{\alpha^2}{4s}}+\frac{\alpha}{2\sqrt{s}}\Bigg)\|\bm {h}_{\max{(s)}}\|_2
+\sqrt{\frac{2\|\bm {x}_{-\max{(s)}}\|_1\|\bm {h}_{\max{(s)}}\|_2}{\sqrt{s}}}\nonumber\\
&\leq\Bigg(\sqrt{\frac{\alpha+\sqrt{s}}{\sqrt{s}}+\frac{\alpha^2}{4s}}+\frac{\alpha+\sqrt{2}}{2\sqrt{s}}\Bigg)\|\bm {h}_{\max{(s)}}\|_2
+\frac{\sqrt{2}}{2}\|\bm {x}_{-\max{(s)}}\|_1,
\end{align}
where the second inequality  comes from the basic inequality $2|a||b|\leq \frac{(|a|+|b|)^2}{2}$.

Therefore, from the fact $\|\bm {h}\|_2=\sqrt{\|\bm {h}_{\max{(s)}}\|_2^2 +\|\bm {h}_{-\max{(s)}}\|_2^2}$ and the above estimation \eqref{e:anotherupperbound}, it follows that
 \begin{align*}
\|\bm {h}\|_2
\leq&\sqrt{\|\bm {h}_{\max(s)}\|_2^2
+\Bigg[\Big(\sqrt{\frac{\alpha+\sqrt{s}}{\sqrt{s}}+\frac{\alpha^2}{4s}}
+\frac{\alpha+\sqrt{2}}{2\sqrt{s}}\Big)\|\bm {h}_{\max(s)}\|_2
+\frac{\sqrt{2}}{2}\|\bm {x}_{-\max(s)}\|_1\Bigg]^2}\\
\leq&
\Bigg(1+\sqrt{\frac{{\alpha}+\sqrt{s}}{\sqrt{s}}+\frac{\alpha^2}{4s}}
+\frac{\alpha+\sqrt{2}}{2\sqrt{s}}\Bigg)\|\bm {h}_{\max(s)}\|_2
+\frac{\sqrt{2}}{2}\|\bm {x}_{-\max(s)}\|_1\\
\leq&\Bigg(1+\sqrt{\frac{{\alpha}+\sqrt{s}}{\sqrt{s}}+\frac{\alpha^2}{4s}}
+\frac{\alpha+\sqrt{2}}{2\sqrt{s}}\Bigg)\frac{2(1-\mu)\mu{\sqrt{\lceil ts\rceil}}}{\mu-\mu^2-(\mu-1)^2\delta_{ts}}\eta\nonumber\\
&+{\Bigg[\bigg(\sqrt{s}+\sqrt{{\alpha\sqrt{s}}+s+\frac{\alpha^2}{4}}
+\frac{\alpha+\sqrt{2}}{2}\bigg)
\Bigg(\frac{2\delta_{ts}(1-2\mu)}
{(\sqrt{s}+\alpha)\big(\mu-\mu^2-(\mu-1)^2\delta_{ts}\big)}}\nonumber\\
&+
{\sqrt{\frac{2\delta_{ts}(1-2\mu)}{(\sqrt{s}+\alpha)^2(\mu-\mu^2-(\mu-1)^2\delta_{ts})}}\Bigg)
+\frac{\sqrt{2s}}{2}\Bigg]
\frac{\|\bm {x}_{-\max(s)}\|_1}{\sqrt{s}}},
\end{align*}
where the last inequality is due to \eqref{e:etamaxkboundhigh}.
Therefore, we complete the proof.

\end{proof}

\section{Effective Algorithm for L1-L2-DS }\label{s4}

In the section, we present an effective algorithm to  solve the $\ell_1-\alpha \ell_2$-DS  \eqref{VectorL1-alphaL2-DS}. Based on the fact that  Dantzig selector and Lasso estimator exhibit similar behavior,
 we propose an unconstraint penalty problem
 as follows
\begin{eqnarray}\label{VectorL1-alphaL2-DS-penalty}
\min_{{\bm  x},{\bm  y}\in\mathcal{B}^{\infty}(\eta)}&\lambda(\|{\bm  x}\|_1-\alpha\|{\bm  x}\|_2)
+\frac{1}{2}\|{\bm  A}^{\top}{\bm  A}{\bm  x}-{\bm  y}-{\bm  A}^{\top}{\bm  b}\|_2^{2},
\end{eqnarray}
where $\lambda>0$ is the regularized parameter.

We find the optimal solution of \eqref{VectorL1-alphaL2-DS-penalty} using the alternating direction method of multipliers (ADMM) algorithm.
First,  splitting  the term $\|{\bm  x}\|_1-\alpha\|{\bm  x}\|_2$ and letting ${\bm B}={\bm  A}^{\top}{\bm  A},{\bm c}={\bm  A}^{\top}{\bm  b}$, one gets an equivalent problem of \eqref{VectorL1-alphaL2-DS-penalty}:
\begin{eqnarray}\label{VectorL1-alphaL2-DS-penalty-Equivalent}
&&\min_{{\bm  x},{\bm  w},{\bm  y}\in\mathcal{B}^{\infty}(\eta)}\lambda(\|{\bm  w}\|_1-\alpha\|{\bm  w}\|_2)
+\frac{1}{2}\|{\bm  B}{\bm  x}-{\bm  y}-{\bm  c}\|_2^{2},\nonumber\\
&&~\text{s.~t.~}\hspace*{24pt}{\bm  x}-{\bm  w}={\bm  0}.
\end{eqnarray}

The augmented Lagrangian function of \eqref{VectorL1-alphaL2-DS-penalty-Equivalent} is
\begin{eqnarray}\label{ALagrange-PDS}
\mathcal{L}_{\beta}({\bm  x},{\bm  y},{\bm  w};{\bm  z})
=\lambda(\|{\bm  w}\|_1-\alpha\|{\bm  w}\|_2)
+\frac{1}{2}\|{\bm  B}{\bm  x}-{\bm  y}-{\bm  c}\|_2^{2}
+\frac{\beta}{2}\|{\bm  x}-{\bm  w}\|_2^2+\langle {\bm  z}, {\bm  x}-{\bm  w}\rangle,
\end{eqnarray}
where ${\bm  z}$ is the Lagrangian multiplier.
Given $({\bm  x}^0,{\bm y}^0,{\bm w}^0;{\bm  z}^0)$,  
iterations for \eqref{ALagrange-PDS} based on the ideas of ADMM are
\begin{eqnarray}\label{ADMM-PDS}
\begin{cases}
{\bm  w}^{k+1}=\arg\min_{{\bm  w}}
\mathcal{L}_{\beta}({\bm  x}^{k},{\bm  y}^k,{\bm  w};{\bm  z}^k),\\
({\bm  x}^{k+1},{\bm  y}^{k+1})=\arg\min_{{\bm  x},{\bm  y}}
\mathcal{L}_{\beta}({\bm  x},{\bm  y},{\bm  w}^{k+1};{\bm  z}^k),\\
{\bm  z}^{k+1}={\bm  z}^k+\beta({\bm  x}^{k+1}-{\bm  w}^{k+1}).\\
\end{cases}
\end{eqnarray}

Before solving the  ${\bm  w}$-subproblem in \eqref{ADMM-PDS},
we first recall results for a proximal operator.
In \cite[Proposition 7.1]{liu2017further} and \cite[Section 2]{lou2018fast}, proximal operator
\begin{equation}\label{L1L2-ProximalMap}
\arg\min_{\bm{x}}\frac{1}{2}\|{\bm  x}-{\bm  b}\|_2^2+(\mu_1\|{\bm  x}\|_1-\mu_2\|{\bm  x}\|_2), \ \ \ \ \mu_1\geq\mu_2>0
\end{equation}
has an explicit formula for ${\bm  x}$,   denoting  $\text{Prox}_{\mu_1\ell_1-\mu_2\ell_2}({\bm  b})$.
The minimization in \eqref{ADMM-PDS} respecting to ${\bm  w}$ has the following closed-form solution
\begin{equation}\label{w-subproblem}
{\bm  w}^{k+1}=\text{Prox}_{\frac{\rho_1}{\beta}\ell_1-\frac{\alpha\rho_1}{\beta}\ell_2}
\bigg({\bm  x}^k+\frac{{\bm  z}^k}{\beta}\Big)\bigg).
\end{equation}

Next, we turn our attention to the $({\bm x},{\bm y})$-subproblem. The ${\bm  x}$-subproblem has closed-form solution as follows
\begin{equation}\label{x-subproblem}
{\bm  x}^{k+1}=\big({\bm B}^{T}{\bm B}+\beta{\bm I}\big)^{-1}
\bigg({\bm  B}^{T}({\bm  y}^{k}+{\bm c})+\beta\big({\bm  w}^{k+1}-\frac{{\bm z}^{k}}{\beta}\big)\bigg),
\end{equation}
where the inverse of ${\bm B}^{T}{\bm B}+\beta{\bm I}$ is computed by Woodbury matrix identity.
The ${\bm  y}$-subproblem also has closed-form solution as follows
\begin{equation}\label{y-subproblem}
{\bm  y}^{k+1}=\text{Proj}_{\mathcal{B}^{\infty}(\eta)}
\left({\bm  B}{\bm  x}^{k}-{\bm  c}\right),
\end{equation}
where $\text{Proj}_{\mathcal{B}^{\infty}(\eta)}({\bm  x})$ is a projection on the ball $\mathcal{B}^{\infty}(\eta)$, i.e.,
$$
\text{Proj}_{\mathcal{B}^{\infty}(\eta)}({\bm  x})_j=\min( \max(x_j,-\eta),\eta), ~j=1,\ldots,n .
$$

On account of the above discussions, the  effective Algorithm to  approximately solve \eqref{VectorL1-alphaL2-DS-penalty-Equivalent} is
summarized as Algorithm $1$.

\medskip
\noindent\rule[0.25\baselineskip]{\textwidth}{1pt}
\label{al:LADMg}
\centerline {\bf Algorithm $1$: ADMM for solving \eqref{VectorL1-alphaL2-DS-penalty-Equivalent}}\\
{\bf Input :}${\bm  A}$, ${\bm  b}$, $\eta$, $0<\alpha\leq 1$,  $\lambda$,  $\beta$. \\
{\bf Initials}:\  $({\bm  x},{\bm  y},{\bm  w};{\bm  z})
=({\bm  x}^0,{\bm  y}^0,{\bm  w}^0;{\bm  z}^{0})$, $k=0$.\\
{\bf Circulate} Step 1--Step 4 until ``some stopping criterion is satisfied":  

 ~{\bf Step 1:} Compute ${\bm  w}^{k+1}$ by \eqref{w-subproblem}.

~{\bf Step 2:} Compute ${\bm  x}^{k+1},{\bm  y}^{k+1}$ by \eqref{x-subproblem} and \eqref{y-subproblem},respectively.

~{\bf Step 3:} Update dual variables
\begin{equation*}
{\bm  z}^{k+1}={\bm  z}^k+\beta({\bm  x}^{k+1}-{\bm  w}^{k+1})
\end{equation*}

~{\bf Step 4:} Update $k$ to $k+1$.\\
{\bf Output:}  $\bm {x}^{k}$.\\
\noindent\rule[0.25\baselineskip]{\textwidth}{1pt}

\begin{remark}
In Algorithm 1,
$\lambda,\alpha,\eta$ are model parameters satisfying $\lambda>0$, $0<\alpha\leq 1$ and $\eta>0$, and $\beta>0$ is the regularized parameter in ADMM algorithm.
\end{remark}

\section{Numerical Experiments}\label{s5}

In the section, we  present numerical experiments for the recovery of  sparse signals  to demonstrate the performance of $\ell_1-\alpha\ell_2$-DS \eqref{VectorL1-alphaL2-DS}.

In our experiments, our method $\ell_1-\alpha \ell_2$-DS \eqref{VectorL1-alphaL2-DS} is
compared with $\ell_1$-DS \eqref{VectorL1-DS} implemented by  linear ADMM \cite{wang2012linearized},
 and  the $\ell_p$-DS \eqref{VectorLp-DS} as follows
\begin{equation}\label{VectorLp-DS}
\min_{{\bm  x}\in\mathbb{R}^n}~\|{\bm  x}\|_{p}^{p}~~\text{subject~ to}~ \|{\bm  A}^{\top}({\bm  b}-{\bm  A}{\bm  x})\|_\infty\leq\eta
\end{equation}
where $0<p<1$.  Similarly,
An  effective algorithm for solving  \eqref{VectorLp-DS} can be  developed based on  Algorithm 1.
We only need
\begin{equation*}
\bm {w}^{k+1}=\text{Prox}_{\frac{1}{\beta}\ell_p}
\bigg({\bm  x}^k+\frac{{\bm  z}^k}{\beta}\bigg)
\end{equation*}
instead of \eqref{w-subproblem} for  Algorithm 1, where the proximal operator
\begin{equation*}
\text{Prox}_{\mu\ell_p}
\left(\bm {b}\right):=\arg\min_{\bm {x}}\frac{1}{2}\|\bm {x}-\bm {b}\|_2^2+\mu\|\bm {x}\|_p^p,~ \mu>0
\end{equation*}
 has an explicit formula for $\bm {x}$  in \cite{xu2012lregularization,marjanovic2012optimization},   denoting  $\text{Prox}_{\mu \ell_p}(\bm {b})$.

And  we apply the proposed Algorithm 1 for the  $\ell_1-\alpha \ell_2$-DS \eqref{VectorL1-alphaL2-DS} to reconstruct sparse signals in
the cases of  Gaussian , Symmetric $\tilde{\alpha}$-stable ($S\tilde{\alpha}S$)  and uniform noises, which have been defined  in Section \ref{s1.1}.

In our experiments, we test two  measurement matrices defined  in Subsection \ref{s1.1}, which
have different coherence.
Let  ${\bm  x}^{0}\in\mathbb{R}^{n}$ be a simulated $s$-sparse signal,
 where the support of ${\bm  x}^{0}$ is a random index set and
  the $s$ non-zero entries obey the Gaussian distribution $\mathcal{N}(0,1)$.
In addition, the signal ${\bm  x}^{0}$ is normalized to have a unit energy value.
Let ${\bm  \hat{{\bm  x}}}$ be the estimation of ${\bm  x}^{0}$ via  each solver.

Each provided result is an average over 100 independent tests.
All experiments
are performed under Windows Vista Premium and MATLAB v9.1 (R2016b)
running on a Huawei laptop-qolkaflg with an Intel(R) Core(TM)i5-8250U CPU at 1.8 GHz and 8195MB RAM of memory.

We take a self-adapting strategy to update the parameter $\alpha$. The initial value was $\alpha^{0}=0.1$ and then
it was adjusted iteratively by the strategy
\begin{equation}\label{adjustedstrategy}
\alpha^{k+1}=
\begin{cases}
\alpha^{k}, &\text{if~mod}(k,5)\neq 0,\\
\min\{1.5\alpha^{k},1\}, &\text{if~mod}(k,5)=0.
\end{cases}
\end{equation}
Recall that $\alpha\leq 1$ is required to ensure $\|\bm x\|_{\alpha,1-2}\geq 0$ for any $\bm x$.
The above strategy of choosing $\alpha$ clearly satisfies this condition.

\subsection{Observations with Gaussian Noise }\label{s5.1}

First,   the measurement matrix ${\bm  A}$  is   Gaussian matrix. We follow the method in \cite{candes2007dantzig} to generate Gaussian matrix ${\bm  A}$ whose columns all have the unit norm. More
specifically, we first generated an $m\times n$ matrix with independent Gaussian entries and
then normalized each column with the unit norm. After that, we randomly choose a
sample set $S$ with cardinality $s$. Then, the coefficient vector ${\bm  x}^{0}$ was generated by
\begin{equation}
x_i^{0} =
\begin{cases}
\xi_i (1+|c_i|), &  i\in S,\\
0,                 &\text{otherwise},
\end{cases}
\end{equation}
where $\xi_i\in \mathcal{U}(-1, 1)$ (i.e., the uniform distribution on the interval $(-1, 1)$) and $c_i\sim \mathcal{N}(0, 1)$. Finally, the vector of observations ${\bm  b}$ was generated by ${\bm  b} =\bm { Ax}+{\bm  e}$ with ${\bm  e}\sim \mathcal{N}(0, \sigma^2 )$.

To compare with the $\ell_1$-DS in \cite{candes2005magic} and $\ell_p$-DS \eqref{VectorLp-DS}, we test the same cases of $\sigma$, i.e., $\sigma=0.01$ and $\sigma=0.05$, with $(n, p, s)=(72i, 256i, 8i)$ for $i=1,2,3$. Here the coherence $\mu({\bm  A})$ decreases from 0.50 to 0.25.  For each case, we
generated ten different problems and reported the average performance. As in \cite{candes2007dantzig}, the quality of the Dantzig selector is measured by
\begin{equation}\label{quality.Dantzigselector}
\rho_{orign}^2 = \frac{\sum_{j}|\hat{x}_j -x_i^{0}|^2}{\sum_{j}\min\{(x_j^{0})^2,\sigma^2\}}
\ \ {\rm and}\ \
\rho^2 = \frac{\sum_{j}|\tilde{x}_j -x_i^{0}|^2}{\sum_{j}\min\{(x_j^{0})^2,\sigma^2\}},
\end{equation}
where $\hat{{\bm  x}}$ denotes the Dantzig selector via solving \eqref{VectorL1-DS},  \eqref{VectorL1-alphaL2-DS} and \eqref{VectorLp-DS}, and $\tilde{{\bm  x}}$ is the corresponding refined Dantzig
selector after the bias-removing two-stage procedure in \cite{candes2007dantzig}. In \cite{lu2012alternating}, $\rho_{orign}^2$
 and $\rho^2$ are named as the preprocessing and postprocessing errors, respectively. Note that $\rho_{orign}^2$
 and $\rho^2$ are two measurements on the  performance of the Dantzig selector. Obviously, we are
pursuing better selectors which have smaller values of them.

From Table \ref{tab:GaussianPlusGaussian}, we repeat the numerical performance $\ell_1$-DS, $\ell_p$-DS and the proposed  $\ell_1-\alpha \ell_2$-DS
for  $\sigma=0.01,0.05$. We display the average values of $\rho_{orign}^2,\rho^2$,  the number of iterations (Iter), and the computing time in seconds (``Time (s)") over $100$ independent trials.  The data in Table \ref{tab:GaussianPlusGaussian} show the efficiently of the proposed $\ell_1-\alpha\ell_2$-DS. We can see that the $\ell_1$-DS needs the least time, following by the propose method. However, our method has the best performance in terms of the value of $\rho_{orign}^2$ and $\rho^2$.

\begin{table}[!t]
\setlength{\tabcolsep}{4.0pt}\small
\caption{{\rm Numerical results for Gaussian  matrix with unit column norms.}}\label{tab:GaussianPlusGaussian}
\vspace{-2mm}
  \begin{center}
      \begin{tabular}{|l|l|  c c c| c c c |}\hline\specialrule{0em}{1pt}{1pt}
\multirow{1}{*}{i} &\multirow{2}{*}{ Algorithms}  &\multicolumn{3}{c|}{$\sigma=0.01$}&\multicolumn{3}{c|}{{$\sigma=0.05$}}\\
\cmidrule(lr) {3-5}\cmidrule(lr){6-8}
	\quad     &\quad      &Time(s)   &$\rho^2$ &$\rho_{orign}^2$  &Time(s)  &$\rho^2$ &$\rho_{orign}^2$  \\ \hline
            \specialrule{0.0em}{2.0pt}{2.0pt}
	$i=1$ &$\ell_1$-DS  &0.11  &1.24   &17.54   &0.11  &1.12  &20.51\\\hline
           &$\ell_{0.9}$-DS   &1.15  &1.07   &6.57  &1.26  &1.13   &14.87\\\hline
           &$\ell_{0.5}$-DS   &1.16 &1.07   &8.55  &1.24  &1.10  &14.41 \\\hline
           &$\ell_{0.1}$-DS   &1.36  &1.17   &6.53  &1.11  &1.06  &13.00 \\\hline
 &$\ell_1-\alpha\ell_2$-DS   &0.53  &1.03 &5.15    &0.51 &1.02 &9.10\\\hline
			\specialrule{0.0em}{2.0pt}{2.0pt}	
	$i=2$ &$\ell_1$-DS   &0.34  &1.18   &21.65   &0.26  &1.09   &20.91\\\hline
           &$\ell_{0.9}$-DS   &7.12  &1.10   &20.43   &7.09  &1.10   &19.97\\\hline
           &$\ell_{0.5}$-DS   &7.61  &1.05    &16.41   &6.10  &1.04   &18.96\\\hline
           &$\ell_{0.1}$-DS  &7.93  &1.13    &17.24  &7.48  &1.11 &21.19\\\hline
&$\ell_1-\alpha\ell_2$-DS   &2.28  &1.04  &8.32    &4.16  &1.02  &7.53\\\hline
			\specialrule{0.0em}{2.0pt}{2.0pt}	
	$i=3$ &$\ell_1$-DS   &1.02  &1.16   &24.61   &1.52  &1.13   &24.59\\\hline
           &$\ell_{0.9}$-DS   &23.23  &1.15   &22.05   &13.95  &1.12   &21.45\\\hline
           &$\ell_{0.5}$-DS   &23.39  &1.14  &23.05   &13.94  &1.11  &21.56\\\hline
           &$\ell_{0.1}$-DS   &22.54  &1.16  &23.15  &14.04  &1.18   &22.59\\\hline
 &$\ell_1-\alpha\ell_2$-DS   &4.38  &1.09   &8.24   &10.22  &1.06   &8.52 \\\hline
			\specialrule{0.0em}{2.0pt}{2.0pt}	
		\end{tabular}
  \end{center}
\end{table}

Next,  we consider that  the measurement matrix ${\bm  A}$ is  oversampled partial DCT matrix.  We use the average of  the signal-to-noise ratio (SNR) in dB,
\begin{equation}
\text{SNR}(\hat{\bm  x},{\bm  x}_0)=20\log_{10}\frac{\|{\bm  x}_0\|_2}{\|\hat{\bm  x}-{\bm  x}_0\|_2},
\end{equation}
over $100$  independent trials
as our performance measure,  where   $\hat{\bm x}$ is the reconstructed signal. We display SNR of different algorithms to recover sparse signals over 100 repeated trials for $m = 64, n = 256$ and different sparsity $s$.
 The oversampled partial DCT matrix ${\bm A}\in\mathbb{R}^{m\times n}$ with $F = 10$ has high
coherence with $\mu(\bm A)>0.99$.  From the Table \ref{tab:DS-oversampledDCT-Gaussian-sparsity},
 we see that SNR of the proposed $\ell_1-\alpha\ell_2$-DS  are higher than that of  $\ell_1$-DS, $\ell_p$-DS.

\begin{table}[t] 
\setlength{\tabcolsep}{5pt}\small
{\caption{\rm
The average of SNR  over 100  independent trials for different algorithms, the  oversampled DCT ${\bm A}\in\mathbb{R}^{m\times n}$ with $m = 64, n = 256, F = 10$, the   measurements corrupted by Gaussian noises with two different noise levels  $\sigma$  and
 different sparsity.}\label{tab:DS-oversampledDCT-Gaussian-sparsity}}
	\begin{center}
		\begin{tabular}{c| c| c c c  c c c c  c c c c  c c c c}\hline
			$\sigma $ 
& \backslashbox
{Alg.} {Sparsity}            &1     &2    &4      &6   &8     &10   &12 \\ \hline
$10^{-3}$ &$\ell^{1}$-DS &29.52   &26.91  &26.00  &24.01 &18.00   &16.00  &11.42\\
       \quad&$\ell^{0.9}$-DS&38.61  &37.43 &31.04  &25.75  &14.71   &10.50  &8.06 \\
       \quad&$\ell^{0.5}$-DS&38.02  &37.13  &31.41  &24.64 &16.69  &12.24  &7.52 \\
       \quad& $\ell^{0.1}$-DS&37.77 &36.38  &31.63  &20.99   &12.95    &5.77   &3.57 \\
\quad& $\ell_1-\alpha\ell_2$-DS &41.35 &36.74 &34.91 &32.13 &29.13 &26.23 &23.80 \\
\hline  
$10^{-2}$ &$\ell^1$-DS &11.73   &9.78  &9.45 &8.75 &6.08   &4.64 &5.13\\
           \quad&$\ell^{0.9}$-DS   &14.72  &14.61 &12.82  &10.37  &8.47  &8.01 &5.85\\
           \quad&$\ell^{0.5}$-DS   &15.77 &15.28 &14.37  &12.19 &9.42  &7.01 &4.70\\
           \quad& $\ell^{0.1}$-DS &16.47  &16.39   &10.69  &10.09   &4.91   &3.47  &2.75 \\
          \quad& $\ell_1-\alpha\ell_2$-DS &24.68 &17.41 &13.89  &13.52   &11.26 &8.57 &6.12 \\
          \hline
		  \specialrule{0.0em}{2.0pt}{2.0pt}
		\end{tabular}
	\end{center}
\end{table}

\subsection{Observations with Impulsive Noise }\label{s5.2}

In this subsection, we consider that  the observation is corrupted by impulsive noise.
 And  let the measurement matrix  ${\bm A}\in\mathbb{R}^{m\times n}$  with $m=64,n=256$  first be    Gaussian matrix,  which has coherence $0.45<\mu(\bm A)<0.50$.
Next, ${\bm A}\in\mathbb{R}^{m\times n}$  with $m=64,n=256$ is the oversampled DCT matrix with $F=10$, which has high coherence $\mu(\bm A)>0.99$.

Tables \ref{tab:DS-Gaussian-Cauchy-sparsity4} and \ref{tab:DS-oversampledDCT-Cauchy-sparsity4}
present  the average of  SNR over  100  independent trials  for the $\ell_1$-DS, $\ell_p$-DS ($0<p\leq1$)  and the proposed $\ell_1-\alpha\ell_2$-DS versus the sparsity $s$ in the $S\tilde{\alpha} S$ noise with  $\tilde{\alpha}=1$ (Cauchy noise),
$\tilde{\delta}=0$ and $\gamma =10^{-4},10^{-3}$. Table \ref{tab:DS-Gaussian-Cauchy-sparsity4}  shows that the proposed $\ell_1-\alpha\ell_2$-DS  provides the best robust performance no matter
the measurement matrix ${\bm  A}$ has small or high coherence.

\begin{table}[t] 
\setlength{\tabcolsep}{5pt}\small
{\caption{\rm
The average of SNR over 100  independent trials for different algorithms, Gaussian matrix $\bm{A}\in\mathbb{R}^{m\times n}$ with $m=64,n=256$, the measurements corrupted by Cauchy noises with two different  levels  $\gamma$,
  and different sparsity.}\label{tab:DS-Gaussian-Cauchy-sparsity4}}
\begin{center}
		\begin{tabular}{c| c| c c c  c c c c  c c c c  c c c c}\hline
			$\gamma $ 
& \backslashbox
{Alg.} {Sparsity}            &1     &2    &4      &6   &8     &10   &12 \\ \hline
$10^{-4}$ &$\ell_{1}$-DS &35.52  &34.25  &33.86  &33.45   &30.70 &24.31   &24.10   \\
    \quad&$\ell_{0.9}$-DS&39.00   &36.57  &35.16  &33.97 &30.07   &33.98  &26.94\\
       \quad&$\ell_{0.5}$-DS&37.50   &35.42  &35.11  &34.97 &25.69   &22.87  &20.40\\
       \quad&$\ell_{0.1}$-DS&38.93   &40.10  &35.72  &20.32 &18.20   &11.29  &9.98\\
          \quad& $\ell_1-\alpha\ell_2$-DS &46.41 &41.10 &41.01 &40.91 &40.27 &38.91 &28.10 \\
\hline  
$10^{-3}$ &$\ell_1$-DS &10.59 &12.04 &10.01 &6.85 &5.94 &5.48  &4.07          \\
           \quad&$\ell_{0.9}$-DS  &18.83 &16.90 &12.63 &10.15 &9.66 &8.95  &8.76          \\
           \quad&$\ell_{0.5}$-DS  &17.00  &15.44  &14.63 &11.22  &10.78  &10.54 &9.27 \\
           \quad&$\ell_{0.1}$-DS  &19.13 &17.15 &11.17 &11.05 &9.24 &7.19  &6.02          \\
          \quad& $\ell_1-\alpha\ell_2$-DS &24.26 &18.59 &17.84 &17.25 &16.90 &14.49 &13.21    \\
          \hline
		  \specialrule{0.0em}{2.0pt}{2.0pt}
		\end{tabular}
	\end{center}
\end{table}

\begin{table}[t] 
\setlength{\tabcolsep}{5pt}\small
{\caption{\rm
The average of  SNR over 100  independent trials for   different algorithms, the oversampled DCT matrix $\bm{A}\in\mathbb{R}^{m\times n}$ with $m=64,n=256, F=10$, the measurements corrupted by two different  Cauchy noise levels  $\gamma$ and
 different sparsity.}\label{tab:DS-oversampledDCT-Cauchy-sparsity4}}
	\begin{center}
		\begin{tabular}{c| c| c c c  c c c c  c c c c  c c c c}\hline
			$\gamma $ 
& \backslashbox
{Alg.} {Sparisty}            &1     &2    &4      &6   &8     &10   &12 \\ \hline
$10^{-4}$ &$\ell_{1}$-DS &31.90  &30.25 &29.19   &27.55  &26.48 &21.94   &16.85\\
    \quad&$\ell_{0.9}$-DS&31.43  &29.95  &28.41  &26.08 &20.73   &17.85  &13.60 \\
       \quad&$\ell_{0.5}$-DS&35.76  &29.28  &25.56  &23.48 &16.71   &14.14  &7.86 \\
    \quad&$\ell_{0.1}$-DS&34.66  &33.05  &31.05  &26.07 &20.49   &16.00  &12.41 \\
          \quad& $\ell_1-\alpha\ell_2$-DS &35.27 &33.51 &31.82 &31.63 &28.06 &27.32 &22.96 \\
\hline  
$10^{-3}$ &$\ell_1$-DS &14.42   &12.23  &8.18 &6.01 &5.52 &3.75   &2.99\\
         \quad&$\ell_{0.9}$-DS  &15.86 &13.72  &12.34 &10.12   &9.47 &7.19  &4.97\\
           \quad&$\ell_{0.5}$-DS  &16.35 &14.78  &13.84 &12.56      &8.78  &3.17  &2.75\\
           \quad&$\ell_{0.1}$-DS  &18.97 &14.11  &10.83 &8.72      &6.18  &4.41  &3.26\\
          \quad& $\ell_1-\alpha\ell_2$-DS &19.32 &13.61 &10.06  &8.29  &5.36 &5.35  &3.44 \\
          \hline
		  \specialrule{0.0em}{2.0pt}{2.0pt}
		\end{tabular}
	\end{center}
\end{table}

\newpage

\subsection{Observations with Uniform Noise }\label{s5.3}

In this subsection, we consider the  observation is corrupted by uniform noise. And the
the measurement matrix  $\bm{A}$ is same with  the presented  Gaussian and  oversampled DCT in Subsection \ref{s5.2}.
We take noise levels $\varsigma=10^{-3},10^{-2}$ and display the average of SNR over  100  independent trials in Tables \ref{tab:DS-Gaussian-Uniformlydistribution-sparsity} and \ref{tab:DS-OverDCT-Uniformlydistribution-sparsity}.
 These results show that the Dantzig selector also work efficiently for uniform noise, which similar as that of $\ell_{\infty}$ constraint. It also implies that our proposed method is better than $\ell_1$-DS and $\ell_p$-DS for both Gaussian and oversamped DCT  matrices.

\begin{table}[t] 
\setlength{\tabcolsep}{5pt}\small
	{\caption{\rm  The average  of SNR  over  100  independent trials for  different algorithms,  Gaussian matrix $\bm{A}\in\mathbb{R}^{m\times n}$ with $m=64,n=256$, the measurements corrupted by  two different  uniform noise levels  $\varsigma$ and
 different sparsity. }\label{tab:DS-Gaussian-Uniformlydistribution-sparsity} }
\begin{center}
		\begin{tabular}{c| c| c c c  c c c c  c c c c  c c c c}\hline
			$\varsigma$ 
& \backslashbox
{Alg.} {$m/n$}            &1     &2    &4      &6   &8     &10   &12 \\ \hline
$10^{-3}$ &$\ell_{1}$-DS &42.64  &39.97 &39.17  &38.50 &38.29  &35.29   &31.34 \\
      \quad&$\ell_{0.9}$-DS&44.53 &43.91 &43.67  &43.53 &43.12 &42.61  &41.58\\
       \quad&$\ell_{0.5}$-DS&48.06 &47.79   &46.55  &46.33 &45.89   &40.87 &29.98\\
       \quad&$\ell_{0.1}$-DS&45.34 &44.62 &44.08 &43.21   &43.36 &42.50   &42.18\\
\quad& $\ell_1-\alpha\ell_2$-DS &61.04 &57.06 &56.91 &56.12 &55.50   &55.33 &55.02 \\
\hline  
$10^{-2}$ &$\ell_1$-DS  &24.35 &24.18 &23.76 &23.54  &23.09   &21.85  &17.75\\
           \quad&$\ell_{0.9}$-DS   &24.49 &23.93 &23.66 &22.86 &21.25 &20.04  &18.79\\
             \quad&$\ell_{0.5}$-DS &25.38  &24.00 &22.10 &19.43 &17.71 &16.61  &16.55\\
              \quad&$\ell_{0.1}$-DS   &26.65  &23.09 &21.62 &20.79 &20.73 &19.92  &17.32\\
\quad& $\ell_1-\alpha\ell_2$-DS &29.01 &25.61  &25.11 &24.55  &23.69  &22.17 &21.75 \\
          \hline
		  \specialrule{0.0em}{2.0pt}{2.0pt}
		\end{tabular}
	\end{center}
\end{table}

\begin{table}[t] 
\setlength{\tabcolsep}{5pt}\small
	{\caption{\rm The average of SNR over  100  independent trials for  different algorithms, the oversampled DCT matrix $\bm{A}\in\mathbb{R}^{m\times n}$ with $m=64,n=256, F=10$, the measurements corrupted by  two different  uniform noise levels  $\varsigma$
and  different sparsity. }\label{tab:DS-OverDCT-Uniformlydistribution-sparsity} }
\begin{center}
		\begin{tabular}{c| c| c c c  c c c c  c c c c  c c c c}\hline
			$\varsigma$ 
& \backslashbox
{Alg.} {$m/n$}            &1     &2    &4      &6   &8     &10   &12 \\ \hline
$10^{-3}$ &$\ell-{1}$-DS &33.67 &31.92 &31.00 &27.81 &22.82  &18.02 &16.94 \\
     \quad&$\ell_{0.9}$-DS&37.24 &36.99   &36.53  &30.95 &30.90 &27.03   &21.09\\
       \quad&$\ell_{0.5}$-DS&38.71   &37.08  &34.93 &33.32  &32.42 &24.17   &16.82\\
        \quad&$\ell_{0.1}$-DS&38.82   &37.71  &36.17 &34.21  &25.42 &23.51   &19.74\\
\quad& $\ell_1-\alpha\ell_2$-DS &44.05 &39.18 &37.66 &36.10 &32.72 &26.38 &27.15 \\
\hline  
$10^{-2}$ &$\ell-1$-DS &16.02   &15.71  &13.52 &9.77 &9.56   &7.64 &7.06\\
             \quad&$\ell_{0.9}$-DS   &19.65 &19.14 &18.37 &16.87  &16.02 &14.28  &12.07\\
             \quad&$\ell_{0.5}$-DS  &21.49  &21.61 &18.90  &16.66  &16.06  &14.78 &11.12\\
             \quad&$\ell_{0.1}$-DS   &22.44  &22.81 &18.79  &17.45  &14.66  &12.77  &7.17\\
\quad& $\ell_1-\alpha\ell_2$-DS &26.81 &24.49 &24.22  &22.02 &21.14  &19.33  &17.31 \\
          \hline
		  \specialrule{0.0em}{2.0pt}{2.0pt}
		\end{tabular}
	\end{center}
\end{table}

\section{Conclusions }\label{s6}
In this paper, we consider the signal reconstruction under Dantzig selector constraint via
$\ell_{1}-\alpha\ell_{2}~(0<\alpha\leq 1)$ minimization. First, we introduce  the    $\ell_{1}-\alpha\ell_{2}$-DS \eqref{VectorL1-alphaL2-DS} to recover  signals ${\bm  x}$ from ${\bm  b}={\bm  A}{\bm  x}+{\bm  e}$.
Next,  we show a sufficient  condition based on $(\ell_2,\ell_1)$-RIP to guarantee the stable recovery of  signal ${\bm  x}$ from ${\bm  b}={\bm  A}{\bm  x}+{\bm  e}$ via \eqref{VectorL1-alphaL2-DS} (see Theorem \ref{StableRecoveryviaVectorL1-alphaL2-DS}).
And  Based on the high order  classical RIP (i.e., $(\ell_2,\ell_2)$-RIP ),
  we develop a sufficient  condition   for  the stable reconstruction of signals ${\bm  x}$ via \eqref{VectorL1-alphaL2-DS}  (see Theorem \ref{New-StableRecoveryviaVectorL1-alphaL2-DS}
  applying the technique of the convex combination for $\ell_{1}-\ell_{2}$. Last, we show an effective algorithm based on ADMM  to  solve the $\ell_1-\alpha \ell_2$-DS  \eqref{VectorL1-alphaL2-DS}
  Furthermore, we present numerical experiments for the  sparse signal reconstruction
 in the cases of  Gaussian,  impulsive  and uniform noises. Results  demonstrate the efficiency of $\ell_{1}-\alpha\ell_{2}$-DS, which is different from that of $\ell_2$, $\ell_1$ and $\ell_{\infty}$ data fitting terms.
  What should point out is that, this is the first paper which explores the performances of Dantzig selector for different type noises. The  proposed method also outperform than existing methods no matter the measurement matrix has high or small coherence.

\section*{Acknowledgments}
The  project is partially supported by the Natural Science Foundation of China (Nos. 11871109, 11901037, 72071018), the NSAF (Grant No. U1830107) and the Science  Challenge Project (TZ2018001).
The authors thanks Professors Wengu Chen and Qiyu Sun for their help in the preparation of this paper.


\hskip\parindent

\bibliographystyle{plain}
\bibliography{New_L1_L2_DS_reference}

\begin{thebibliography}{10}

\bibitem{Bickel2007Discussion}
J.~Bickel.
\newblock Discussion: ``the dantzig selector: Statistical estimation when $p$
  is much larger than $n$''.
\newblock {\em Annals of Statistics}, 35(6):2352--2357, 2007.

\bibitem{bredies2012total}
K.~Bredies and M.~Holler.
\newblock A total variation--based jpeg decompression model.
\newblock {\em SIAM Journal on Imaging Sciences}, 5(1):366--393, 2012.

\bibitem{cai2007discussion}
T.~T. Cai and J.~Lv.
\newblock Discussion: ``the dantzig selector: statistical estimation when $p$
  is much larger than $n$''.
\newblock {\em Annals of Statistics}, 35(6):2365--2369, 2007.

\bibitem{cai2015rop}
T.~T. Cai and A.~Zhang.
\newblock Rop: Matrix recovery via rank-one projections.
\newblock {\em Annals of Statistics}, 43(1):102--138, 2015.

\bibitem{candes2005magic}
E.~J. Cand{\`e}s and J.~K. Romberg.
\newblock $\ell_1$-magic: Recovery of sparse signals via convex programming.
\newblock {\em URL: www. acm. caltech. edu/l1magic/downloads/l1magic. pdf},
  4:14, 2005.

\bibitem{candes2006stable}
E.~J. Cand{\`e}s, J.~K. Romberg, and T.~Tao.
\newblock Stable signal recovery from incomplete and inaccurate measurements.
\newblock {\em Communications on Pure and Applied Mathematics: A Journal Issued
  by the Courant Institute of Mathematical Sciences}, 59(8):1207--1223, 2006.

\bibitem{candes2005decoding}
E.~J. Cand{\`e}s and T.~Tao.
\newblock Decoding by linear programming.
\newblock {\em IEEE transactions on information theory}, 51(12):4203--4215,
  2005.

\bibitem{candes2007dantzig}
E.~J. Cand\`{e}s and T.~Tao.
\newblock The dantzig selector: Statistical estimation when $p$ is much larger
  than $n$.
\newblock {\em Annals of Statistics}, 35(6):2313--2351, 2007.

\bibitem{Cand2007Rejoinder}
E.~J. Cand\`{e}s and T.~Tao.
\newblock Rejoinder: The dantzig selector: Statistical estimation when p is
  much larger than n.
\newblock {\em Ann Statist}, 2007.

\bibitem{chartrand2008restricted}
R.~Chartrand and V.~Staneva.
\newblock Restricted isometry properties and nonconvex compressive sensing.
\newblock {\em Inverse Problems}, 24(3):035020, 2008.

\bibitem{chatterjee2014generalized}
S.~Chatterjee, S.~Chen, and A.~Banerjee.
\newblock Generalized dantzig selector: Application to the $k$-support norm.
\newblock In {\em Advances in Neural Information Processing Systems}, volume~3,
  pages 1934--1942, 2014.

\bibitem{clason2012fitting}
C.~Clason.
\newblock $l_{\infty}$ fitting for inverse problems with uniform noise.
\newblock {\em Inverse Problems}, 28(10):104007, 2012.

\bibitem{clason2011minimal}
C.~Clason, K.~Ito, and K.~Kunisch.
\newblock Minimal invasion: An optimal $l_{\infty}$ state constraint problem.
\newblock {\em ESAIM: Mathematical Modelling and Numerical
  Analysis-Mod{\'e}lisation Math{\'e}matique et Analyse Num{\'e}rique},
  45(3):505--522, 2011.

\bibitem{das2012snr}
A.~Das and B.~D. Rao.
\newblock Snr and noise variance estimation for mimo systems.
\newblock {\em IEEE Transactions on Signal processing}, 60(8):3929--3941, 2012.

\bibitem{Donoho2005Stable}
D.~L. Donoho, M.~Elad, and V.~N. Temlyakov.
\newblock Stable recovery of sparse overcomplete representations in the
  presence of noise.
\newblock {\em IEEE Transactions on Information Theory}, 52(1):6--18, 2005.

\bibitem{donoho2001uncertainty}
D.~L. Donoho and X.~Huo.
\newblock Uncertainty principles and ideal atomic decomposition.
\newblock {\em IEEE transactions on information theory}, 47(7):2845--2862,
  2001.

\bibitem{efron2007discussion}
B.~Efron, T.~Hastie, and R.~Tibshirani.
\newblock Discussion: The dantzig selector: Statistical estimation when $p$ is
  much larger than $n$.
\newblock {\em Annals of Statistics}, 35(6):2358--2364, 2007.

\bibitem{elad2010sparse}
M.~Elad.
\newblock {\em Sparse and redundant representations: from theory to
  applications in signal and image processing}.
\newblock Springer Science \& Business Media, 2010.

\bibitem{fannjiang2012coherence}
A.~Fannjiang and W.~Liao.
\newblock Coherence pattern--guided compressive sensing with unresolved grids.
\newblock {\em SIAM Journal on Imaging Sciences}, 5(1):179--202, 2012.

\bibitem{fannjiang2012compressive}
A.~Fannjiang and H.~C. Tseng.
\newblock Compressive imaging of subwavelength structures: periodic rough
  surfaces.
\newblock {\em JOSA A}, 29(4):617--626, 2012.

\bibitem{fannjiang2013compressive}
A.~Fannjiang and H.~C. Tseng.
\newblock Compressive radar with off-grid targets: a perturbation approach.
\newblock {\em Inverse Problems}, 29(5):054008, 2013.

\bibitem{friedlander2007discussion}
M.~P. Friedlander and M.~A. Saunders.
\newblock Discussion: the dantzig selector: statistical estimation when $p$ is
  much larger than $n$.
\newblock {\em Annals of Statistics}, 35(6):2385--2391, 2007.

\bibitem{ge2021new}
H.~Ge, W.~Chen, and K.~M. Ng.
\newblock New rip analysis for $\ell_1-\ell_2$ minimization methods.
\newblock {\em SIAM Journal on Imaging Sciences}, Accepted, 2021.

\bibitem{geng2020Unconstrained}
P.~Geng and W.~Chen.
\newblock Unconstrained $\ell_1-\ell_2$ minimization for sparse recovery via
  mutual coherence.
\newblock {\em Mathematical Foundations of Computing}, 3(2):65--79, 2020.

\bibitem{guerrero2007image}
J.~A. Guerrero-Col{\'o}n, L.~Mancera, and J.~Portilla.
\newblock Image restoration using space-variant gaussian scale mixtures in
  overcomplete pyramids.
\newblock {\em IEEE Transactions on Image Processing}, 17(1):27--41, 2007.

\bibitem{hu2021minimization}
M.~Hu, Y.~Lou, and X.~Yang.
\newblock The $\ell_1-\ell_2$ minimization with rotation for sparse
  approximation in uncertainty quantification.
\newblock {\em arXiv preprint arXiv:2101.04890}, 2021.

\bibitem{lai2013improved}
M.~J. Lai, Y.~Xu, and W.~Yin.
\newblock Improved iteratively reweighted least squares for unconstrained
  smoothed $\ell_q$ minimization.
\newblock {\em SIAM Journal on Numerical Analysis}, 51(2):927--957, 2013.

\bibitem{li2020minimization}
P.~Li, W.~Chen, H.~Ge, and M.~K. Ng.
\newblock $\ell_1-\alpha \ell_2$ minimization methods for signal and image
  reconstruction with impulsive noise removal.
\newblock {\em Inverse Problems}, 36(5):055009, 2020.

\bibitem{liao2016music}
W.~Liao and A.~Fannjiang.
\newblock Music for single-snapshot spectral estimation: Stability and
  super-resolution.
\newblock {\em Applied and Computational Harmonic Analysis}, 40(1):33--67,
  2016.

\bibitem{liu2017further}
T.~Liu and T.~K. Pong.
\newblock Further properties of the forward--backward envelope with
  applications to difference-of-convex programming.
\newblock {\em Computational Optimization and Applications}, 67(3):489--520,
  2017.

\bibitem{lou2018fast}
Y.~Lou and M.~Yan.
\newblock Fast $l_1-l_2$ minimization via a proximal operator.
\newblock {\em Journal of Scientific Computing}, 74(2):767--785, 2018.

\bibitem{lou2015computing}
Y.~Lou, P.~Yin, Q.~He, and J.~Xin.
\newblock Computing sparse representation in a highly coherent dictionary based
  on difference of $\ell_1$ and $\ell_2$.
\newblock {\em Journal of Scientific Computing}, 64(1):178--196, 2015.

\bibitem{lou2016point}
Y.~Lou, P.~Yin, and J.~Xin.
\newblock Point source super-resolution via non-convex $\ell_1$ based methods.
\newblock {\em Journal of Scientific Computing}, 68(3):1082--1100, 2016.

\bibitem{lou2015weighted}
Y.~Lou, T.~Zeng, S.~Osher, and J.~Xin.
\newblock A weighted difference of anisotropic and isotropic total variation
  model for image processing.
\newblock {\em SIAM Journal on Imaging Sciences}, 8(3):1798--1823, 2015.

\bibitem{lu2012alternating}
Z.~Lu, T.~K. Pong, and Y.~Zhang.
\newblock An alternating direction method for finding dantzig selectors.
\newblock {\em Computational Statistics \& Data Analysis}, 56(12):4037--4046,
  2012.

\bibitem{ma2017truncated}
T.~H. Ma, Y.~Lou, and T.~Z. Huang.
\newblock Truncated $\ell_{1-2}$ models for sparse recovery and rank
  minimization.
\newblock {\em SIAM Journal on Imaging Sciences}, 10(3):1346--1380, 2017.

\bibitem{marjanovic2012optimization}
G.~Marjanovic and V.~Solo.
\newblock On $\ell_q$ optimization and matrix completion.
\newblock {\em IEEE Transactions on signal processing}, 60(11):5714--5724,
  2012.

\bibitem{2006High}
N.~Meinshausen and P.~B\"{u}hlmann.
\newblock High-dimensional graphs and variable selection with the lasso.
\newblock {\em Annals of Statistics}, 34(3), 2006.

\bibitem{meinshausen2007discussion}
N.~Meinshausen, G.~Rocha, and B.~Yu.
\newblock Discussion: A tale of three cousins: Lasso, l2boosting and dantzig.
\newblock {\em Annals of Statistics}, 35(6):2373--2384, 2007.

\bibitem{2008Discussion}
Y.~Ritov.
\newblock Discussion: The dantzig selector: statistical estimation when $p$ is
  much larger than $n$.
\newblock {\em Annals of Statistics}, 35(6):2370--2372, 2007.

\bibitem{suliman2017snr}
M.~A. Suliman, A.~M. Alrashdi, T.~Ballal, and T.~Y. Tareq.
\newblock Snr estimation in linear systems with gaussian matrices.
\newblock {\em IEEE Signal Processing Letters}, 24(12):1867--1871, 2017.

\bibitem{1996Regression}
R.~Tibshirani.
\newblock Regression shrinkage and selection via the lasso.
\newblock {\em Journal of the Royal Statal Society, Series B}, 58(1), 1996.

\bibitem{wang2020group}
J.~Wang, J.~Huang, F.~Zhang, and W.~Wang.
\newblock Group sparse recovery in impulsive noise via alternating direction
  method of multipliers.
\newblock {\em Applied and Computational Harmonic Analysis}, 49(3):831--862,
  2020.

\bibitem{wang2013l1}
L.~Wang.
\newblock The $l_1$ penalized lad estimator for high dimensional linear
  regression.
\newblock {\em Journal of Multivariate Analysis}, 120:135--151, 2013.

\bibitem{wang2012linearized}
X.~Wang and X.~Yuan.
\newblock The linearized alternating direction method of multipliers for
  dantzig selector.
\newblock {\em SIAM Journal on Scientific Computing}, 34(5):A2792--A2811, 2012.

\bibitem{wen2016robust}
F.~Wen, P.~Liu, Y.~Liu, R.~C. Qiu, and W.~Yu.
\newblock Robust sparse recovery in impulsive noise via $\ell_p$-$\ell_1$
  optimization.
\newblock {\em IEEE Transactions on Signal Processing}, 65(1):105--118, 2016.

\bibitem{wen2017efficient}
F.~Wen, L.~Pei, Y.~Yang, W.~Yu, and P.~Liu.
\newblock Efficient and robust recovery of sparse signal and image using
  generalized nonconvex regularization.
\newblock {\em IEEE Transactions on Computational Imaging}, 3(4):566--579,
  2017.

\bibitem{wen2019sparse}
J.~Wen, J.~Weng, C.~Tong, C.~Ren, and Z.~Zhou.
\newblock Sparse signal recovery with minimization of 1-norm minus 2-norm.
\newblock {\em IEEE Transactions on Vehicular Technology}, 68(7):6847--6854,
  2019.

\bibitem{wen2018semi}
Y.~W. Wen, W.~K. Ching, and M.~K. Ng.
\newblock A semi-smooth newton method for inverse problem with uniform noise.
\newblock {\em Journal of Scientific Computing}, 75(2):713--732, 2018.

\bibitem{williams1993least}
J.~Williams and Z.~Kalogiratou.
\newblock Least squares and chebyshev fitting for parameter estimation in odes.
\newblock {\em Advances in Computational Mathematics}, 1(3):357--366, 1993.

\bibitem{wu2000constrained}
X.~Wu and P.~Bao.
\newblock $l_{\infty}$ constrained high-fidelity image compression via adaptive
  context modeling.
\newblock {\em IEEE transactions on image processing}, 9(4):536--542, 2000.

\bibitem{xia2020sparse}
Y.~Xia and Z.~Xu.
\newblock Sparse phase retrieval via phaseliftoff.
\newblock {\em IEEE Transactions on Signal Processing}, 69:2129--2143, 2021.

\bibitem{xu2012lregularization}
Z.~Xu, X.~Chang, F.~Xu, and H.~Zhang.
\newblock $l_{1/2}$ regularization: A thresholding representation theory and a
  fast solver.
\newblock {\em IEEE Transactions on Neural Networks and Learning Systems},
  23(7):1013--1027, 2012.

\bibitem{yan2017sparse}
L.~Yan, Y.~Shin, and D.~Xiu.
\newblock Sparse approximation using $\ell_1-\ell_2$ minimization and its
  application to stochastic collocation.
\newblock {\em SIAM Journal on Scientific Computing}, 39(1):A229--A254, 2017.

\bibitem{yin2015minimization}
P.~Yin, Y.~Lou, Q.~He, and J.~Xin.
\newblock Minimization of $\ell_{1-2}$ for compressed sensing.
\newblock {\em SIAM Journal on Scientific Computing}, 37(1):A536--A563, 2015.

\bibitem{yin2015phaseliftoff}
P.~Yin and J.~Xin.
\newblock Phaseliftoff: An accurate and stable phase retrieval method based on
  difference of trace and frobenius norms.
\newblock {\em Communications in Mathematical Sciences}, 13(4):1033--1049,
  2015.

\bibitem{zhang2019fast}
X.~Zhang and M.~K Ng.
\newblock A fast algorithm for solving linear inverse problems with uniform
  noise removal.
\newblock {\em Journal of Scientific Computing}, 79(2):1214--1240, 2019.

\bibitem{2006On}
P.~Zhao and B.~Yu.
\newblock On model selection consistency of lasso.
\newblock {\em Journal of Machine Learning Research}, 7(12):2541--2563, 2006.

\bibitem{zhou2012Restoration}
J.~Zhou, X.~Wu, and L.~Zhang.
\newblock $\ell_2$ restoration of $\ell_{\infty}$-decoded images via
  soft-decision estimation.
\newblock {\em IEEE transactions on image processing}, 21(12):4797--4807, 2012.

\end{thebibliography}

\end{document}